\def\thm@space@setup{%
  \thm@preskip=\parskip \thm@postskip=0pt
}
\newcommand{\subscript}[2]{$#1 _ #2$}
\DeclareMathAlphabet{\pazocal}{OMS}{zplm}{m}{n}
\definecolor{MyDarkBlue}{rgb}{0,0.08,0.45}
\definecolor{cites}{HTML}{324b13}
\definecolor{links}{HTML}{1a663b}
\definecolor{MyLightMagenta}{cmyk}{0.1,0.8,0,0.1}
\theoremstyle{definition}\newtheorem{definition}{Definition}
\theoremstyle{definition}\newtheorem{rem}{Remark}
\theoremstyle{definition}\newtheorem{cor}{Corollary}
\theoremstyle{theorem}\newtheorem{theorem}{Theorem}
\theoremstyle{theorem}\newtheorem{prop}{Proposition}
\theoremstyle{theorem}\newtheorem{lemma}{Lemma}
\theoremstyle{definition}\newtheorem{example}{Example}
\theoremstyle{definition}\newtheorem{ex}{Example}
\newcommand{\genericu}{\ensuremath{u(a,b)}}
\newcommand{\genericuo}{\ensuremath{u(a,}}
\newcommand{\genericv}{\ensuremath{v(a,b)}}
\newcommand{\cala}{\ensuremath{\pazocal{A}}}
\newcommand{\calb}{\ensuremath{\pazocal{B}}}
\newcommand{\mada}{\ensuremath{m^{A-DA}}}
\newcommand{\mbda}{\ensuremath{m^{B-DA}}}
\newcommand{\sidea}{\ensuremath{\mathcal{A}}}
\newcommand{\sideb}{\ensuremath{\mathcal{B}}}
\newcommand{\arrivalaupto}{\ensuremath{\overline{A}}}
\newcommand{\arrivalbupto}{\ensuremath{\overline{B}}}
\newcommand{\arrivalauptot}{\ensuremath{\arrivalaupto_t}}
\newcommand{\arrivalbuptot}{\ensuremath{\arrivalbupto_t}}
\newcommand{\arrivalauptoterminal}{\ensuremath{\arrivalaupto_T}}
\newcommand{\arrivalbuptoterminal}{\ensuremath{\arrivalbupto_T}}
\newcommand{\matching}{\ensuremath{m}}
\newcommand{\matchingb}{\ensuremath{\overline{\matching}}}
\newcommand{\matchingbt}{\ensuremath{\matchingb_t}}
\newcommand{\matchingstar}{\ensuremath{\matching^\star}}
\newcommand{\matchingstaruptoone}{\ensuremath{\matching^{\star^{1}}}}
\newcommand{\matchingstaruptot}{\ensuremath{\matching^{\star^{t-1}}}}
\newcommand{\matchingstaruptoterminal}{\ensuremath{\matching^{\star^{\terminal-1}}}}
\newcommand{\matchingt}{\ensuremath{\matching_t}}
\newcommand{\matchinguptot}{\ensuremath{\matching^{t-1}}}
\newcommand{\terminal}{\ensuremath{T}}
\newcommand{\Matchings}{\ensuremath{\pazocal{M}}}
\newcommand{\Matchingsfinal}{\ensuremath{\Matchings_\terminal}}
\newcommand{\ds}{\ensuremath{\pazocal{D}}}
\newcommand{\Gterminal}{\ensuremath{G_\terminal}}
\newcommand{\Gt}{\ensuremath{G_t}}
\newcommand{\Gtwo}{\ensuremath{G_2}}
\newcommand{\Gone}{\ensuremath{G_1}}
\newcommand{\periodone}{\ensuremath{t=1}}
\newcommand{\periodtwo}{\ensuremath{t=2}}
\newcommand{\typea}{\ensuremath{\mathbf{a}}}
\newcommand{\typeauptot}{\ensuremath{\overline{\typea}_t}}
\newcommand{\typeauptoterminal}{\ensuremath{\overline{\typea}_\terminal}}
\newcommand{\typeat}{\ensuremath{\typea_t}}
\newcommand{\typeas}{\ensuremath{\typea_\dateindex}}
\newcommand{\typeb}{\ensuremath{\mathbf{b}}}
\newcommand{\typebuptot}{\ensuremath{\overline{\typeb}_t}}
\newcommand{\typebuptoterminal}{\ensuremath{\overline{\typeb}_\terminal}}
\newcommand{\typebt}{\ensuremath{\typeb_t}}
\newcommand{\typebs}{\ensuremath{\typeb_\dateindex}}
\newcommand{\supportterminal}{\ensuremath{\overline{\economy}^\terminal}}
\newcommand{\supportt}{\ensuremath{\overline{\economy}^t}}
\newcommand{\economylmps}{\ensuremath{(I,\typea,J,\typeb)}}
\newcommand{\economytlmps}{\ensuremath{(I_t,\typeat,J_t,\typebt)}}
\newcommand{\economyslmps}{\ensuremath{(I_\dateindex,\typeas,J_\dateindex,\typebs)}}
\newcommand{\arrivaliupto}{\ensuremath{\overline{I}}}
\newcommand{\arrivaljupto}{\ensuremath{\overline{J}}}
\newcommand{\arrivaliuptot}{\ensuremath{\arrivaliupto_t}}
\newcommand{\arrivaliuptos}{\ensuremath{\arrivaliupto_\dateindex}}
\newcommand{\arrivaljuptos}{\ensuremath{\arrivaljupto_\dateindex}}
\newcommand{\cali}{\ensuremath{\pazocal{I}}}
\newcommand{\calj}{\ensuremath{\pazocal{J}}}
\newcommand{\dateindex}{\ensuremath{s}}
\newcommand{\agent}{\ensuremath{k}}
\newcommand{\agentp}{\ensuremath{\agent^\prime}}
\newcommand{\agenta}{\ensuremath{a}}
\newcommand{\agentap}{\ensuremath{\agenta^\prime}}
\newcommand{\agentb}{\ensuremath{b}}
\newcommand{\agentbp}{\ensuremath{\agentb^\prime}}
\newcommand{\economy}{\ensuremath{\mathcal{E}}}
\newcommand{\economyterminal}{\ensuremath{\economy^\terminal}}
\newcommand{\economyt}{\ensuremath{\economy^t}}
\newcommand{\economyr}{\ensuremath{E}}
\newcommand{\economyrtwo}{\ensuremath{\economyr^2}}
\newcommand{\economyrone}{\ensuremath{\economyr^1}}
\newcommand{\economyrterminal}{\ensuremath{\economyr^\terminal}}
\newcommand{\economyrt}{\ensuremath{\economyr^t}}
\newcommand{\economyruptot}{\ensuremath{\economyr^{t-1}}}
\newcommand{\economyrs}{\ensuremath{\economyr^\dateindex}}
\newcommand{\arrivala}{\ensuremath{A}}
\newcommand{\arrivalb}{\ensuremath{B}}
\newcommand{\agenti}{\ensuremath{i}}
\newcommand{\agentj}{\ensuremath{j}}
\newcommand{\arrivali}{\ensuremath{I}}
\newcommand{\arrivalj}{\ensuremath{J}}
\newcommand{\agentip}{\ensuremath{\agenti^\prime}}
\newcommand{\agentjp}{\ensuremath{\agentj^\prime}}
\newcommand{\genericuione}{\ensuremath{u(\typea_1(i),\typeb_1(j))}}
\newcommand{\genericuit}{\ensuremath{u(\typeauptot(i),\typebuptot(j))}}
\newcommand{\genericvjone}{\ensuremath{v(\typea_1(i),\typeb_1(j))}}
\newcommand{\genericvjt}{\ensuremath{v(\typeauptot(i),\typebuptot(j))}}
\newcommand{\erdos}{\ensuremath{\text{Erd\H{o}s}}}
\newcommand{\kuhn}{\ensuremath{\text{Kuhn}}}
\newcommand{\gale}{\ensuremath{\text{Gale}}}
\newcommand{\renyi}{\ensuremath{\text{R\'enyi}}}
\newcommand{\tucker}{\ensuremath{\text{Shapley}}}
\newcommand{\shapley}{\ensuremath{\text{Tucker}}}
\newcommand{\nash}{\ensuremath{\text{Nash}}}
\newcommand{\spot}{\ensuremath{\nu}}
\newcommand{\stable}{\ensuremath{\pazocal{S}}}
\newcommand{\economyrb}{\ensuremath{\hat{\economyr}}}
\newcommand{\economyrbt}{\ensuremath{\economyrb^t}}
\newcommand{\arrivalab}{\ensuremath{\hat{\arrivala}}}
\newcommand{\arrivalbb}{\ensuremath{\hat{\arrivalb}}}
\newcommand{\matchingd}{\ensuremath{\hat{\matching}}}
\newcommand{\matchingduptot}{\ensuremath{\matchingd^{t-1}}}
\newcommand{\strat}{\ensuremath{\sigma}}
\newcommand{\stratb}{\ensuremath{\strat^\prime}}
\newcommand{\publict}{\ensuremath{h^t}}
\newcommand{\pair}{\ensuremath{(\agenta,\agentb)}}
\newcommand{\rol}{\ensuremath{\succ}}
\newcommand{\rolb}{\ensuremath{\rol^\prime}}
\newcommand{\profilerol}{\ensuremath{\overline{\rol}}}
\newcommand{\gameda}{\ensuremath{\Gamma_1}}
\newcommand{\gamestable}{\ensuremath{\Gamma_2}}
\newcommand*{\addFileDependency}[1]{% argument=file name and extension
  \typeout{(#1)}
  \@addtofilelist{#1}
  \IfFileExists{#1}{}{\typeout{No file #1.}}
}
\title{Dynamically Stable Matching\thanks{I thank the Editor Rani Spiegler and three anonmyous referees for feedback that has greatly improved this paper. I am indebted to Eddie Dekel, Jeff Ely, and Alessandro Pavan for many fruitful conversations and for their continuing guidance and support. I also wish to thank H\'ector Chade, Federico Echenique, Jan Eeckhout, Guillaume Haeringer, John Hatfield, George Mailath, Pablo Schenone, James Schummer, Vasiliki Skreta, Alex Teytelboym, Asher Wolinsky, Leeat Yariv, Vijay Vazirani, Charles Zheng, and especially, Erik Eyster  and Jacob Leshno for useful discussions. All errors are, of course, my own.}}
\author{Laura Doval\thanks{Columbia University, New York, NY 10027. E-mail: \href{mailto:laura.doval@columbia.edu}{laura.doval@columbia.edu}}}
\begin{document}
\maketitle
\begin{abstract}
I introduce a stability notion, \emph{dynamic stability}, for two-sided dynamic matching markets where (i) matching opportunities arrive over time, (ii) matching is one-to-one, and (iii) matching is irreversible. The definition addresses two conceptual issues. First, since not all agents are available to match at the same time, one must establish which agents are allowed to form blocking pairs. Second, dynamic matching markets exhibit a form of externality that is not present in static markets: an agent's payoff from remaining unmatched cannot be defined independently of what other contemporaneous agents' outcomes are.
Dynamically stable matchings always exist. Dynamic stability is a necessary condition to ensure timely participation in the economy by ensuring that agents do not strategically delay the time at which they are available to match.
\end{abstract}
\section{Introduction}\label{sec:intro}
I formulate a notion of stability, denoted \emph{dynamic stability}, for two-sided dynamic matching markets where (i) matching opportunities arrive over time, (ii) matching is one-to-one, and (iii) matching is irreversible. Stability notions provide an analyst with a set of predictions for the self-enforcing outcomes of decentralized matching markets that depend only on the primitive payoff structure. While stability notions are extensively used in the study of static matching markets, they have not been systematically studied for dynamic matching markets, even though the latter are ubiquitous and cover many important applications, such as labor markets and child adoption. 
%Moreover, as highlighted in the motivating example in \autoref{sec:preview}, stability notions for static matching markets fail to capture the primary trade-off that agents face in these environments: deciding between matching today and waiting for a better match. I introduce a stability notion, denoted \emph{dynamic stability}, that captures this trade-off and study its properties.
%
%
%
%I study two-sided dynamic matching markets, such as the allocation of public housing and child adoption, where (i) matching opportunities arrive over time, (ii) matching is one-to-one, and (iii) matching is irreversible. For example, in the allocation of deceased-donor kidneys, (i) patients and kidneys arrive in the system at various points in time, (ii) each patient receives one kidney, and (iii) once a patient is given a kidney transplant, she permanently exits the market.
%
%Considering these environments, I make two contributions. The first contribution is to provide a theory that describes the set of self-enforcing outcomes in these dynamic matching markets. 

Defining stability in a dynamic matching market brings forth two new challenges that arise when taking into account agents' intertemporal incentives. First, since not all agents are available to match at the same time, it is natural to ask which pairs of agents can object to a proposed matching. Dynamic stability assumes that only agents who are available to match at the same time can form a blocking pair. Second, whether an agent finds their matching partner acceptable depends on what their value of remaining unmatched is. In turn, this value depends on what matching the agent \emph{conjectures} would ensue upon their decision to remain unmatched. Given a conjectured continuation matching, one could define an agent's acceptable partners to be those who are preferred to the continuation matching. This, together with a specification of the set of blocking pairs, is enough to determine whether a matching is stable in the dynamic economy: it should have no blocking pairs and agents should always be matched to acceptable partners.

The missing step is then to determine what matching the agent conjectures would result following their decision to remain unmatched. The first difficulty is that the set of agents available to match from tomorrow onward depends both on the arrivals into the economy and on who remains unmatched from previous periods. In other words, today's matching together with tomorrow's arrivals define the set of feasible continuation matchings. When contemplating remaining unmatched, the agent then needs to conjecture both who else remains unmatched today and tomorrow's continuation matching. Thus, as in the literature on the core with externalities (see, for instance, \cite{shapley1969core,rosenthal1971external,richter1974core,sasaki1996two,pycia2017matching,rostek2017matching}), an agent's payoff from remaining unmatched cannot be defined independently of what other contemporaneous agents' matching outcomes are. This externality sets apart dynamic matching markets from their static counterparts.

 Given a conjecture about who else remains unmatched today, not all continuation matchings are equally \emph{reasonable}. Indeed, the agent should correctly anticipate that the continuation matching should be itself self-enforcing. Thus, for a given conjecture about today's matching outcome, the agent rules out those continuation matchings that are not self-enforcing. This is still not enough to pin down a unique continuation matching and, thus, the value of remaining unmatched. For a given conjecture about who else remains unmatched today, there can be many self-enforcing matchings. Moreover, there can be many conjectures about who else remains unmatched today.
% After all, there can be many self-enforcing continuation matchings for a given conjecture about other contemporaneous agents' outcomes today, and there can be many conjectures about other contemporaneous agents' outcomes. 
Thus, the last step in determining whether the agent finds their matching partner acceptable
%
%the value of remaining unmatched 
is an assumption on how the agent selects amongst the reasonable conjectures. Like \cite{sasaki1996two}, I assume the agent prefers their matching partner to remaining unmatched if the agent prefers their matching partner to one of the conjectured continuation matchings.
%the agent's value of remaining unmatched coincides with the agent's payoff at the most pessimistic conjecture. 
Unlike \cite{sasaki1996two}, the agent does not entertain all continuation matchings but only those that are self-enforcing in the continuation economy.

 Dynamic stability (\autoref{definition:ds}) is a recursive definition that builds on the elements described above. A matching for the dynamic economy is dynamically stable if (i) there is no pair of agents who are available to match at the same time who prefer to match together and (ii) there is no agent who is matched to someone who is unacceptable. Contrary to static notions of stability, the set of acceptable partners today is defined using the set of dynamically stable matchings from tomorrow onward. Dynamically stable matchings always exist in any finite horizon economy (\autoref{theorem:stoch-existence}); I discuss their properties in \autoref{sec:main}. As I explain in \autoref{sec:main}, the proof of \autoref{theorem:stoch-existence} 
builds on the insight in \cite{sasaki1996two} that an agent's most pessimistic conjecture can be used to define an artificial economy without externalities where (static) stable matchings are known to exist. 

Dynamic matching markets pose new challenges for market design, two of which are analyzed in \autoref{sec:timing}. First, in a dynamic matching market, agents do not only choose whether to participate, but also \emph{when} to participate. \autoref{prop:stoch-queueing} shows that dynamic stability is a necessary condition for \emph{timely} participation in the market: whenever a matching fails to be dynamically stable, market participants have an incentive to delay the time at which they are available to match.\footnote{The mechanism design literature on revenue management has studied the problem of strategic participation (see, for instance, \cite{gershkov2015efficient,garrett2016intertemporal,bergemann2019progressive}).}  This echoes the observation in static matching markets that stability is a necessary condition for participation (\cite{roth1984evolution}). Second, in applications like school choice, college admissions, and teacher assignment, assignments are performed sequentially because not all agents are available to match at the same time (e.g., \cite{westkamp2013analysis,andersson2018sequential,dur2019sequential}). The centralized markets that perform these assignments operate via \emph{spot} mechanisms, which take the set of currently available agents and their rankings and outputs a matching as a function of the reported preferences, but do not condition on future matching possibilities. Unsurprisingly, spot mechanisms do not necessarily induce dynamically stable matchings (\autoref{ex:college-admissions-bis}). This raises the question of what outcomes may arise from employing spot mechanisms to perform assignments in a dynamic economy. \autoref{theorem:ds-implementation} shows that only dynamically stable matchings can arise as outcomes of subgame perfect Nash equilibrium of the non-cooperative game induced by a sequence of spot mechanisms. Thus, agents' forward looking behavior is enough to overcome the mechanism's inability to condition on all parameters of the economy. Achieving dynamically stable matchings, however, comes at the cost of truthful behavior: agents have an incentive to truncate their preferences above and beyond what they would do in a static economy (\cite{roth1991incentives}) so that their assignment reflects what they could have instead obtained by waiting to be matched.
\paragraph{Related literature} Outside of the literature on the core with externalities, the paper relates to five other strands of literature. The first strand is the literature on market design, which studies dynamic matching markets such as those in this paper, but from the point of view of optimality instead of stability (\cite{unver2010dynamic,anderson2015dynamic,leshno2017dynamic,schummer2015influencing,bloch2017dynamic,ashlagi2018matching,thakral2019matching,akbarpour2020thickness,arnosti2020design,baccara2020optimal}).\footnote{An exception is \cite{altinok2019dynamic} who studies stability in dynamic many-to-one matching markets.}
%\footnote{See also, \cite{ashlagi2018matching} and \cite{arnosti2018design}.} 
The present study of stability is important because stability is considered a key property for the success of algorithms (\citealp{roth1991natural}) and because it highlights the potential issues in applying the static notions of stability to dynamic environments.

The second strand is the literature on matching with frictions, which studies dynamic matching markets such as those in this paper in a non-cooperative framework (see \cite{burdett1997marriage,eeckhout1999bilateral,adachi2003search,lauermann2014stable} for non-transferable utility, and \cite{shimer2000assortative} for transferable utility). 
%While matchings in this literature are determined via an exogenously given matching function, in the present paper, matchings are determined as part of the solution concept. 
Like in this strand of the literature, an agent's value of remaining unmatched (their continuation value) is determined endogenously by the remaining agents in the market and the future matching opportunities.
%The two most closely related papers are \cite{kurino2009credibility} and \cite{du2016rigidity}. To the best of my knowledge, \cite{kurino2009credibility} is the first to introduce to a dynamic matching problem the notion that a matching should be a complete contingent plan. However, he does not ask continuations to be stable (see Example 1 in his paper), and he considers an environment where matching opportunities are fixed and matching pairs can be revised over time.\footnote{\cite{corbae2003directed} study matching rules in an exchange economy, where the same set of agents interact over time, that are immune to individual and pairwise one-shot deviations on and off the path.} \cite{du2016rigidity} model the National Resident Matching Program (NRMP) as a two-period economy, where hospitals and residents arrive over time, preferences are supermodular,  and in period 2, an algorithm matches all unmatched agents according to a stable matching. They consider a set of matchings satisfying weaker conditions than dynamic stability. They show that, as the number of agents in the economy grows large, this set may be empty. While their paper neither provides a stability notion for dynamic matching markets nor studies their properties, their result suggests that the non-existence of dynamically stable matchings persists even in a large market.
 
%
The third strand studies stability notions for markets where matching opportunities are fixed and pairings can be revised over time (e.g., \cite{damiano2005stability,kurino2009credibility,kadam2018multiperiod,liu2018stability,kotowski2019perfectly}). %\footnote{\cite{kurino2014olg,kennes2014day,pereyra2013dynamic} provide stability notions for allocating objects, which can be reassigned, to overlapping generations of agents.} 
The contribution relative to this strand is to provide stability notions for markets where matching opportunities arrive over time and matching is irreversible. As discussed in the introduction, that matching opportunities arrive over time and matching is irreversible introduces a primitive externality that is absent from these papers and must be addressed when defining what stability means. In particular, while this paper shares with \cite{liu2018stability} and \cite{kotowski2019perfectly} the perfection requirement (see also, \citealp{doval2015theory}) and with \cite{kotowski2019perfectly} the use of the approach pioneered by \cite{sasaki1996two}, the motivation for using this approach is different. While the externality in my paper is an intrinsic feature of the environment that must be addressed by the stability notion, the externality in \cite{kotowski2019perfectly} is a consequence of the perfection requirement in his stability notion, together with the assumption of non-separable payoffs.

The fourth strand studies sequential assignment problems (e.g., \cite{westkamp2013analysis,dogan2018does,andersson2018sequential,dur2019sequential,haeringer2019gradual,mai2019stability,feigenbaum2020dynamic}). Of these, only \cite{westkamp2013analysis,andersson2018sequential,dur2019sequential,mai2019stability} study models where not all agents are available to be matched at the same time. Their focus is, however, on the properties of the matching implemented by the mechanism from the point of view of stability in a static market. \autoref{theorem:ds-implementation} echoes observations in \cite{westkamp2013analysis}, \cite{andersson2018sequential}, and \cite{dur2019sequential} that sequential assignment may be at odds with stability or truthful behavior. \autoref{theorem:ds-implementation} complements these results by identifying dynamic stability as the solution concept for sequential assignment problems. Since stability notions are oftentimes used for preference identification,
%contributes to this discussion by showing that these departures stem from the inability of spot mechanisms to produce dynamically stable matchings. 
%Furthermore, by characterizing the form that manipulations take in dynamic environments, 
\autoref{theorem:ds-implementation} can inform the empirical study of sequential assignment problems as in \cite{narita2018match} and \cite{neilson2020aftermarket}.
%
%the stability property satisfied by the matchings implemented by equilibrium play in the algorithm and the form that manipulations take in dynamic environments.

The fifth strand is the literature on the farsighted stable set, which is used to model externalities in coalition formation games (e.g., \cite{harsanyi1974equilibrium,chwe1994farsighted,mauleon2011neumann,ray2015farsighted}). As in this literature, agents in my model understand the terminal consequences of their moves. While farsighted stability focuses on the credibility of coalitional blocks, dynamic stability focuses on the credibility of the continuation matchings used to dissuade agents from blocking.\footnote{Relatedly, \cite{ray1997equilibrium} provide a recursive definition of binding agreements in a static model where a blocking coalition anticipates, amongst other things, that the agents outside the coalition form a binding agreement amongst themselves. \cite{acemoglu2012dynamics} study a dynamic noncooperative game where in each period a winning coalition can change the prevailing state. They show that in the patient limit equilibrium outcomes satisfy a (static) solution concept in the spirit of farsighted stability, which they also denote dynamic stability.} 

 \paragraph{Organization} The rest of the paper is organized as follows. 
 %The next section provides an informal overview of the solution concept through two examples. 
 \autoref{sec:model} describes the model and \autoref{sec:dynamic-stability} defines dynamic stability. \autoref{sec:main} shows that dynamically stable matchings exist and discusses their properties. \autoref{sec:timing} studies participation and incentives in dynamic matching markets. 
 %\autoref{sec:sophisticated} presents a refinement of dynamic stability along the lines of \cite{hafalir2008stability}. 
 All proofs are in the appendices.
\section{Model}\label{sec:model}
The economy lasts for $\terminal<\infty$ periods. There are two sides $A$ and $B$. Agents on side $A$ are labeled $\agenta\in\sidea$, while agents on side $B$ are labeled $\agentb\in\sideb$, where $\sidea,\sideb$ are finite sets.

 For any $t\geq 1$, let $\economy^t$ denote the subset of $(2^\sidea\times2^\sideb)^t$ that satisfies the following property. A tuple $E^t=(A_1,B_1,\dots,A_t,B_t)\in\economyt$ only if $A_i\cap A_j=\emptyset$ and $B_i\cap B_j=\emptyset$, whenever $i\neq j$.\footnote{To economize on notation, I assume that no two agents with the same characteristic arrive within or across periods. However, none of the results depend on this. See \autoref{sec:general}.} An economy of length $T$ is a distribution $\Gterminal$ on $\mathcal{E}^\terminal$. In what follows, I refer to an element $\economyrt$ of \economyt\ as a realization, to a tuple $(A_1,\dots,A_t)$ as a side-A arrival, and to a tuple $(B_1,\dots,B_t)$ as a side-B arrival.

Given a realization $\economyrt=(A_1,B_1,\dots,A_t,B_t)$, let $\arrivalaupto_\dateindex(\economyrt)=\cup_{t^\prime=1}^\dateindex A_{t^\prime}$ denote the implied arrivals on side $A$ through period $\dateindex\leq t$; similarly, let $\arrivalbupto_\dateindex(\economyrt)=\cup_{t^\prime=1}^\dateindex B_{t^\prime}$ denote the implied arrivals on side $B$ through period $\dateindex\leq t$. Finally, if $\dateindex\leq t$ and $\economyrt=(E^s,E^{t-s})$, then I say that $E^\dateindex$ is a \emph{truncation} of $\economyrt$ and that $\economyrt$ \emph{follows} $\economyr^\dateindex$.

 Fix an economy \Gterminal. Definitions \ref{definition:matching-t} and \ref{definition:matching} below define the set of feasible allocations for \Gterminal.\
\begin{definition}\label{definition:matching-t}
A period-$t$ matching for realization \economyrt\ is a mapping
\begin{align*}
\matchingt:\arrivalauptot(\economyrt)\cup\arrivalbuptot(\economyrt)\mapsto\arrivalauptot(\economyrt)\cup\arrivalbuptot(\economyrt)
\end{align*}
such that
\begin{enumerate}
\item For all $\agenta\in\arrivalauptot(\economyrt)$, $\matchingt(\agenta)\in\{\agenta\}\cup\arrivalbuptot(\economyrt)$,
\item For all $\agentb\in\arrivalbuptot(\economyrt)$, $\matchingt(\agentb)\in\arrivalauptot(\economyrt)\cup\{\agentb\}$,
\item For all $k\in\arrivalauptot(\economyrt)\cup\arrivalbuptot(\economyrt)$, $\matchingt(\matchingt(k))=k$.
\end{enumerate}
\end{definition}
\begin{definition}\label{definition:matching}
A matching \matching\ for \Gterminal\ is a map on $\cup_{t=1}^\terminal\economyt$ such that
\begin{enumerate}
\item For all $t\in\{1,\dots,\terminal\}$, for all $\economyrt\in\economyt$, $\matching(\economyrt)$ is a period-t matching,
\item For all $t\in\{1,\dots,T\}$, for all $\economyrt\in\economyt$, for all $\agenta\in\arrivalauptot(\economyrt)$, if $\matching(\economyrt)(a)\neq a$, then $\matching(E^s)(a)=\matching(\economyrt)(a)$ for all $\dateindex\geq t$ and $E^s$ that follow $\economyrt$.
\end{enumerate}
\end{definition}
Part 2 of \autoref{definition:matching} incorporates the idea that matchings in the economy are irreversible. Let \Matchingsfinal\ denote the set of matchings for an economy of length \terminal.\footnote{The particular economy of length $\terminal$, i.e., the distribution $\Gterminal$ on $\economyterminal$, defines which arrival sequences on the domain of \matching\ have positive probability.} Finally, given a matching $\matching\in\Matchings_\terminal$ and a realization \economyrt,\ let $\Matchings_\terminal(\matching,\economyrt)$ denote the subset of $\Matchings_\terminal$ that coincide with \matching\ at realizations $\economyrs$ that do not follow \economyrt.\
%A matching \matching\ is a map that associates to each element  $\economyrt\in\cup_{t=1}^\terminal\economyt$ a period-t matching for $\economyrt$ and satisfies the following property. For any $t\in\{1,\dots,T\}$, for any $\economyrt\in\economyt$, for any $a\in\arrivalauptot(\economyrt)$, if $\matching(\economyrt)(a)\neq a$, then $\matching(E^s)(a)=\matching(\economyrt)(a)$ for all $\dateindex\geq t$ and $E^s$ that follow $\economyrt$. Let $\Matchings_\terminal$ denote the set of matchings for economies of length $T$.\footnote{The particular economy of length $\terminal$, i.e., the distribution $\Gterminal$ on $\economyterminal$, defines which arrival sequences on the domain of \matching\ have positive probability.} Finally, given a matching $\matching\in\Matchings_\terminal$ and a realization \economyrt,\ let $\Matchings_\terminal(\matching,\economyrt)$ denote the subset of $\Matchings_\terminal$ that coincide with \matching\ at realizations $\economyrs$ that do not follow \economyrt.\

Fix a matching $\matching$ and a realization of the arrivals through period $t$, $\economyrt=(E^{t-1},(A_t,B_t))$. This determines the agents who can match at \economyrt:\
\begin{align*}
\cala(\matchinguptot,\economyrt)&=\{a\in\arrivalaupto_{t-1}(\economyrt):\matching(E^{t-1})(a)=a\}\cup A_t\\
\calb(\matchinguptot,\economyrt)&=\{b\in\arrivalaupto_{t-1}(\economyrt):\matching(E^{t-1})(b)=b\}\cup B_t.
\end{align*}
In the above definition, $\matchinguptot$ is defined as $(\matching(E^1),\dots,\matching(E^{t-1}))$ where for each $\dateindex\leq t-1$, $E^s$ is a truncation of $\economyrt$.

 A matching \matching\ and a realization $\economyrt$ also define a continuation economy of length $\terminal-t$, where the distribution of arrivals $G_{\terminal-t}(\matching^t,\economyrt)$ assigns probability \[\Gterminal((A_{t+1},B_{t+1},\dots,A_\terminal,B_\terminal)|\economyrt)\] to the length $\terminal-t$ arrival \[(\cala(\matching^t,(\economyrt,A_{t+1},B_{t+1})),\calb(\matching^t,(\economyrt,A_{t+1},B_{t+1})),\dots,A_\terminal,B_\terminal),\] and $0$ to all others.

 I close the model by defining agents' preferences. Each $\agenta\in\sidea$ defines a discount factor $\delta_\agenta\in[0,1)$ and a Bernoulli utility, $u(\agenta,\cdot):\sideb\cup\{\agenta\}\mapsto\mathbb{R}$. Similarly, each $\agentb\in\sideb$ defines a discount factor $\delta_\agentb\in[0,1)$ and a Bernoulli utility, $v(\cdot,\agentb):\sidea\cup\{b\}\mapsto\mathbb{R}$. I assume that for all $\agenta\in\sidea$ and all $\agentb\in\sideb$, $u(\agenta,\agenta)=v(\agentb,\agentb)=0$.

 Fix a matching \matching\ and a realization $\economyrt$. Let $a\in\cala(\matchinguptot,\economyrt)$. For each $E^\terminal=(\economyrt,\cdot)$, let $t_\matching(a,E^\terminal)$ denote the first date at which \agenta\ is matched under \matching. That is, the smallest $t\leq\dateindex$ such that $m(E^s)(a)\neq a$ and $E^s$ is a truncation of $E^\terminal$; otherwise, let $t_\matching(a,E^\terminal)=\terminal$. Then, let
\begin{align*}
U(a,\matching,\economyrt)=\mathbb{E}_{G(\cdot|\economyrt)}[\delta_a^{t_\matching(a,\economyrt,\cdot)-t}u(a,\matching(\economyrt,\cdot)(a))],
\end{align*}
denote $a$'s payoff from matching \matching\ at date $t$ when the realization is $\economyrt$. Similarly, for $b\in\calb(\matchinguptot,\economyrt)$, let
\begin{align*}
V(b,\matching,\economyrt)=\mathbb{E}_{G(\cdot|\economyrt)}[\delta_b^{t_{\matching}(b,\economyrt,\cdot)-t}v(\matching(\economyrt,\cdot)(b),b)],
\end{align*}
denote $b$'s payoff from matching \matching\ at date $t$ when the realization is $\economyrt$.

I record for future reference two properties that matchings \matching\ may satisfy: individual rationality (\autoref{definition:ir}) and stability for static markets (\autoref{definition:static-stability}). 
%I record for future reference the definition of stability for static matching markets and what:
\begin{definition}\label{definition:ir}
The matching \matching\ for economy \economyrterminal\ is \emph{individually rational} if if for all realizations $\economyrterminal$, for all $a\in\arrivalauptoterminal(\economyrterminal)$, and all $b\in\arrivalbuptoterminal(\economyrterminal)$, $u(a,\matching(\economyrterminal)(a))\geq 0$ and $v(\matching(\economyrterminal)(b),\agentb)\geq 0$.
\end{definition}
That is, matching \matching\ is individually rational if each agent prefers their matching partner to remaining unmatched through period \terminal.
%no agent is matched to a partner that is worse than remaining single through period \terminal.
\begin{definition}\label{definition:static-stability}[\citealp{gale1962college}]
The matching $\matching_1$ for realization $\economyr^1=(\arrivala_1,\arrivalb_1)$ is \emph{stable} if the following hold:
\begin{enumerate}[leftmargin=*,label=(\subscript{S}{{\arabic*}})]
\item\label{itm:ss2} For all $\agenta\in\arrivala_1$, $U_1(\agenta,\matching_1)\geq 0$,
\item\label{itm:ss3} For all $\agentb\in\arrivalb_1$, $V_1(\agentb,\matching_1)\geq 0$,
\item\label{itm:ss1} There is no pair $\pair\in\arrivala_1\times\arrivalb_1$ such that $\genericu>U_1(\agenta,\matching_1)$ and $\genericv>V_1(\agentb,\matching)$.
\end{enumerate}
Let $\stable(\economyr^1)$ denote the set of stable matchings for $\economyr^1$.
\end{definition}
In a one-period economy, matching $\matching_1$ is stable if each agent prefers their matching partner to remaining unmatched and
 there is no pair of agents who prefer each other to the partners assigned by $\matching_1$.
 %\ and no agent who prefers the matching in which they are single to the proposed matching.

Throughout, I use the following example to illustrate the concepts in the paper:
\begin{example}\label{example:main-example}
The economy lasts for two periods, i.e., \terminal=2, and \Gtwo\ assigns probability $1$ to one realization, \economyrtwo. In \periodone, \erdos, \kuhn, and \gale\ arrive on side $A$, while \renyi\ and \tucker\ arrive on side $B$. In \periodtwo, \shapley\ and \nash\ arrive on side $B$, and there are no arrivals on side $A$. That is, 
\begin{align*}
\arrivala_1=\{\erdos,\kuhn,\gale\},\arrivalb_1=\{\renyi,\tucker\},\arrivala_2=\{\emptyset\},\arrivalb_2=\{\shapley,\nash\}.
\end{align*}
Below, I list the agents' preferences. If $(\shapley, 1)$ (resp., $(\nash,1)$) appears before $(\tucker, 0)$ in an agent's ranking, then they prefer to wait $1$ period to match with \shapley\ (resp., \nash) over matching immediately with \tucker. That is, the $0$s and $1$s are the exponent of the discount factors, and the list provides the ranking of the discounted utilities, $\{\delta_\agenta^{t-1}\genericu\}$. For side $B$, the list represents the ranking of utilities, $\{\genericv\}$. 
\small
\[\hspace{-1cm}\left.\begin{array}{lllll}%\text{$A$:}&&&&\\
\erdos:&(\shapley,1)&(\renyi,0)&\hspace{-0.25cm}(\nash\ ,0)&\\
\kuhn:&(\shapley,1)&(\nash\ ,0)&(\tucker,0)&\hspace{-0.3cm}(\nash,1)\\
\gale:&(\tucker,0)&(\shapley,1)&&\\
&&&&\end{array}
\right.\hspace{0.25cm}\left.\begin{array}{llll}
%\text{$B$:}&&&\\
\renyi:&\erdos&&\\
\tucker:&\kuhn&\gale&\\
\shapley:&\gale&\erdos&\kuhn\
\\
\nash:&\kuhn&\erdos&\end{array}
\right.\]
\normalsize
Consider the following three matchings (In what follows, a horizontal line separates matchings that occur in different periods):

\begin{figure}[h!]
%\centering
{\small{\[\hspace{-1cm}\matching^L=\left(\begin{array}{rcl}\erdos&\_\_&\renyi\\\gale&\_\_&\tucker\\\hline\kuhn&\_\_&\shapley\\\emptyset&\_\_&\nash\end{array}\right) \hspace{0.25cm}
\matching^C=\left(\begin{array}{rcl}\erdos &\_\_&\renyi\\\hline\kuhn&\_\_&\tucker\\\gale&\_\_&\shapley\\\emptyset&\_\_&\nash\end{array}\right)\hspace{0.25cm}
\matching^R=\left(\begin{array}{rcl}\gale&\_\_&\tucker\\\hline\erdos&\_\_&\shapley\\\kuhn&\_\_&\nash\\\emptyset&\_\_&\renyi\end{array}\right)\]
}}
\caption{Three matchings for the economy in \autoref{example:main-example}.}\label{fig:ds-pairwise}  
\end{figure}
\autoref{fig:ds-pairwise} illustrates for each matching only the pairings that happen within each period. For instance, note that $\matching^L(\economyr^1)$ specifies both that \erdos\ matches with \renyi\ and \gale\ matches with \tucker\ and that \kuhn\ is single. That is, $\matching^L(\economyr^1)(\kuhn)=\kuhn$. Similarly, $\matching^L(\economyrtwo)$ specifies that \kuhn\ matches with \shapley\ and also records the \periodone-matching, $\matching^L(\economyrtwo)(\erdos)=\renyi,\matching^L(\economyrtwo)(\gale)=\tucker$.

Each of the matchings in \autoref{fig:ds-pairwise} satisfies \autoref{definition:ir}: each agent's matching partner is preferred to remaining single in both periods. However, only the period-2 matchings $\matching^L(\economyrtwo)$ and $\matching^R(\economyrtwo)$ satisfy \autoref{definition:static-stability}. To see this, consider $\matching^C$. If agents match according to $\matching^C(\economyrone)$ in \periodone, this induces a one-period economy in \periodtwo, with agents on side \arrivala, $\cala(\matching_1^C,\economyrtwo)=\{\kuhn,\gale\}$, and agents on side \arrivalb, $\calb(\matching_1^C,\economyrtwo)=\{\tucker,\shapley,\nash\}$. Note that $\matching^C(\economyrtwo)$ does not satisfy \autoref{definition:static-stability} for this one-period economy: \kuhn\ and \nash\ form a block. Instead, it is immediate to verify that $\matching^L(\economyrtwo)$ (resp., $\matching^R(\economyrtwo)$) does satisfy \autoref{definition:static-stability} for the one-period economy, $\Gone(\matching^L,\economyrone)$ (resp., $\Gone(\matching^R,\economyrone)$).
\end{example}
%\autoref{rem:time-preferences} highlights three assumptions which simplify notation, but are otherwise unnecessary for the results: 
%\begin{rem}\label{rem:time-preferences} 
%First, while the model presumes that agents can perfectly foresee when each agent arrives in the economy, all the results extend to the case in which arrivals are \emph{stochastic}. In this case, an economy of length \terminal\ is defined as a distribution \Gterminal, over sequences \economyrterminal. Second, I assume that no two agents with the same characteristic arrive within or across periods. Both of these extensions can be found in \cite{doval2020dynamically}. Third, while the model is written in terms of time-discounted preferences, all that matters for the results is that time preferences are dynamically consistent.
%\end{rem}
\section{Dynamic Stability}\label{sec:dynamic-stability}
\autoref{sec:dynamic-stability} defines dynamic stability (\autoref{definition:ds}). Like the static notion of stability, dynamic stability is defined by the absence of pairwise blocks and the requirement that each agent is matched to an acceptable partner, i.e., someone that is preferred to remaining unmatched. Unlike the static notion of stability, the value of remaining unmatched is determined endogenously (see \autoref{eq:conjectures} below). In what follows, I first introduce the definition, using \autoref{example:main-example} to illustrate its components. I then discuss in detail the different elements in the definition.

Given a matching \matching, the goal is to determine whether the agents will follow the prescription of \matching\ in every period $t\in\{1,\dots,\terminal\}$ and realization \economyrt. Fix a period $t$ and a realization \economyrt. Suppose that the agents have matched according to \matching\ through period $t-1$. Thus, the set of agents who can match in period $t$ is $\cala(\matchinguptot,\economyrt)\cup\calb(\matchinguptot,\economyrt)$. 

There are two reasons why the agents in  $\cala(\matchinguptot,\economyrt)\cup\calb(\matchinguptot,\economyrt)$ may prefer not to follow the prescription of \matching. First, there could be a pair of agents who prefer to match together over matching according to \matching. Second, there could be a single agent who would prefer not to match according to \matching. The latter could be either because \matching\ prescribes that the agent matches at \economyrt, but the agent could do better by remaining unmatched, or because \matching\ prescribes that the agent is unmatched at \economyrt, but the continuation matching, $(\matching(\economyr^{t+\dateindex}))_{\dateindex=1}^{\terminal-t}$ is not self-enforcing. In either case, I say that the agent prefers to be unavailable to match at \economyrt. Note that an agent who is unavailable to match in period $t$ automatically remains unmatched in period $t$, but an agent who is available to match in period $t$ could still remain unmatched at the end of period $t$.\footnote{As I explain after the statement of \autoref{definition:ds}, it is enough to consider only objections by agents who in period $t$ are supposed to match under \matching. However, considering both kinds of objections allows me to introduce a key property of dynamically stable matchings that plays a role in the proof of \autoref{theorem:stoch-existence} (see \autoref{rem:consistency}).}

An agent can determine whether they prefer to match according to \matching\ at \economyrt\ or with a contemporaneous partner on the other side by comparing their payoff under \matching\ with their payoff from matching with the new partner. Instead, the matching \matching, together with the model primitives, do not provide enough information to determine whether an agent should be available to match at \economyrt. 

% The decision whether to match according to \matching\ at \economyrt\ or match with a contemporaneous agent on the opposite side can be determined by comparing each agent's payoff under \matching\ with the payoff from matching with their new partner. Instead, the matching \matching, together with the model primitives, do not provide enough information to determine whether an agent should be available to match at \economyrt. 
%
%\newline\indent 
%%
%While a matching \matching\ prescribes the outcome for the economy when all agents agree to match according to \matching, it does not prescribe what would happen if instead an agent or a pair of agents do not agree to match according to \matching. This information is important when the decision to match according to \matching\ involves an agent remaining unmatched in a given period. 
To see this, recall the three matchings in \autoref{fig:ds-pairwise}, reproduced below:

\begin{figure}[h!]
%\centering
{\small{\[\hspace{-1cm}\matching^L=\left(\begin{array}{rcl}\erdos&\_\_&\renyi\\\gale&\_\_&\tucker\\\hline\kuhn&\_\_&\shapley\\\emptyset&\_\_&\nash\end{array}\right) \hspace{0.25cm}
\matching^C=\left(\begin{array}{rcl}\erdos &\_\_&\renyi\\\hline\kuhn&\_\_&\tucker\\\gale&\_\_&\shapley\\\emptyset&\_\_&\nash\end{array}\right)\hspace{0.25cm}
\matching^R=\left(\begin{array}{rcl}\gale&\_\_&\tucker\\\hline\erdos&\_\_&\shapley\\\kuhn&\_\_&\nash\\\emptyset&\_\_&\renyi\end{array}\right)\]
}}
\caption{Matchings in \autoref{fig:ds-pairwise}}\label{fig:ds-pairwise-bis}  
\end{figure}

Note that $\matching^C$ and $\matching^R$ can be ruled out as predictions for the economy in \autoref{example:main-example} using only the information contained in the matching. First, as argued in \autoref{sec:model}, $\matching^C(\economyrtwo)$ is not stable in \periodtwo. Thus, even if agents match according to $\matching^C(\economyrone)$ in \periodone, $\matching^C(\economyrtwo)$ cannot be enforced in \periodtwo. Second, $\matching^R$ can also be ruled out as a prediction for the economy: \kuhn\ and \tucker\ prefer each other to their outcome under $\matching^R$. Thus, they would match together in \periodone\ if they anticipate that $\matching^R$ is the suggested outcome for the economy.

Instead, $\matching^L$ cannot be ruled out using the information contained in the matching alone. In this matching, there is only one pair who prefers each other to their outcome under $\matching^L$:
%to match together to matching according to $\matching^L$: 
\erdos\ and \shapley. However, under $\matching^L$, \erdos\ and \shapley\ never meet: \erdos\ is supposed to match with \renyi\ and exit before \shapley\ arrives. Thus, for \erdos\ to match with \shapley, \erdos\ must prefer to wait for \shapley\ to arrive over matching with \renyi. Deciding whether waiting for \shapley\ is preferred to matching with \renyi\ in \periodone\ depends on what \erdos\ expects the matching would be should he remain unmatched in \periodone. However, $\matching^L$ says nothing about what would happen if \erdos\ chooses to be unavailable to match in \periodone.
%
%does not provide enough information for \erdos\ to evaluate his payoff from remaining unmatched in \periodone. 
Hence, he cannot decide whether he prefers matching with \renyi\ to remaining unmatched in \periodone.

In the dynamic economy, together with the matching \matching\ one should at least prescribe for each period $t$, realization \economyrt, and for each agent that can match at \economyrt, the matching that would ensue if the agent chose to be unavailable to match. This is the role of what I dub the agent's \emph{conjectures}, to which I turn next.

Consider an agent \agent\ who can match at \economyrt\ under \matching, and instead considers being unavailable to match at \economyrt. A conjecture for agent \agent\ is a matching \matchingb\ such that $\matchingb(\economyrt)(\agent)=\agent$. Since \agent\ cannot alter the matchings through period $t-1$, $\matchingb$ must be an element of $\Matchingsfinal(\matching,\economyrt)$. Note that the conjecture defines both who matches at \economyrt, $\matchingb(\economyrt)$, and the continuation matching $(\matchingb(\economyr^{t+\dateindex}))_{\dateindex=1}^{\terminal-t}$. The reason is that to determine the set of feasible matchings from period $t+1$ onward should \agent\ decide to remain unmatched at \economyrt, one needs to determine who else remains unmatched in period $t$. After all, the unmatched agents at \economyrt\ \emph{together} with the new arrivals defines the set of agents available to match from period $t+1$ onward.

Having determined the set of \emph{feasible} conjectures for \agent, I identify in two steps which matchings \agent\ can exclude from consideration. First, note that each conjecture \matchingb,  determines a continuation economy $G_{\terminal-t}(\matchingb^t,\economyrt)$. Because the remaining unmatched agents and the new entrants can (and will) object to any matching that is not self-enforcing, \agent\ should exclude from consideration any such matching. Recall that dynamic stability is a recursive definition and let $\ds_t$ denote the correspondence that maps economies of length $t$, \Gt, to the set of dynamically stable matchings for \Gt.
%
% continuation matching $(\matchingb_\dateindex)_{\dateindex=t+1}^\terminal$ that is not dynamically stable for $\economyr_{t+1}^\terminal(\matchinguptot,\matchingb_t)$. Let $\ds_t$ denote the correspondence that maps economies of length $t$, \economyrt, to the set of dynamically stable matchings for \economyrt. 
Then, \agent\ can exclude from consideration any matching \matchingb\ such that  $(\matchingb(\economyr^{t+\dateindex}))_{\dateindex=1}^{\terminal-t}$, that is not an element of $\ds_{\terminal-t}\left(G_{\terminal-t}(\matchingb^t,\economyrt)\right)$.\footnote{There is a slight abuse of notation. While $\ds_{\terminal-t}\left(G_{\terminal-t}(\matchingb^t,\economyrt)\right)$ defines a matching only for the agents that are unmatched from $t+1$ onward, \matchingb\ also specifies the outcome for those who have matched through period $t$. Thus, this should be read as ``$(\matchingb(\economyr^{t+\dateindex}))_{\dateindex=1}^{\terminal-t}$ coincides with an element of $\ds_{\terminal-t}$ for the agents who are \emph{yet} to be matched at the end of period $t$.''} Second, one needs to determine what period-$t$ matchings (if any) \agent\ can exclude from consideration. The minimal restriction on the set of period-$t$ matchings that \agent\ should entertain is that the matching formed by the agents who match in period $t$ forms a (static) stable matching. Thus, \agent\ never conjectures that the agents who exit the economy in period $t$ could have found a better matching amongst themselves. Formally, 
 \begin{definition}\label{definition:ds-static-stability} Fix a matching, \matching, a period $t$, and a realization \economyrt. $\matching(\economyrt)$, is \emph{stable amongst those who match in period $t$} if $\matching(\economyrt)\in\stable(A_t^\matching,B_t^\matching)$ where
 $A_t^\matching=\{\agenta\in\cala(\matchinguptot,\economyrt):\matching(\economyrt)(\agenta)\neq\agenta\}$, and similarly, $B_t^\matching=\{\agentb\in\calb(\matchinguptot,\economyrt):\matching(\economyrt)(\agentb)\neq\agentb\}$.
\end{definition}
With these restrictions at hand, one can define the set of continuation matchings that \agent\ conjectures may ensue if \agent\ decides to make themselves unavailable to match in period $t$. I denote this set by $M_\ds(k,\matching,\economyrt)$. Formally,
\begin{align}\label{eq:conjectures}
M_\ds(k,\matching,\economyrt)=\left\{\begin{array}{ll}\matchingb\in\Matchingsfinal(\matching,\economyrt):&\text{(i) }\matchingb(\economyrt)(\agent)=\agent,
\\
&\text{(ii) }(\matchingb(\economyr^{t+\dateindex}))_{\dateindex=1}^{\terminal-t}\in\ds_{\terminal-t}\left(G_{\terminal-t}(\matchingb^t,\economyrt)\right),\\
&\text{(iii) }\matchingb(\economyrt)\text{ satisfies \autoref{definition:ds-static-stability}}
\end{array}\right\}.
\end{align}
\begin{ex}[continued] I now illustrate the set $M_\ds(\agent,\matching,\economyrt)$ for the case in which $\agent=\erdos$ and $\matching=\matching^L$. 
Consider the following four matchings:

\begin{figure}[h!]
{\small{
\[\matchingb_{E1}=\left(\begin{array}{ccc}\gale&\_\_&\tucker\\\hline\kuhn&\_\_&\nash\\\erdos&\_\_&\shapley\\\emptyset&\_\_&\renyi\end{array}\right)\hspace{0.25cm}
\matchingb_{E2}=\left(\begin{array}{ccc}\kuhn&\_\_&\tucker\\\hline\gale&\_\_&\shapley\\\erdos&\_\_&\renyi\\\emptyset&\_\_&\nash\end{array}\right)\]
\[\matchingb_{E3}=\left(\begin{array}{ccc}\gale&\_\_&\tucker\\\hline\kuhn&\_\_&\shapley\\\erdos&\_\_&\renyi\\\emptyset&\_\_&\nash\end{array}\right)\hspace{0.25cm}
\matchingb_{E4}=\left(\begin{array}{ccc}\gale&\_\_&\tucker\\\kuhn&\_\_&\renyi\\\hline\erdos&\_\_&\shapley\\\emptyset&\_\_&\nash\end{array}\right)\]}}
\caption{Four conjectures for \erdos, only $\matchingb_{E1}$ and $\matchingb_{E2}$ are valid.}\label{fig:erdos-conjectures}
\end{figure}
%\newpage

In \autoref{fig:erdos-conjectures}, only $\matchingb_{E1},\matchingb_{E2}$ are valid conjectures for \erdos. As it will be clear after the statement of \autoref{definition:ds}, dynamic stability reduces to the static notion of stability in one-period economies. Since $\matchingb_{E3}$ does not prescribe a stable matching in \periodtwo\ (\erdos\ and \shapley\ are a block), then it fails to satisfy condition (ii) in \autoref{eq:conjectures}. Instead, $\matchingb_{E4}$ is not a valid conjecture for \erdos\ because the period-1 matching does not satisfy condition (iii) in \autoref{eq:conjectures}. Under $\matchingb_{E4}$, \kuhn\ and \tucker\ match in \periodone\ and would have preferred to match with each other over matching with their matching partners in $\matchingb_{E4}$, so that the period-$1$ matching does not satisfy \autoref{definition:ds-static-stability}. 
%Thus, $\matchingb_{E4}$ is not valid because the agents who match in \periodone\ could have found an improvement on the matching they form upon exiting the economy.
\end{ex}

\autoref{definition:ds} defines dynamic stability. Together with prescribing how to use the conjectures to define an agent's value of remaining unmatched, it also prescribes which pairs of agents can object to a given matching.
\begin{definition}\label{definition:ds}
Given the correspondences $(\ds_t)_{t\leq\terminal-1}$, matching \matching\ is \emph{dynamically stable} for \Gterminal\ if for all $t\in\{1,\dots,\terminal\}$ and all realizations \economyrt, the following hold:
\begin{enumerate}[leftmargin=*,label=(\subscript{DS}{{\arabic*}})]

\item\label{itm:ds2} For all $\agenta\in\cala(\matchinguptot,\economyrt)$, there exists $\matchingb\in M_\ds(\agenta,\matching,\economyrt)$ such that $U(\agenta,\matching,\economyrt)\geq U(\agenta,\matchingb,\economyrt)$,
\item\label{itm:ds3} For all $\agentb\in\calb(\matchinguptot,\economyrt)$, there exists $\matchingb\in M_\ds(\agentb,\matching,\economyrt)$ such that $V(\agentb,\matching,\economyrt)\geq V(\agentb,\matchingb,\economyrt)$,
\item\label{itm:ds1} There is no pair $\pair\in\cala(\matchinguptot,\economyrt)\cup\calb(\matchinguptot,\economyrt)$ such that $\genericu>U(\agenta,\matching,\economyrt)$ and $\genericv>V(\agentb,\matching,\economyrt)$.
\end{enumerate}
Let $\ds_\terminal(\Gterminal)$ denote the set of dynamically stable matchings for \Gterminal.
\end{definition}
In words, a matching \matching\ is dynamically stable if the following hold. First, for each period $t$, realization \economyrt, and each agent \agent\ who can match at \economyrt, there exist matchings $\matchingb_\agent\in M_\ds(\agent,\matching,\economyrt)$, such that \agent's payoff under \matching\ is at least the payoff \agent\ would obtain by remaining unmatched in period $t$ when the matching is $\matchingb_\agent$. Second, there is no pair of agents who can match at the same time and prefer matching together to matching according to \matching. Since $\cala(\matchinguptot,\economyrt)\cup\calb(\matchinguptot,\economyrt)$ may contain agents who arrive before period $t$, condition \ref{itm:ds1} does not rule out \emph{all} blocks involving agents who arrive in different periods. Instead, it requires that at the moment of blocking both agents are present, since only then they can evaluate whether they want to carry out the block. Since condition \ref{itm:ds1} plays the same role as condition \ref{itm:ss1} in static environments, in what follows, a matching \matching\ that satisfies condition \ref{itm:ds1} is denoted \emph{pairwise stable}.

When $\terminal=1$, the definition of dynamic stability reduces to the requirement that $\matching(\economyrone)$ satisfies the static notion of stability for each realization \economyrone. Indeed, fix a matching \matching\ for the one-period economy. Fix a realization \economyrone\ and an agent \agenta\ in $\arrivala_1$. Note that all matchings \matchingb\ in the set $M_{\ds}(a,\cdot)$ satisfy that $\matchingb(\economyrone)(a)=a$ (part (ii) in the definition of $M_{\ds}$ is vacuous). Thus, condition \ref{itm:ds2} simply states that $a$ prefers \matching\ to remaining single. Condition \ref{itm:ds1} in \autoref{definition:ds} states that the matching $\matching(\economyrone)$ has no pairwise blocks. Thus, when $\terminal=1$, \autoref{definition:ds} reduces to \autoref{definition:static-stability} for each \economyrone.%, so that the correspondence $\ds_1$ coincides with \stable. 
newline\indent
While the comparison of Definitions \ref{definition:static-stability} and \ref{definition:ds} suggests that in the dynamic economy the value of waiting to be matched replaces the value of being single, this analogy is potentially incomplete. In a static economy, an agent who is single under a stable matching receives \emph{exactly} the value they can guarantee by being single. Instead, \autoref{definition:ds} suggests that for an agent who is unmatched in period $t$, there may be a gap between the utility of matching \matching\ and the utility of the conjectured matching \matchingb, which represents the value the agent can guarantee by choosing to be unavailable to match. This gap is only superficial since dynamically stable matchings satisfy the following property, which I record for future reference:
\begin{rem}%[Consistency of conjectures]
\label{rem:consistency}
Let \matching\ be a dynamically stable matching for \Gterminal. For all $t\geq1$, for all realizations \economyrt, and for all $\agent\in\cala(\matchinguptot,\economyrt)\cup\calb(\matchinguptot,\economyrt)$ such that $\matching(\economyrt)(\agent)=\agent$, then $\matching\in M_\ds(\agent,\matching,\economyrt)$.
\end{rem}
Thus, for those agents who are unmatched under \matching\ in a given period, the conjectured matching, \matchingb, can be picked to exactly coincide with the matching \matching. That is, if \matching\ is dynamically stable an agent who can match at \economyrt, but remains unmatched at \economyrt\ under \matching, is indifferent between being available to match at \economyrt\ and being unavailable to match at \economyrt. It follows that if a matching \matching\ satisfies condition \ref{itm:ds1} and for each period $t$, and realization \economyrt, conditions \ref{itm:ds2} and \ref{itm:ds3} for agents \agent\ such that $\matching(\economyrt)(\agent)\neq\agent$, then \matching\ is dynamically stable.

I now illustrate \autoref{definition:ds} for the case in which $\terminal=2$ using \autoref{example:main-example}:
\addtocounter{ex}{-1}
\begin{ex}[continued]
There are two dynamically stable matchings: 
\label{main-example-ds}
\begin{figure}[h!]
\[\matching^L=\left(\begin{array}{lcl}\erdos&\_\_&\renyi\\\
\gale&\_\_&\tucker\
\\
\hline
\kuhn&\_\_&\shapley\
\\
\emptyset&\_\_&\nash\ 
\end{array}\right)\hspace{1cm}\matchingb^{L}=\left(\begin{array}{lcl}\erdos&\_\_&\renyi\\\
\kuhn&\_\_&\tucker\
\\
\hline
\gale&\_\_&\shapley\
\\
\emptyset&\_\_&\nash\ 
\end{array}\right)
\]
\caption{Dynamically stable matchings in \autoref{example:main-example}.}\label{fig:ex-ds}
\end{figure}
%\newpage

To check that $\matching^L$ is dynamically stable, it only remains to verify that \erdos\ cannot improve on his matching outcome by waiting to be matched. Since $\matchingb_{E2}$ in \autoref{fig:erdos-conjectures} is a reasonable conjecture for \erdos, it follows that by choosing to remain unmatched in \periodone, he can guarantee, at most, his payoff from matching with \renyi\ in \periodtwo.\ Therefore, \erdos\ prefers to match with \renyi\ in \periodone\ rather than to wait for \shapley\ to arrive. Note how the requirement that \erdos's conjectured matching induces a stable matching in period $2$ ``prevents'' \erdos\ and \shapley\ from blocking $\matching^L$. Under the conjectured matching $\matchingb_{J2}$, \shapley\ matches with \gale, who is preferred to \erdos. Indeed, this is the only stable matching in \periodtwo. Thus, when \erdos\ and \shapley\ cannot agree in advance that they will match together in \periodtwo, there are instances in which \shapley\ is not willing to match with \erdos, once \erdos\ waits for \shapley\ to arrive. Anticipating this, \erdos\ prefers to match with \renyi\ in \periodone.
%\ld{can i comment here on the requirement of stability in \periodtwo?}

 The matching $\matchingb^L$ is also dynamically stable. Indeed, the matching $\matchingb_{E1}$ in \autoref{fig:erdos-conjectures} is a valid conjecture for \kuhn\ when he considers remaining single in \periodone. Since \kuhn\ prefers to match with \tucker\ in \periodone\ over matching with \nash\ in \periodtwo, he cannot object to $\matchingb^L$ by waiting to be matched.
\end{ex}
\textbf{Discussion:}\label{discussion} Having introduced the definition of dynamic stability, I now discuss the role of its different components, starting with the set of conjectures, $M_\ds(\agent,\matching,\economyrt)$.

Two assumptions about the timing of an agent's decision to match in period $t$ according to matching \matching\ underlie the definition of the set of conjectures, $M_\ds(\agent,\matching,\economyrt)$:

First, this decision is made before agents are matched according to the prescribed matching in period $t$. To see this, note that \agent\ does not necessarily assume that the remaining agents, except perhaps for \agent's matching partner, match according to $\matching(\economyrt)$. As an illustration, consider \autoref{example:main-example}. When considering whether \erdos\ could block $\matching^L$, both $\matchingb_{E1}$ and $\matchingb_{E2}$ in \autoref{fig:erdos-conjectures} are valid conjectures under dynamic stability. However, only $\matchingb_{E1}$ respects that \gale's assignment is the same as in $\matching^L$. It is well-known in static matching models with externalities that if agents make decisions to block only after agents are matched according to \matchingt, stable matchings may not exist (\citealp{chowdhury2004marriage}). \cite{doval2015theory} shows that this observation extends to dynamic matching markets. 

Second, all agents who can match in period $t$, i.e., the remaining unmatched agents together with the new arrivals, decide \emph{simultaneously} whether they will be available to match in period $t$ (and also whether to form blocking pairs). That is, when agent \agent\ contemplates being unavailable to match in period $t$, two things are true: (i) agent \agent\ has not yet made themselves unavailable to match, and (ii) no one else has yet made this decision. 

To understand the role of the second assumption in \autoref{eq:conjectures}, note the following. The requirement that a conjectured matching \matchingb\ satisfies \autoref{definition:ds-static-stability} in period $t$ and that it induces a dynamically stable matching from period $t+1$ onward says nothing about (a) the existence of blocking pairs in period $t$ involving at least one agent who remains unmatched in period $t$, or (b) whether agents who match in period $t$ under \matchingb\ would prefer to remain unmatched. Both of these decisions depend on the agents other than \agent\ understanding what the \emph{new} continuation matching is, i.e., $(\matchingb(\economyr^{t+\dateindex}))_{\dateindex=1}^{\terminal-t}$, and making decisions optimally relative to this. The optimality of these decisions depends on what matching they conjecture would ensue if they did not follow the prescription of \matchingb. This, in turn, requires knowing what the set of agents available to match in period $t$ is, which presumably does not include \agent. In this case, it is \emph{as if} the agents other than agent \agent\ learn that \agent\ is unavailable to match \emph{before} making the same decision themselves, contradicting the assumption that these decisions are made simultaneously.

Since agents make their blocking decisions simultaneously, the set $M_\ds(\agent,\matching,\economyrt)$ does not describe how \agent\ expects that the remaining agents (and the continuation economy) will \emph{react} to \agent's decision to be unavailable to match at \economyrt. Instead, it describes that even if \agent\ chooses to remain unmatched at \economyrt\ \emph{some} matching $\matchingb(\economyrt)$ will happen at \economyrt. Since dynamic stability describes the set of predictions for the continuation economy, it is possible to pin down the properties of the matching that would ensue in the continuation economy: it will be dynamically stable for $G_{\terminal-t}(\matchingb^t)$. However, it is not immediate to prescribe exactly what $\matchingb(\economyrt)$ should be. On the one hand,
it could be that agents other than \agent\ also decide to be unavailable to match or form blocking pairs.  On the other hand, even if \agent\ is the only one to find fault with the proposed matching, the remaining period $t$ agents are evaluating their decisions to be available to match and/or form blocking pairs using \matching\ as the prescribed outcome and under the assumption that all agents who can match at \economyrt, including \agent, are available to match at \economyrt. It is natural to assume that $\matchingb(\economyrt)$ at the very least satisfies \autoref{definition:ds-static-stability}: When making their decisions to match at \economyrt, no agent should exit with a partner worse than remaining single forever and no two agents who are exiting should have preferred to exit with each other. 

There are two takeaways from the previous discussion. First, the discussion highlights the externality that sets aside dynamic matching markets from their static counterparts. While in a static matching market one should potentially describe the final outcome when there is a block, this is not needed to define the payoffs for the blocking agents. Instead, the dynamic economy forces the analyst to explicitly describe the outcome in period $t$, even if there is a block: otherwise, it is not even possible to define the set of feasible continuation matchings. Second, a potential avenue for further research is to consider refinements of dynamic stability by strengthening \autoref{definition:ds-static-stability}. For instance, one could require that the conjectured period-$t$ matching $\matchingb(\economyrt)$ satisfies that no agent exits with a partner worse than the value of remaining unmatched in period $t$ as described by the set $M_\ds(\cdot,\matching,\economyrt)$.\footnote{For instance, if \matchingbt\ specifies that $\matchingb(\economyrt)(\agenta)\neq\agenta$, then $\agenta$ prefers $\matchingb(\economyrt)(\agenta)$ to the worst element of $M_\ds(\agenta,\cdot)$.} This would be consistent with the idea that agents in period $t$ make their matching decisions using their set of conjectures. What the appropriate strengthening of \autoref{definition:ds-static-stability} is may depend on the application at hand. \autoref{theorem:ds-implementation} in \autoref{sec:sequential} suggests that in sequential assignment problems \autoref{definition:ds-static-stability} is the weakest condition that one can require of $\matchingb(\economyrt)$.

Together with the set of conjectures, there are two other elements of \autoref{definition:ds} worth noting. 

First, \autoref{definition:ds} implies that in order to prevent an agent from blocking matching \matching, it suffices to find a conjectured matching that is worse than \matching. As I explain after the statement of \autoref{theorem:stoch-existence}, the ability to \emph{select} conjectures to dissuade agents from objecting to a matching is important for the existence result. Alternatively, one could require that in order to prevent an agent from blocking matching \matching, \emph{all} conjectured matchings must be worse than \matching. Under this alternative definition, there are economies for which no matching is self-enforcing. To illustrate, consider \autoref{example:main-example}. Under the alternative definition, \erdos\ blocks $\matching^L$ since $\matchingb_{E1}$ is a valid conjecture for \erdos, and \erdos\ prefers $\matchingb_{E1}$ to $\matching^L$
%To see why \erdos\ blocks $\matching^L$ under the alternative definition, 
%recall that $\matchingb_{J1}$ in \autoref{fig:erdos-conjectures} is a valid conjecture for \erdos. Moreover, $\matchingb_{J1}$ is preferred by \erdos\ to $\matching^L$, so that \erdos\ blocks $\matching^L$. 
Similarly, under the alternative definition, \kuhn\ blocks $\matchingb^L$: a valid conjecture for \kuhn\ involves \erdos\ and \renyi\ matching in \periodone, in which case, the only stable matching in \periodtwo\ matches \kuhn\ with \shapley. Thus, in the economy in \autoref{example:main-example}, there is no dynamically stable matching under the alternative definition of blocking.

%\footnote{While these notions of blocking are denoted, respectively, pessimistic and optimistic in the literatures on matching with externalities and farsighted stability, the term \emph{selection} describes \autoref{definition:ds} better. While the most pessimistic conjecture is used as a trick in the existence proof, agents do not need to assume that the worst element of $M_\ds(\agent,\cdot)$ will ensue conditional on their decision to remain unmatched. For instance, in a dynamically stable matching \matching, the conjectures of agents who remain unmatched in period $t$ can be chosen to coincide with \matching. Furthermore, agents use the correct distribution over arrivals when evaluating their payoffs from remaining unmatched.\label{pessimistic-footnote}\label{fnt:pessimistic}} 

The existing literature offers no general prescription between the two notions of blocking described above (see \cite{ray1997equilibrium} for a similar discussion). However, considering the alternative notion of blocking would potentially require changing other assumptions in the model, in particular, the timing of decisions in each period. 
%changing the assumptions
 %Introspection suggests that the notion embodied in \autoref{definition:ds} is at least consistent with the rest of the assumptions in the model. 
 To see this, consider an agent \agent\ who chooses to be unavailable to match at \economyrt. A typical argument in favor of the alternative notion of blocking uses \emph{forward induction}: when the agents in period $t+1$ observe that \agent\ deviated from the prescribed matching \matching, they should infer that this only makes sense if there is a continuation matching that \agent\ preferred to matching \matching. However, to select \agent's most preferred conjecture, the forward induction reasoning should apply both to the period-$t$ matching \emph{and} to the continuation matching. That is, the remaining period-$t$ agents should be able to \emph{react} at \economyrt\ to \agent's decision to block. This is inconsistent with the assumption that agents at \economyrt\ make decisions simultaneously. 

The second element is that \autoref{definition:ds} does not require that agents hold ``common beliefs'' about the matching that would ensue when they decide to remain unmatched at \economyrt. That is, suppose two agents \agent\ and \agentp\ anticipate that the matchings $\matchingb_\agent$ and $\matchingb_{\agentp}$ would ensue when they choose to be unavailable to match at \economyrt.  Then, $\matchingb_\agent$ and $\matchingb_{\agentp}$ need not coincide for the agents other than \agent\ and \agentp\ at \economyrt. Furthermore, even if $\matchingb_\agent$ and $\matchingb_{\agentp}$ coincide at \economyrt, they need not coincide in the continuation economy. This property is shared with the model of \cite{sasaki1996two} and also with the models of \cite{ambrus2006coalitional} and \cite{liu2014stable}, where the solution concepts are akin to rationalizability. Note, however, that this lack of commonality only applies to those agents who match in period $t$ and for whom remaining unmatched in period $t$ is an \emph{off path} event. Indeed, the property in \autoref{rem:consistency} implies that in a dynamically stable matching \matching, the agents who remain unmatched at \economyrt\ under \matching\ all share the same conjecture: even if they chose to object to \matching, \matching\ would still be the outcome from period $t$ onward.
\section{Properties}\label{sec:main}
\autoref{theorem:stoch-existence} presents the main result of the paper: the set of dynamically stable matchings is non-empty.
%\footnote{Even with stochastic arrivals, the model still assumes that there are no two agents with the same characteristic arriving neither within the same period nor across the same period. This is to keep the notation simple. \cite{supplemental} extends the model and all the results to the case in which multiple individuals with the same characteristic may arrive within and across periods.}
\begin{theorem}\label{theorem:stoch-existence}
For all $\terminal\in\mathbb{N}$, the correspondence $\ds_\terminal$ is non-empty valued.
\end{theorem}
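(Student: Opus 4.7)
The plan is to induct on the horizon $\terminal$, using the idea hinted at in the text that one reduces each period to a static matching problem via Sasaki–Toda-style pessimistic conjectures. For the base case $\terminal=1$, \autoref{definition:ds} reduces to \autoref{definition:static-stability} applied realization-by-realization, and non-emptiness is immediate from \cite{gale1962college}. For the inductive step, I assume $\ds_s$ is non-empty valued for every $s < \terminal$ and construct a dynamically stable matching for an arbitrary $\Gterminal$.

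Fix a realization $\economyrone=(A_1,B_1)$. The crucial observation is that the set $M_\ds(k,\cdot,\economyrone)$ is independent of the candidate matching $m$ in the first period, because $\Matchings_\terminal(m,\economyrone) = \Matchings_\terminal$ (there are no past realizations to agree on). Moreover, this set is non-empty for every $k\in A_1\cup B_1$: by Gale--Shapley there exists a stable matching on $(A_1\setminus\{k\},B_1)$ (or $(A_1,B_1\setminus\{k\})$), which satisfies condition (iii) of \autoref{eq:conjectures}, and the inductive hypothesis supplies a dynamically stable continuation for each possible arrival, delivering condition (ii). This lets me define, for every $a\in A_1$ and $b\in B_1$,
\begin{align*}
\underline{U}_1(a)=\min_{\bar{m}\in M_\ds(a,\cdot,\economyrone)} U(a,\bar{m},\economyrone),\qquad \underline{V}_1(b)=\min_{\bar{m}\in M_\ds(b,\cdot,\economyrone)} V(b,\bar{m},\economyrone),
\end{align*}
which are the pessimistic values of being unavailable to match in period $1$. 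Because these thresholds are defined purely in terms of primitives and the inductively constructed $\ds_{\terminal-1}$, they do not depend on what the other period-$1$ agents do: the externality is effectively neutralized.

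Next, I build an artificial one-shot matching problem on $(A_1,B_1)$ in which each agent's value of being single is replaced by $\underline{U}_1(a)$ or $\underline{V}_1(b)$. By Gale--Shapley, this artificial market admits a stable matching $m^\star_1$. I take $m^\star_1$ as the period-$1$ assignment, let $G_{\terminal-1}(m^\star_1,\cdot)$ be the induced continuation economy, and use the inductive hypothesis to pick a dynamically stable continuation $m^\star$ for it; the concatenation $m=(m^\star_1,m^\star)$ is my candidate.

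Verification of \autoref{definition:ds} is then a case split. Conditions \ref{itm:ds2}, \ref{itm:ds3}, \ref{itm:ds1} at any $t\geq 2$ hold by the inductive dynamic stability of the continuation. At $t=1$, for any matched agent $a$ I exhibit the pessimistic conjecture achieving $\underline{U}_1(a)$ as the witness $\bar{m}$: static stability of $m^\star_1$ in the artificial problem guarantees $u(a,m^\star_1(a))\geq \underline{U}_1(a)$, giving \ref{itm:ds2}; the symmetric argument gives \ref{itm:ds3}. For an agent $a$ unmatched under $m^\star_1$, $m$ itself is a valid conjecture by \autoref{rem:consistency}, so \ref{itm:ds2} is trivial and moreover $U(a,m,\economyrone)\geq \underline{U}_1(a)$. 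To rule out a period-$1$ blocking pair $(a,b)$, I compare to the artificial problem case-by-case according to whether each of $a,b$ is matched under $m^\star_1$; in every case a block in the dynamic economy would imply a block in the artificial economy, contradicting stability of $m^\star_1$. (This is where the inequality $U(a,m,\economyrone)\geq\underline{U}_1(a)$ for unmatched agents, derived from Remark 1, does the work.) For $t=1$ realizations with nontrivial stochastic continuations, the same construction is carried out realization by realization, with $\underline{U}_1,\underline{V}_1$ computed as expectations under $G(\cdot\mid\economyrone)$.

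The main obstacle is precisely the circularity flagged in the discussion after \autoref{definition:ds}: an agent's outside option in period $t$ depends on a continuation matching, which in turn depends on who else remains unmatched in period $t$ and on the primitives of the continuation economy. The Sasaki–Toda device of replacing this endogenous outside option with its worst-case value breaks the circularity, because the pessimistic value is a number that does not depend on the candidate matching being tested, allowing a clean reduction to a static problem. The work left over after this reduction — checking \ref{itm:ds1}, \ref{itm:ds2}, \ref{itm:ds3} at $t=1$ — is essentially bookkeeping made possible by \autoref{rem:consistency}, which ensures that unmatched agents in a dynamically stable matching attain at least their pessimistic value.
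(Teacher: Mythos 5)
Your proposal is correct and follows essentially the same route as the paper's proof: induction on $\terminal$, Sasaki--Toda-style pessimistic conjectures defining truncated outside options, deferred acceptance on the resulting artificial static market, and verification of \ref{itm:ds1}--\ref{itm:ds3} via \autoref{rem:consistency} and the contradiction argument for blocking pairs involving an unmatched agent. The only cosmetic differences are that you invoke the inductive hypothesis wholesale for the continuation rather than constructing the matching explicitly at every period, and that your claim $\Matchingsfinal(\matching,\economyr^1)=\Matchingsfinal$ is not literally true when $G_1$ has several realizations in its support---but, as in the paper, the conjecture can be pasted to agree with $\matching$ off the branch of $\economyr^1$ without affecting payoffs, so nothing is lost.
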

The proof of \autoref{theorem:stoch-existence} is in \autoref{appendix:proof-existence}. The proof shows how to find a matching (labeled \matchingstar\ in the proof) that is dynamically stable. Because of the recursive nature of dynamic stability, the proof needs to simultaneously determine the conjectures, $M_\ds(\agent,\matchingstar,\economyrt)$, and the matching, \matchingstar, that is dynamically stable given these conjectures.

To determine the conjectures, the proof proceeds by induction on $\terminal\geq1$: to show that dynamic stability is well defined for \terminal, one must show that it is well defined for $\terminal^\prime<\terminal$. This is the step that uses the assumption that $\terminal<\infty$. As argued in \autoref{sec:dynamic-stability}, dynamic stability coincides with the static notion of stability when $\terminal=1$. It follows from \cite{gale1962college} that $\ds_1(\cdot)$ is non-empty for all one-period economies. 

Given the set of conjectures, I adapt the proof technique in \cite{sasaki1996two} to find a matching that is dynamically stable. \cite{sasaki1996two} show how to use a set of conjectures to construct an economy without externalities. Building on their insight, I use the agents' conjectures to construct a ``static'' economy in period $t$ as follows. For concreteness, consider the case in which $\terminal=2$. Fix a realization $\economyr_1$. For each agent $k\in A_1\cup B_1$, I calculate \agent's\  continuation value to be the payoff from the worst matching in $M_\ds(\agent,\cdot,\economyr_1)$. When $\terminal=2$, the latter set is non-empty because the set of static stable matchings in period 2 is non-empty regardless of the realization of the arrivals.
%and (ii) the period-$1$ matching where everyone is unmatched is stable amongst those who match in period $1$. 
With these continuation values at hand, I truncate $k$'s preferences so that $k$ is only willing to match with period-$1$ agents that are preferred to this continuation value. The matching at $\economyr_1$, $\matchingstar(\economyr_1)$, is constructed by running deferred acceptance with agents in $A_1$ proposing to agents in $B_1$ with the truncated preferences. This, in turn, determines the set of unmatched agents at the end of period $1$. For each $\economyr^2$, one then chooses a stable matching amongst the newly arriving agents and the remaining ones from period $1$.

By construction, \matchingstar\ satisfies conditions \ref{itm:ds2} and \ref{itm:ds3} of \autoref{definition:ds}. In particular, for agents who do not match in \periodone\ under \matchingstar, \matchingstar\ is a valid conjecture: the period $2$ matching is an element of $\ds_1$, while the period $1$ matching satisfies \autoref{definition:ds-static-stability}. It only remains to check that \matchingstar\ satisfies condition \ref{itm:ds1} in \autoref{definition:ds}. Suppose there is a pair \pair\ available to match in \periodone, who prefer matching together to \matchingstar. By construction of \matchingstar, it must be that at least one of the members of the pair is unmatched in \periodone. For concreteness, say it is \agenta. It then follows that \agenta\ truncated their preferences ``too much:'' \matchingstar\ is worse than the worst matching in $M_\ds(\agenta,\cdot)$. However, this contradicts that $\matchingstar$ is an element of $M_\ds(\agenta,\cdot)$.

Both the ability to select conjectures to dissuade agents from blocking a matching together with the property in \autoref{rem:consistency}
%
%The description above highlights that the property in \autoref{rem:consistency} together with the definition of blocking implicit in conditions \ref{itm:ds2}-\ref{itm:ds3}
%
%the requirement that an agent blocks a matching \matching\ by remaining unmatched in period $t$ only if this matching is worst than all possible conjectures 
are key to show that dynamically stable matchings exist. To see the role of the former, suppose instead that the existence of a conjecture \matchingb\ that is preferred to \matching\ were enough for a single agent to block matching \matching. Then, one could modify the construction in the proof as follows. Instead of truncating an agent's preferences using the worst element in $M_\ds(\cdot)$, one would truncate an agent's preferences using the best element in that set. The same construction would again lead to a matching that satisfies the property in \autoref{rem:consistency} for those agents who remain unmatched in \periodone. However, this is not enough to argue that the unmatched agents would not form a blocking pair with another agent in \periodone. After all, the constructed matching need not be the best conjecture for all the unmatched agents. Similarly, without the property in \autoref{rem:consistency}, the notion of blocking in \autoref{definition:ds} is not enough for the result in \autoref{theorem:stoch-existence} to hold. The ability to promise a blocking agent the worst element in $M_\ds(\cdot)$ maximally dissuades agents who match at \economyrt\ from being unavailable to match. However, there may be agents for whom remaining unmatched at \economyrt\ is the best they can do given this promise. The property in \autoref{rem:consistency} implies that one can effectively deliver this promise simultaneously to all the agents who remain unmatched at \economyrt\ when choosing a dynamically stable matching from period $t+1$ onward.\footnote{This is where the lack of commonality in the conjectures could have actually hindered existence by making it difficult to find one continuation matching that ``works'' for all agents who remain unmatched at \economyrt.} This is because all agents agree that (i) (static) stability is a minimal requirement of the outcome at \economyrt\ and (ii) that continuation matchings are dynamically stable for the continuation economy. 
%\newline\indent
%%
%The description above highlights the role played by the different components of \autoref{definition:ds} discussed at the end of \autoref{definition:dynamic-stability}. They all crystallize in the
%%
%%
%%
%%
%%
%\newline\indent
%%
%Notice how the property in \autoref{rem:consistency} is key to show that dynamically stable matchings exist: If \matchingstar\ were not a valid conjecture for the unmatched agents in 
%\periodone, then one of those unmatched agents, say \agenta\ in the previous paragraph, could be faced with a matching that is worse than their worst conjecture. This, in turn, could make agent \agenta\ be willing to match in \periodone. If there is an agent \agentb\ that is willing to match with \agenta, one faces the following difficulty. On the one hand, \matchingstar\ would violate condition \ref{itm:ds1} of dynamic stability. On the other hand, it is not clear that by satisfying the block by \pair\ might help us find a dynamically stable matching: by definition \agentb\ is worse than \agenta's worst continuation matching, and thus, no dynamically stable matching should match \agenta\ and \agentb.  
%\newline\indent
%%
%The above discussion highlights that conjectures in \autoref{definition:ds} play a double role. On the one hand, they determine whether an agent can successfully block a matching in period $t$. From this perspective, that agents hold heterogeneous conjectures facilitates finding candidate matchings to be dynamically stable. On the other hand, 

In static matching markets, the Lone Wolf Theorem (\citealp{mcvitie1970stable}) and the lattice property are important structural properties. However, the set of dynamically stable matchings inherits neither, even if there is no uncertainty over the arrivals:
%One of the most useful properties of stability in static, two-sided, one-to-one match- ing markets is the lattice property.
\begin{prop}\label{prop:lattice}
Suppose the support of \Gterminal\ is a singleton. There exist economies for which the set $\ds_\terminal(\Gterminal)$ does not satisfy the Lone Wolf theorem. 
%Thus, the set of dynamically sta
%does not form a lattice. 
Furthermore, the set $\ds_\terminal(\Gterminal)$ does not form a lattice.
%satisfy the Lone Wolf theorem.
\end{prop}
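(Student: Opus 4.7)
The plan is to prove each clause via an explicit construction of a deterministic two-period economy (so $\Gterminal$ is a point mass) together with an exhaustive description of its dynamically stable matchings. The key technical advantage of $\terminal=2$ is that, for each candidate $\matching$ and each agent $\agent$, an element of $M_{\ds}(\agent,\matching,\economyrone)$ is determined by a period-1 matching satisfying \autoref{definition:ds-static-stability} paired with a static stable matching of the resulting continuation economy; both are finite enumerations, so \autoref{definition:ds} can be verified by inspection.

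For the failure of the Lone Wolf property, I would engineer preferences so that some side-A agent $\bar{a}$ arriving in period 1 has two acceptable period-1 partners and one acceptable period-2 partner, while the remaining agents are chosen so that the economy admits (i) a dynamically stable matching $\matching$ in which $\bar{a}$ matches in period 1 and some side-B agent $\bar{b}$ stays single throughout, and (ii) a dynamically stable matching $\matchingc$ in which $\bar{a}$ remains unmatched in period 1 in order to match with the period-2 arrival and, as a consequence, $\bar{b}$ is pulled into a period-1 match. The resulting asymmetry in the unmatched sets of $\matching$ and $\matchingc$ delivers the first claim. For each matched agent in each of the two matchings I would exhibit a specific conjecture in $M_{\ds}(\cdot)$ that defeats the agent's temptation to block, and invoke \autoref{rem:consistency} for the unmatched ones.

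For the failure of the lattice property, I would construct two dynamically stable matchings $\matching_1,\matching_2$ whose side-A preference order disagrees across two distinct side-A agents $\agenta\neq\agentap$, with $\agenta$ strictly preferring $\matching_1$ and $\agentap$ strictly preferring $\matching_2$. I would then show that the would-be side-A join—assigning each side-A agent the better of his outcomes under $\matching_1$ and $\matching_2$—either fails to satisfy \autoref{definition:matching} (two side-A agents would point to the same side-B partner) or, if well-defined, fails condition \ref{itm:ds1} because some side-B agent strictly prefers their $\matching_1$ or $\matching_2$ partner to the new assignment and can locate a blocking partner among the displaced. A symmetric argument rules out the meet, so $\ds_\terminal(\Gterminal)$ is not a lattice under the side-A preference order.

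The main obstacle throughout is the recursive verification of dynamic stability: certifying conditions \ref{itm:ds2} and \ref{itm:ds3} requires exhibiting, for each matched agent, a conjecture in $M_{\ds}(\cdot)$ against which the proposed matching performs at least as well, which in turn requires enumerating static stable matchings of several candidate continuation economies. The combinatorial care required to make both proposed matchings pass this check while their would-be lattice operations fail is where the construction has to be most delicate; in particular, preferences must be tuned so that the worst-conjecture continuation values (as used in the proof of \autoref{theorem:stoch-existence}) are exactly tight for the agents on the ``boundary'' of the two matchings.
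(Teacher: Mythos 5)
Your approach is the same as the paper's: \autoref{prop:lattice} is proved by exhibiting a single deterministic two-period economy (\autoref{ex:lone-wolf} in \autoref{appendix:lattice}) whose set of dynamically stable matchings is enumerated and checked against \autoref{definition:ds} by inspection, exactly as you outline, including the use of \autoref{rem:consistency} for unmatched agents and explicit worst-case conjectures for matched ones. The substantive caveat is that your proposal is a blueprint rather than a proof: the entire content of the argument is the explicit economy and its verification, and you never write one down. Two smaller points of comparison. First, your Lone Wolf mechanism has more moving parts than necessary --- for a side-$B$ agent $\bar b$ to go from single to matched when $\bar a$ withdraws from period 1, some additional side-$A$ agent must be redirected from a period-2 partner back to $\bar b$, since removing $\bar a$ from the period-1 pool by itself only shrinks side $B$'s options. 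The paper instead gives one agent, $a_{11}$, a period-1 rival ($a_{12}$) for his period-1 partner $b_{11}$ and a period-2 rival ($a_{21}$) for his period-2 partner $b_{21}$, so that $a_{11}$'s timing choice alone swaps which rival is the lone wolf. Second, that same pair of matchings delivers the lattice failure in precisely the mode you anticipate: $a_{11}$ and $a_{21}$ both prefer $b_{21}$, so the side-$A$ join would assign $b_{21}$ to two agents and is not a matching. Your plan would work, but it asks you to build two separate delicate constructions where the paper's single example does both jobs.
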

The result in \autoref{prop:lattice} is based on an example in \autoref{appendix:lattice}. Despite the structural differences between dynamic stability and stability for static matching markets, there are also important similarities. Like stability in static matching markets, dynamic stability of a matching is a necessary condition for voluntary participation. This is the topic of \autoref{sec:timing}.

\section{Participation, timing, and incentives in dynamic matching markets}\label{sec:timing}
In static matching markets, mechanisms that output stable matchings simultaneously address two problems. First, as observed by \cite{roth1984evolution}, stability is a necessary condition for voluntary participation in the
mechanism. This identifies mechanisms that output stable matchings for the reported preferences as the only candidates to induce participation.
Second, conditional on participating, the mechanism must be such that participants have an incentive
to truthfully report their preferences. It is well-known that no stable mechanism exists that makes it optimal for both sides of the
market to truthfully reveal their preferences. Instead, if the mechanism outputs, say, the matching obtained by side $\arrivala$-proposing
deferred acceptance (henceforth, DA), then it is a dominant strategy for side $\arrivala$ to truthfully report their preferences. Thus, in settings where side 
$\arrivalb$ is non-strategic, side $\arrivala$-proposing DA is used to simultaneously address the problems of participation
 and incentives.
 
This section analyzes the issues of participation and incentives in the context of dynamic matching markets. \autoref{prop:stoch-queueing} in \autoref{sec:voluntary} shows that dynamic stability is a necessary condition for \emph{timely} participation in the matching market. Furthermore, an extension of the DA algorithm to the dynamic economy is shown to give agents on the proposing side the correct incentives to participate as soon as they arrive. \autoref{sec:sequential} studies the issues of participation and incentives in the context of sequential assignment problems. The main result in \autoref{sec:sequential} (\autoref{theorem:ds-implementation}) shows that only dynamically stable matchings can arise as equilibrium outcomes of the preference revelation game induced by a sequence of stable matching algorithms.
\subsection{Voluntary and timely participation}\label{sec:voluntary}
\citet[pp.~22--23]{roth1992two} argue that stability is a necessary condition for voluntary participation in a static economy as follows. Suppose a matchmaker can recommend a matching for the economy, but cannot compel the agents to accept the matching. Instead, the agents are free to form pairs amongst themselves or choose to remain unmatched. Then, the agents follow the matchmaker's recommended matching only if it is a stable matching. \autoref{prop:stoch-queueing} shows that dynamic stability plays the same role in the dynamic economy: the agents follow the matchmaker's recommended matching only if it is dynamically stable.\footnote{The mechanism design literature on revenue management studies the problem of agents strategically reporting their availability to participate in the mechanism (e.g., \cite{gershkov2015efficient,garrett2016intertemporal,bergemann2019progressive}).}
%matching that is not
%
% is a necessary condition for voluntary and \emph{timely} participation in the dynamic economy. That is, a matchmaker who offers a matching that is not dynamically stable will find that some agents will choose not to follow her recommendation when they become available to match.
%\newline\indent
%%
%I consider

Consider a matchmaker who recommends a matching \matching\ for the agents in \Gterminal\ and whose objective is that the agents follow her recommendation.  In each period $t$, 
agents arrive according to \Gterminal. Upon arrival, each agent observes the agents who have arrived through period $t$ and whether they have already matched. Upon arrival, each agent chooses whether to reveal that they have arrived. Given a sequence of reported arrivals through period $t$, \economyrbt, agents are matched according to $\matching(\economyrbt)$.

In what follows, I explain the assumptions needed to state \autoref{prop:stoch-queueing}. The result requires that the distribution over \economyterminal\ satisfies an independence condition. Say $\Gterminal$ satisfies \emph{exchangeability} if $\Gterminal(\economyr_1,\dots,\economyr_\terminal)=\Gterminal(\tilde{\economyr}_1,\dots,\tilde{\economyr}_\terminal)$ whenever $\arrivalauptoterminal(\economyrterminal)=\arrivalauptoterminal(\tilde{\economyr}^\terminal)$ and $\arrivalbuptoterminal(\economyrterminal)=\arrivalbuptoterminal(\tilde{\economyr}^\terminal)$. That is, the probability of two realizations \economyrterminal\ and $\tilde{\economyr}^\terminal$ only depends on the arrivals they induce through period \terminal\ (and not on how they came to be.)
%These behaviors receive the names of balking and reneging in queueing and are, generally, considered undesirable.
Finally, say that a matching \matching\ for economy \Gterminal\ is not dynamically stable for arriving agents if whenever there exists a realization \economyrt\ and an agent $\agent\in\cala(\matchinguptot,\economyrt)\cup\calb(\matchinguptot,\economyrt)$  who would benefit from waiting to be matched, then there exists an agent $\agent^\prime$ who would benefit from waiting to be matched and arrived in period $t$.
%Conditions \red{itm:ds2}-\ref{itm:ds3} 
We are now ready to state \autoref{prop:stoch-queueing}:
%\autoref{theorem:stoch-queueing} below shows that dynamic stability is necessary to guarantee that, given that others are truthfully reporting their arrivals, truthtelling is a best response and there is no realization \economyrt\ such that an agent 
\begin{prop}\label{prop:stoch-queueing}
Suppose \Gterminal\ has full support and satisfies exchangeability. Let $\matching\in\Matchingsfinal$ be pairwise stable and individually rational and suppose that \matching\ is not dynamically stable for arriving agents.  Then, there exists a period $t$, a realization $\economyrt=(E^{t-1},A_t,B_t)$, and $k\in A_t\cup B_t$ such that
\begin{enumerate}[leftmargin=*]
\item $\matching(\economyrt)(\agent)\neq \agent$, and 
\item $\agent$ prefers $\matching(\economyrt\setminus\{\agent\},\cdot)$ to $\matching(\economyrt)$. Formally, if $\agent=\agenta$, then $\economyrt\setminus\{\agenta\}=(\economyr^{t-1},A_t\setminus\{\agenta\},B_t)$ and 
\small
\begin{align}\label{eq:regreta}
u(\agenta,\matching(\economyrt)(a))<\mathbb{E}_{G(\cdot|\economyrt)}[\delta_\agenta^{t_\matching(\economyrt\setminus\{\agenta\},\economyr_{t+1}\cup\{\agenta\},\cdot)}u(\agenta,\matching(\economyrt\setminus\{\agenta\},\economyr_{t+1}\cup\{\agenta\},\cdot)(\agenta))],
\end{align}\normalsize
and similarly if $k=b$, then $\economyrt\setminus\{\agentb\}=(\economyr^{t-1},A_t,B_t\setminus\{\agentb\})$ and 
\small
\begin{align}\label{eq:regretb}
v(\matching(\economyrt)(b),b)<\mathbb{E}_{G(\cdot|\economyrt)}[\delta_b^{t_\matching(\economyrt\setminus\{b\},\economyr_{t+1}\cup\{b\},\cdot)}v(\matching(\economyrt\setminus\{\agentb\},\economyr_{t+1}\cup\{b\},\cdot)(b),b)].
\end{align}\normalsize
\end{enumerate}
\end{prop}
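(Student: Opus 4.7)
The plan is to unpack the hypothesis that \matching\ is not dynamically stable for arriving agents, obtaining a period $t$, realization $\economyrt=(E^{t-1},A_t,B_t)$, and an arriving agent $k\in A_t\cup B_t$ who benefits from waiting---every conjecture $\matchingb\in M_\ds(k,\matching,\economyrt)$ satisfies $U(k,\matchingb,\economyrt)>U(k,\matching,\economyrt)$. With $k=\agenta\in A_t$ (the side-$B$ case being symmetric), the strategy is to exhibit a specific conjecture $\matchingb^*$ that mirrors \matching's behavior on the realization in which $\agenta$'s arrival is delayed to period $t+1$, so that the strict improvement from $\matchingb^*$ translates into a strict improvement from delaying.

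I would define $\matchingb^*$ by
\[\matchingb^*(\economyrt,\economyr_{t+1},\dots,\economyr_\terminal):=\matching(\economyr^{t-1},A_t\setminus\{\agenta\},B_t,A_{t+1}\cup\{\agenta\},B_{t+1},\dots,A_\terminal,B_\terminal)\]
on realizations following $\economyrt$, and set $\matchingb^*=\matching$ elsewhere. Verifying $\matchingb^*\in M_\ds(\agenta,\matching,\economyrt)$ amounts to checking three conditions: (i)~$\matchingb^*(\economyrt)(\agenta)=\agenta$, which is immediate since $\agenta\notin A_t\setminus\{\agenta\}$; (iii)~$\matchingb^*(\economyrt)$ is stable among agents who match in period $t$, following from the pairwise stability and individual rationality of \matching\ on the shifted period-$t$ economy; and (ii)~the continuation of $\matchingb^*$ lies in $\ds_{\terminal-t}$, which is the delicate step discussed below. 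Full support of \Gterminal\ is used to guarantee that the shifted realizations carry positive probability and that the relevant expected payoffs are well-defined.

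Once $\matchingb^*\in M_\ds(\agenta,\matching,\economyrt)$ is established, the defining equation of $\matchingb^*$ shows directly that $U(\agenta,\matchingb^*,\economyrt)$ equals the right-hand side of~\eqref{eq:regreta}. Combining with the strict inequality $U(\agenta,\matching,\economyrt)<U(\agenta,\matchingb^*,\economyrt)$ and the bound $u(\agenta,\matching(\economyrt)(\agenta))\le U(\agenta,\matching,\economyrt)$ yields item~2. For item~1, I argue by contradiction: if $\matching(\economyrt)(\agenta)=\agenta$, then a parallel construction using \matching\ itself (with no time-shift) produces a valid element of $M_\ds(\agenta,\matching,\economyrt)$ with payoff exactly $U(\agenta,\matching,\economyrt)$, contradicting that every valid conjecture strictly dominates \matching\ for $\agenta$. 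Hence $\matching(\economyrt)(\agenta)\neq\agenta$.

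The principal obstacle in both constructions is verifying property~(ii): neither the continuation of $\matchingb^*$ nor that of the candidate \matching\ automatically lies in $\ds_{\terminal-t}$, since \matching\ itself is not assumed dynamically stable. My plan is to choose $(t,\economyrt,\agenta)$ carefully---for instance by taking $t$ maximal among periods admitting an arriving-agent violation---so that \matching\ behaves dynamically stably on the relevant (shifted) realizations from $t+1$ onward. Exchangeability of \Gterminal\ is then used to transfer dynamic stability across the time-shift of $\agenta$'s arrival, and \autoref{theorem:stoch-existence} supplies dynamically stable continuations where \matching\ falls short. This is the analytically involved step and likely requires an induction on $\terminal-t$.
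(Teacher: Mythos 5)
Your proposal is correct and follows essentially the same route as the paper's proof: select the latest period at which a waiting violation occurs (so that, by maximality together with pairwise stability and individual rationality, the continuation of $\matching$ from $t+1$ onward is dynamically stable), build the conjecture by shifting $\agenta$'s arrival to period $t+1$, and invoke exchangeability to transfer dynamic stability of the continuation from the shifted realization back to the original one. Your explicit contradiction argument for item~1 is only a minor elaboration of a step the paper leaves implicit, and no further induction on $\terminal-t$ is actually needed once $t$ is chosen maximal.
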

To fix ideas, suppose $k=a$. Then, \autoref{eq:regreta} shows that $a$ would improve on his outcome by waiting until period $t+1$ to report that they have arrived.\footnote{This does not imply, however, that delaying by one period is the best that $a$ can do.} That is, when \matching\ is not dynamically stable, $a$ has an incentive to lie about their availability to match in period $t$. 
%
%\emph{balk}, i.e., leave without joining the economy. 
When everyone else reports their arrivals truthfully,  \agenta\ induces in period $t$ the matching $\matching(\economyrt\setminus\{\agenta\})$ by lying about their arrival; next period, when \agenta\ joins the economy, the matching is then $\matching(\economyrt\setminus\{\agenta\},\economyr_{t+1}\cup\{\agenta\},\cdot)$ (note that \agenta\ only assigns positive probability to realizations $\economyr_{t+1}$ such that $\agenta\notin\economyr_{t+1}$.) 

 When $\terminal=2$, \autoref{prop:stoch-queueing} follows straight from the definition of dynamic stability. When $\terminal\geq 3$, however, this is not the case. The issue is that \matching\ may fail to be dynamically stable for arriving agents at \economyrt,\ and yet \agenta\ reveals his arrival truthfully because $\matching(\economyrt\setminus\{\agenta\},\cdot)$ does not entail a continuation matching that is dynamically stable.\footnote{Since \matching\ is pairwise stable and individually rational, $\matching(\economyrt\setminus\{a\})$ is stable amongst those who match in period $t$.} In this case, $\matching(\economyrt\setminus\{\agenta\},\cdot)$ could be worse than any \emph{reasonable} conjecture \agenta\ may have under the assumption of dynamic stability. The key is then to find the longest realization \economyrt\ for which dynamic stability fails. This ensures that $\matching(\economyrt\setminus\{\agenta\},\cdot)$ does pick continuation matchings that satisfy dynamic stability for $\Gterminal(\cdot|\economyrt\setminus\{a\},\economyr_{t+1}\cup\{a\})$ from period $t+1$ onward. The assumption on $\Gterminal$ ensures that these continuations are also dynamically stable when using $\Gterminal(\cdot|\economyrt,\economyr_{t+1})$.

\autoref{prop:stoch-queueing} highlights that in a dynamic economy voluntary participation involves not only the decision of whether to participate, but also \emph{when}. \autoref{prop:ds-sp} and \autoref{ex:college-admissions} below further illustrate this point. While in static matching markets DA provides the agents on the proposing side with the correct incentives to report their preferences, \autoref{prop:ds-sp} shows that when $\terminal=2$, a natural extension of DA to the dynamic economy provides the agents on the proposing side with the correct incentives to participate as soon as they arrive. 
%As I describe after \autoref{ex:college-admissions}, the proof of \autoref{prop:ds-sp} constructs the matching that the matchmaker uses to dissuade agents on side $A$ from objecting to match according to the dynamic version of DA. %It turns out that this matching satisfies the conditions of \hyperlink{queuebis}{Proposition 2$^*$}.
%
%the matchmaker can threaten agents on side $A$ with should they object to match according to the dynamic version of DA.

To state \autoref{prop:ds-sp}, let \mada\ denote the matching obtained by running the following dynamic version of DA: the DA algorithm is run with agents in $\arrivala_1\cup\arrivala_2$ making proposals (using their intertemporal preferences) to agents in 
$\arrivalb_1\cup\arrivalb_2$ (who choose between proposals using their intertemporal preferences). The following holds:
\begin{prop}\label{prop:ds-sp}
Let $\terminal=2$ and let $\Gtwo$ be as above. Then, for all $a\in A_1$, there exists $\matchingb\in M_\ds(a,\mada,\economyr^1)$ such that 
\[U(\agenta,\mada,\economyr^1)\geq U(\agenta,\matchingb,\economyr^1).\]
\end{prop}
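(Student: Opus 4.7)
The plan is to leverage the strategy-proofness of deferred acceptance for the proposing side. Since \mada\ is the output of (static) DA on the combined market $(A_1 \cup A_2, B_1 \cup B_2)$ with intertemporal preferences, any manipulation by agent $a \in A_1$---including a ``delay'' that amounts to declaring every period-$1$ partner unacceptable---cannot strictly raise $a$'s true utility. I will construct the conjecture \matchingb\ as the DA output when $a$ performs such a truncation and then verify that $\matchingb \in M_\ds(a,\mada,\economyr^1)$.

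Formally, let \matchingb\ be the matching produced by dynamic DA when $a$ reports every $b \in B_1$ as unacceptable (leaving $a$'s ranking over $B_2$ unchanged) and all other agents report truthfully. Under \matchingb, agent $a$ only proposes to partners in $B_2$, who are present only from period $2$ onward, so $\matchingb(\economyr^1)(a) = a$ and condition (i) of $M_\ds$ holds. For condition (iii), take any pairs $(a',b')$ and $(a'',b'')$ matched in period $1$ under \matchingb; all four agents lie in $A_1 \cup B_1$, so their intertemporal utilities over these partners coincide (up to a common positive scaling) with their static period-$1$ utilities. Any static block among the period-$1$ matchees would therefore be a dynamic block in the full DA output, contradicting its stability; individual rationality in the period-$1$ sub-market is immediate since DA never matches a proposer with a partner ranked below remaining single.

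For condition (ii), I would show that the period-$2$ portion of \matchingb\ is statically stable in the continuation economy induced by \matchingb\ at the end of period $1$; by the $\terminal = 1$ special case discussed after \autoref{definition:ds}, this is equivalent to dynamic stability in that continuation. The argument parallels the previous paragraph: any static block among agents matched in period $2$ under \matchingb\ lifts to a dynamic block in the combined market, because within a common matching period intertemporal preferences rank partners identically (up to a common positive factor) to the static ones. With conditions (i)--(iii) verified, \matchingb\ lies in $M_\ds(a,\mada,\economyr^1)$, and strategy-proofness of DA for the proposing side delivers $U(a,\mada,\economyr^1) \geq U(a,\matchingb,\economyr^1)$. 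The principal delicacy is the inheritance of these static sub-stabilities from global stability of the dynamic DA output, which rests entirely on the within-period alignment of static and intertemporal preferences; once this is recognized, the result reduces to a direct application of the classical strategy-proofness theorem.
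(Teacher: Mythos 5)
Your final step---strategy-proofness of proposer-side DA applied to a unilateral truncation by $a$---is sound, but the conjecture you construct is not in general an element of $M_\ds(a,\mada,\economyr^1)$, and this is where the argument breaks. Your $\matchingb$ is the DA output when $a$ declares every $b\in B_1$ unacceptable \emph{forever}. Condition (ii) of \autoref{eq:conjectures}, however, requires the period-$2$ part of $\matchingb$ to be statically stable in the continuation economy with respect to the agents' \emph{true} preferences, and that continuation economy contains both $a$ and any $B_1$ agents left unmatched in period $1$. Your lifting argument (``a static period-$2$ block lifts to a block of the combined-market DA output'') is valid only for pairs of truthful reporters; it fails for pairs involving $a$, because the combined-market output is stable only with respect to the \emph{reported} preferences, under which $a$ has excised $B_1$. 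Concretely, take $A_1=\{a\}$, $B_1=\{b'\}$, $A_2=\emptyset$, $B_2=\{b_2\}$ with $u(a,b')>u(a,b_2)>0$ and $v(a,b'),v(a,b_2)>0$. Your $\matchingb$ matches $a$ with $b_2$ in period $2$ and leaves $b'$ single, so $(a,b')$ is a static block of the continuation matching and condition (ii) fails.

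The paper avoids this by constructing $\matchingb$ as the DA outcome of the economy $(A_1\setminus\{a\},B_1,A_2\cup\{a\},B_2)$: agent $a$ keeps all of $B_1\cup B_2$ on their list (ordered by undiscounted $u(a,\cdot)$, since $a$ now ``arrives'' in period $2$), while each $b\in B_1$ evaluates $a$'s offer at $\delta_b v(a,b)$ against offers from $A_1$. This yields a continuation matching that is stable with respect to $a$'s true preferences as well, so membership in $M_\ds(a,\mada,\economyr^1)$ goes through. The price is that this is no longer a unilateral misreport by $a$---the $B_1$ agents' rankings of $a$ change too---so strategy-proofness cannot be invoked off the shelf; the paper instead derives the contradiction from Hwang's Blocking Lemma applied to the set $A^+$ of side-$A$ agents who prefer $\matchingb$ to $\mada$. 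If you want to keep your strategy-proofness route, you would need a single report for $a$ that simultaneously keeps $a$ unmatched in period $1$ and keeps $B_1$ available to $a$ in period $2$; no unilateral ROL in the combined market does both, because the date at which a pair matches is pinned down by the partners' arrival dates rather than by the reported list.
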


\autoref{prop:ds-sp} states that, in a two-period economy, agents in $\arrivala_1$ cannot improve on the outcome of \mada\ by waiting to be matched. To see this, fix an agent $\agenta\in\arrivala_1$. Consider the matching $\matchingb^{A-DA}$ obtained by running the dynamic version of side-\arrivala\ DA in the economy $(\arrivala_1\setminus\{\agenta\},\arrivalb_1,\arrivala_2\cup\{\agenta\},\arrivalb_2)$. That is, agent \agenta\ makes proposals \emph{as if} \agenta\ arrived in \periodtwo, and agents $b\in\arrivalb_1$ evaluate the payoff from matching with agent \agenta\ \emph{as if} \agenta\ arrived in \periodtwo. I show that $\matchingb^{A-DA}$ is an element of $M_\ds(\agenta,\cdot)$. If \agenta\ could improve on \mada\ by waiting to be matched, then \agenta\ strictly prefers $\matchingb^{A-DA}$ to \mada. Since DA is strategy-proof in static matching markets,
%It turns out that strategy-proofness of side-\arrivala\ DA in static matching markets implies that 
it cannot be that \agenta\ prefers $\matchingb^{A-DA}$ to \mada. Thus, $\matchingb^{A-DA}$ can be chosen as the matching \agenta\ expects would arise if \agenta\ blocked matching \mada\ in \periodone.

Instead, the matching obtained by running DA  with agents in $\arrivalb_1\cup\arrivalb_2$ making proposals (using their intertemporal preferences) to agents in $\arrivala_1\cup\arrivala_2$ may be improved upon agents on side $\arrivala$ waiting to be matched, as the next example shows:
\begin{ex}\label{ex:college-admissions} 
%I now modify the above example to illustrate one final point: not all core matchings are dynamically stable.\footnote{\autoref{appendix:examples} shows an economy where the two core matchings that correspond to the outcomes of the deferred acceptance algorithm are not dynamically stable.} 
Consider the following variant of \autoref{example:main-example}. \kuhn\ and \gale\ continue to arrive at \periodone,\ while \erdos\ now arrives at \periodtwo.\ Arrivals on side $\arrivalb$ are as before, except that now \renyi\ no longer arrives. That is, $\arrivala_1=\{\kuhn,\gale\}$, $\arrivalb_1=\{\tucker\}$, $\arrivala_2=\{\erdos\}$ and $\arrivalb_2=\{\shapley,\nash\}$. Preferences are given by:
\small
\[\hspace{-1cm}\left.\begin{array}{lllll}
\erdos\ :&\nash\ &&&\\
\kuhn\ :&(\shapley,1)&(\nash,0)&(\tucker,0)&(\nash,1)\\
\gale\ :&(\tucker,0)&(\shapley,1)&&
%\\
%&&&&
\end{array}
\right.\hspace{0.25cm}
\left.\begin{array}{llll}%&&&\\
\tucker: &\kuhn\ &\gale\ &\\\shapley:&\gale\ &\kuhn\ &\\
\nash:&\kuhn\ &\erdos&
\end{array}
\right.\]
\normalsize

\autoref{fig:ex-addendum} illustrates the matchings obtained by running  DA with side $\arrivalb$ and side $\arrivala$ proposing, respectively:
%\newpage

\begin{figure}[h!]\vspace{-0.25cm}
\[\mbda=\left(\begin{array}{lcl}
\kuhn\ &\_\_&\tucker\ \\
\hline
\gale\ &\_\_&\shapley\ \\
\erdos\ &\_\_& \nash\
\end{array}\right)\hspace{1.5cm}\mada=\left(\begin{array}{lcl}
\gale\ &\_\_&\tucker\ \\
\hline
\kuhn\ &\_\_&\shapley\ \\
\erdos\ &\_\_&\nash\
\end{array}\right)\]\vspace{-0.25cm}
\caption{Two matchings; the one on the left is not dynamically stable}\label{fig:ex-addendum}
\end{figure}

The matching \mbda\ is not dynamically stable: \kuhn\ can guarantee to be matched with \shapley\ by remaining unmatched in \periodone. To see this, note that \gale\ also needs to match in \periodtwo\ in order for \kuhn\ not to match with \shapley\ in \periodtwo. Hence, \kuhn\ needs to conjecture that \emph{everyone} matches in \periodtwo\ when he waits to be matched. However, when this is the case, all stable matchings match \kuhn\ with \shapley. Thus, if the matchmaker were to suggest \mbda, \kuhn\ would not follow the matchmaker's recommendation in \periodone.
%
%However, matching $\matching^L$ is not dynamically stable: \kuhn\ can guarantee to be matched with Tucker by remaining unmatched in \periodone.\ To see this, note that Gale needs to also match in \periodtwo\ in order for Kuhn not to match with Tucker in \periodtwo.\  Hence, Kuhn needs to conjecture that everyone matches in \periodtwo\ when he waits to be matched. However, the worst stable matching in that case matches him with Tucker. Thus, Kuhn can improve on the core matching $\matching^L$ by waiting to be matched. Since the core assumes that Kuhn's value from remaining unmatched is his payoff from remaining single, it fails to capture that in the dynamic economy Shapley is not acceptable to Kuhn.
\end{ex}
\autoref{prop:ds-sp} and \autoref{ex:college-admissions} echo the static matching markets results that side $\arrivala$-proposing DA is strategy-proof for agents on side $\arrivala$, while side $\arrivalb$-proposing DA is not. Indeed, the proof of \autoref{prop:ds-sp} is intimately related to the strategy-proofness of the DA algorithm for the proposing side. 
However, the analogy between the result in \autoref{prop:ds-sp} and the strategy-proofness of deferred acceptance for the proposing side is incomplete. While the latter refers to agents' incentives to truthfully report their preferences \emph{conditional} on participating in the mechanism, the former refers to agents' incentives to timely participate in the mechanism when their preferences are commonly known. In other words, in a dynamic matching market agents can strategically decide when to participate and what preferences to report. This is the focus of \autoref{sec:sequential}.%Nevertheless, \autoref{appendix:truthful} describes how \autoref{prop:ds-sp} can be used to construct a mechanism that provides agents on side $A$ with the correct incentives to report both their availability to match and their preferences when side $B$ is not strategic. 
\subsection{Sequential assignment problems}\label{sec:sequential}
\autoref{sec:sequential} studies the issues of timely participation and preference manipulation in dynamic matching markets within the context of sequential assignment problems. In sequential assignment, matchings are performed in multiple stages via a sequence of \emph{spot} mechanisms, which take the current set of available agents and their rankings and outputs a matching as a function of the reported preferences, but do not condition on future matching opportunities. 

Sequential assignment covers important applications like school choice and college admissions, where both students and school seats become available over time. In school choice, admissions to public, charter, and private schools often occur at different times and through different mechanisms, schools update the number of seats available as the beginning of the school year approaches, and new students join the public school system during the summer as families move across district and/or state boundaries. In many districts in the US, this leads to \emph{aftermarkets} (\citealp{pathak2016really}):
%These dynamics are reflected in the different ways seats are assigned. As an example, consider aftermarkets in school choice (see \cite{pathak2016really}). 
Public school districts run their matching algorithms several times to accommodate newly incoming students, newly available seats, and also the timing of decisions of private and charter schools. While in the US private schools do not participate in the centralized matching procedure, they do in Turkey and some localities in Sweden.\footnote{\cite{andersson2018sequential} provide an excellent description of different school districts (Boston, New York, Turkey, Sweden, amongst others) and their sequential algorithms.} 
For instance, seats in public and private schools are assigned via a two-stage procedure in Turkey. Since 2015, this two-stage procedure operates as follows: In the first stage, students are assigned based on test scores via serial dictatorship to private schools. In the second stage, unmatched students from the first stage are assigned based on test scores via serial dictatorship to public schools.\footnote{Before 2015, public school seats were assigned first and even if they received a match in the first round, students could participate in the second round, where private school seats were assigned.} 
Another example comes from college admissions in Germany ( \citealp{westkamp2013analysis}): In the first stage students with high grades and/or high wait times are assigned through the Boston mechanism to a subset of the available seats, and then the remaining students and seats are matched through college-proposing DA.

The use of spot mechanisms presents a well-known problem in dynamic environments: implementing an allocation that has good dynamic properties may require using information beyond that which is available in the current period (\citealp{parkes2007online}). \autoref{prop:stoch-queueing} identifies dynamically stable matchings as those that induce the right incentives to participate. As \autoref{ex:college-admissions-bis} below illustrates, spot mechanisms do not necessarily implement dynamically stable matchings:
%\addtocounter{ex}{-1}
\begin{ex}\label{ex:college-admissions-bis}
Consider again the economy in \autoref{ex:college-admissions}. 
I reproduce below the matchings obtained using the dynamic version of DA:

\begin{figure}[h!]\vspace{-0.25cm}
\[\mbda=\left(\begin{array}{lcl}
\kuhn\ &\_\_&\tucker\ \\
\hline
\gale\ &\_\_&\shapley\ \\
\erdos\ &\_\_& \nash\
\end{array}\right)\hspace{1.5cm}\mada=\left(\begin{array}{lcl}
\gale\ &\_\_&\tucker\ \\
\hline
\kuhn\ &\_\_&\shapley\ \\
\erdos\ &\_\_&\text{Nash}
\end{array}\right)\]\vspace{-0.25cm}
\caption{The matchings in \autoref{fig:ex-addendum}}\label{fig:ex-spot}
\end{figure}

Suppose instead that one runs DA (with either side proposing) among the agents in \periodone, and then one runs DA among the remaining unmatched agents and the new arrivals in \periodtwo. In this example, the resulting matching would be \mbda.
%
%
%In this example, the matching \mbda\ in \autoref{fig:ex-spot} is not only the result of running the dynamic version of side-\arrivalb\ DA, but it is also the outcome of running deferred acceptance in each period (with either side proposing). 
Furthermore, in this example, \mbda\ would also be the outcome of running the Boston mechanism in \periodone, followed by side-\arrivalb\ DA in \periodtwo, as in German college admissions. Since \mbda\ is not dynamically stable, it follows that neither of these sequences of spot mechanisms  can 
%these spot mechanisms are not able to 
produce a dynamically stable matching for this economy.

Instead, the analysis in \autoref{ex:college-admissions} implies that the matching \mada\ is dynamically stable and can be achieved via the dynamic version of side-\arrivala\ DA. To understand why \mada\ cannot be achieved using either sequence of spot mechanisms in the previous paragraph, note the following.
%
% the difference between the dynamic version of DA used in \autoref{prop:stoch-queueing} and the matching that results from the spot mechanisms above, note the following. 
 The period-$1$ matching under \mada\ matches \gale\ with \tucker, leaving \kuhn\ unmatched. This poses no challenges to stability in the dynamic economy: \kuhn\ prefers to remain unmatched in \periodone\ so as to match with \shapley\ in \periodtwo. However, this matching is not stable in \periodone\ relative to the agents' true preferences: \tucker\ prefers \kuhn\ over \gale, and \kuhn\ prefers \tucker\ over remaining unmatched. 
%Indeed, the matching \mbda\ in \autoref{fig:ex-spot} is not only the result of running side-$B$ deferred acceptance using the intertemporal rankings of agents in $\arrivala_1\cup\arrivalb_1\cup\arrivala_2\cup\arrivalb_2$, but it is also the outcome of running deferred acceptance in each period (with either side proposing) using the static rankings. Furthermore, in this example, \mbda\ would also be the outcome of running the Boston mechanism in \periodone, followed by side-$B$ DA in \periodtwo, as in college admissions in Germany.

The results in \autoref{sec:voluntary} imply that if \mbda\ is the matching that is to be implemented by the sequence of spot mechanisms, \kuhn\ will not find it optimal to participate in \periodone. In sequential assignment problems, \kuhn\ not only chooses when to participate, but also what preferences to report. In this case, instead of not participating in the first stage, \kuhn\ could submit a ranking that only lists \shapley\ in \periodone. Doing so guarantees that he will remain unmatched in \periodone, and be matched with \shapley\ in \periodtwo.
%Indeed, if one considers the game in which in each period agents not only choose whether to participate in the spot mechanism, but also what preferences to submit, \kuhn\ could achieve the same outcome by truncating his ranking over \sideb, so that in \periodone, he only lists \shapley.  
That is, even if the period-$1$ matching is determined by side-\arrivala\ DA, \kuhn\ is better off by misreporting his preferences.

\kuhn's deviation in the previous paragraph is consistent with the recommendation received by German students in the college admissions procedure: they should 
truncate their preferences in the first round if they wish to be considered for the next round (see \citealp{westkamp2013analysis}). Indeed, \cite{braun2010telling} report that students with high grades truncate their preferences substantially since given their grades they have good chances in the second stage, where there is more choice. As the example illustrates, agents may benefit from truncating their preferences even if a stable and strategy-proof mechanism is used.
\end{ex}
Two lessons follow from \autoref{ex:college-admissions-bis}. First, existing mechanisms used in sequential assignment problems fail to deliver dynamically stable matchings. Second, whenever this is the case, agents' incentives either to participate or to truthfully report their preferences might be hindered. 

\autoref{theorem:ds-implementation} shows that agents' forward-looking behavior is enough to overcome the inability of spot mechanisms to produce dynamically stable matchings. Indeed, \autoref{theorem:ds-implementation} shows that only dynamically stable matchings can arise as the outcomes of pure strategy SPNE of the game induced by a sequence of spot mechanisms that implement stable matchings. However, as the proof of \autoref{theorem:ds-implementation} and \autoref{ex:college-admissions-bis} above illustrate, achieving dynamically stable matchings maybe at odds with truthful behavior: Agents may truncate their preferences so that they are never matched to a matching partner that is worse than what they would obtain by waiting to be matched.

To state \autoref{theorem:ds-implementation}, I formally define a spot matching mechanism and, given a sequence of spot mechanisms, the non-cooperative game for \Gterminal\ induced by this sequence. 

A spot stable mechanism, denoted in what follows by \spot, is a mapping that takes two sets of agents, one on each side, and their reported preferences and outputs a matching that is stable given the reported preferences. Formally, for an agent $\agenta$ a rank ordered list (henceforth, ROL) is a ranking over $\sideb\cup\{\agenta\}$. Similarly, for an agent \agentb\ a ROL is a ranking over $\sidea\cup\{\agentb\}$. Let $\rol_\agent$ denote agent \agent's ROL. Thus, a spot mechanism, \spot, takes a tuple $(\arrivalab,\arrivalbb,\overline{\succ})$ and outputs $\spot(\arrivalab,\arrivalbb,\profilerol)\in\stable(\arrivalab,\arrivalbb,\profilerol)$, where $\arrivalab\subseteq\sidea,\arrivalbb\subseteq\sideb$ and $\profilerol=(\rol_\agent)_{\agent\in\arrivalab\cup\arrivalbb}$ is a profile of ROLs. Note that, in a slight abuse of notation, I index the set of stable matchings, \stable, both by the set of agents and their reported preferences.

A sequence of spot mechanisms, $\{\spot_t\}_{t=1}^\terminal$ induces the following extensive form game. In each period $t$, the remaining unmatched agents and the new arriving agents observe who has matched through period $t$. They decide simultaneously whether to participate in the mechanism, and if so, what ROLs to submit. The decision to participate and the submitted ROLs are not observable. Given the set of participants and their ROLs, the spot mechanism $\spot_t$ outputs a matching.  Matched agents and their partners exit.\footnote{Thus, the game mimics how the mechanisms in the National Resident Matching Program, college admissions in Germany, school choice in Turkey, and in some localities in Sweden operate: only unmatched agents participate in the upcoming rounds. Prior to 2015, when matched agents were allowed to participate in the second round, the government in Turkey had to supplement the algorithm with five more rounds (see \cite{andersson2018sequential}). } The game proceeds to period $t+1$.

\autoref{theorem:ds-implementation} considers two variants of this game. In the first variant, denoted by \gameda, the spot mechanisms coincide with side $\arrivala$ DA. Moreover, the agents on side $\arrivalb$ are non-strategic: they automatically join when they arrive and submit their true rankings. \autoref{theorem:ds-implementation} studies which matchings can result as the outcome of SPNE. They satisfy a property denoted \emph{side-$\arrivala$ dynamic stability}: the matching must be pairwise stable and no agent on side \arrivala\ finds it optimal to wait to be matched.\footnote{Formally, side-$A$ dynamic stability is the solution concept defined recursively by conditions \ref{itm:ds2} and \ref{itm:ds1} in \autoref{definition:ds}. Note that this implies that when agents on side $\arrivala$ consider waiting to be matched, they anticipate that the continuation matching is side $A$-dynamically stable.} In the second variant, denoted by \gamestable, the spot mechanisms can be any stable matching algorithm and both sides are strategic. \autoref{theorem:ds-implementation} studies which matchings can result as the outcome of a refinement of SPNE, \emph{pairwise SPNE}, which allows pairs of agents who are present at the same time to jointly deviate.\footnote{
This refinement is common in the literature that studies the non-cooperative implementation of stable matchings  (see, for instance, \cite{ma1995stable,shin1996mechanism,sonmez1997games}) to address an observation in \cite{alcalde1996implementation} that Nash equilibrium is not enough to guarantee that the equilibria of the non-cooperative game induced by a stable matching algorithm coincides with the set of stable matchings.}

We are now ready to state \autoref{theorem:ds-implementation}:
\begin{theorem}\label{theorem:ds-implementation}
Only side-$\arrivala$ dynamically stable matchings for \Gterminal\ can be the outcome of pure strategy SPNE in \gameda. Similarly, only dynamically stable matchings for \Gterminal\ can be the outcome of pure strategy pairwise SPNE in \gamestable. 
 \end{theorem}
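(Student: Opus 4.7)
The plan is to induct on the horizon \terminal, mirroring the recursive structure of dynamic stability itself. The base case $\terminal=1$, in which dynamic stability collapses to static stability, follows from classical implementation results: for \gameda, strategy-proofness of side-$\arrivala$ DA for proposers together with a truthful side $\arrivalb$ pins every SPNE outcome at the side-$\arrivala$-optimal stable matching; for \gamestable, the pairwise SPNE implementation of stable matchings by any stable mechanism (see \cite{alcalde1996implementation}) applies. Assume the result for horizon $\terminal-1$ and fix an SPNE (resp.\ pairwise SPNE) \strat\ of the $\terminal$-period game, with on-path matching \matching. At every realization \economyrt, the restriction of \strat\ to continuation play is an SPNE (resp.\ pairwise SPNE) of the $(\terminal-t)$-period continuation game on $G_{\terminal-t}(\matching^t,\economyrt)$, so by the inductive hypothesis the tail $(\matching(\economyr^{t+\dateindex}))_{\dateindex=1}^{\terminal-t}$ is (side-$\arrivala$) dynamically stable for the continuation economy.

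It then suffices to verify conditions \ref{itm:ds1} and \ref{itm:ds2}/\ref{itm:ds3} for \matching\ at each \economyrt. For \ref{itm:ds2}/\ref{itm:ds3}, I take the candidate conjecture for \agent\ to be the matching \matchingb\ induced when \agent\ deviates to submit an empty ROL at period $t$ (equivalently, does not participate) and reverts to $\strat_\agent$ thereafter. Then $\matchingb(\economyrt)(\agent)=\agent$ by construction; the continuation of \matchingb\ beyond $t$ is SPNE-induced on $G_{\terminal-t}(\matchingb^t,\economyrt)$ and thus dynamically stable by the inductive hypothesis; and $\matchingb(\economyrt)$ satisfies \autoref{definition:ds-static-stability} because every player's best response at the spot is to truncate---not permute---her true ranking above the relevant continuation value, so any true-preference block among the period-$t$ matched agents would also block the spot output under submitted reports, contradicting stability of that output. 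Hence $\matchingb\in M_\ds(\agent,\matching,\economyrt)$, and SPNE optimality for \agent\ delivers $U(\agent,\matching,\economyrt)\geq U(\agent,\matchingb,\economyrt)$ (and symmetrically in $V$ for side-$\arrivalb$ agents in \gamestable).

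For \ref{itm:ds1}, suppose \pair\ blocks \matching\ at \economyrt\ with both agents available at $t$. In \gamestable\ the pair jointly deviates at period $t$ to submit singleton ROLs containing only each other; any stable mechanism must match them to each other (otherwise \pair\ would block the spot output relative to the submitted reports), which yields payoffs $\genericu$ and $\genericv$ that strictly dominate their SPNE payoffs, contradicting pairwise SPNE. In \gameda, strategy-proofness of $\arrivala$-DA at the spot level pins \agenta's equilibrium ROL to the truncation of her true ranking at the continuation payoff $V^c_\agenta$ of going unmatched; since $\genericu>U(\agenta,\matching,\economyrt)\geq V^c_\agenta$, \agentb\ appears on \agenta's submitted ROL strictly above her equilibrium match. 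Hence \agenta\ proposes to \agentb\ in the period-$t$ DA run, and since \agentb\ retains the best acceptable proposal she ever receives, \agentb\ must be matched in period $t$ under \matching\ to some $\agenta^\star$ with $v(\agenta^\star,\agentb)\geq \genericv>V(\agentb,\matching,\economyrt)=v(\agenta^\star,\agentb)$, a contradiction.

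The principal obstacle is the \ref{itm:ds1} argument for \gameda. One must first establish that in every subgame each side-$\arrivala$ proposer's equilibrium ROL coincides with her truthful truncation at her continuation value---a dynamic strategy-proofness statement for the spot mechanism given the others' fixed continuation strategies---and then chase the DA procedure through \agenta's hypothetical proposal to \agentb\ to extract the contradiction. A secondary difficulty, present in both games, is verifying condition (iii) of $M_\ds$ for the non-participation deviation; this requires lifting the spot mechanism's stability under reports to stability among matched agents under true preferences, again leveraging the truncation structure of equilibrium reports.
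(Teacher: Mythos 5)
Your overall architecture matches the paper's: reduce to strategies that submit either empty ROLs or truncations of the true ranking, use the ``deviate to an empty ROL'' matching as the element of $M_\ds(\agent,\matching,\economyrt)$ that witnesses conditions \ref{itm:ds2}/\ref{itm:ds3}, verify condition (iii) of $M_\ds$ by noting that truncations preserve the order of listed partners so a true-preference block among period-$t$ matched agents would also block the spot output under the reports, and obtain \ref{itm:ds1} in \gamestable\ via a joint deviation to singleton lists. The paper organizes this as backward induction within a fixed game (picking the largest $t$ at which \ref{itm:ds2}/\ref{itm:ds3} fail) rather than induction on \terminal, but that is cosmetic.

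The genuine gap is exactly where you flag it: the \ref{itm:ds1} argument for \gameda. You assert that ``strategy-proofness of $\arrivala$-DA at the spot level pins \agenta's equilibrium ROL to the truncation of her true ranking at the continuation payoff.'' Strategy-proofness only says the truthful report is \emph{a} best response in the static revelation game; it does not determine which truncation an agent submits in equilibrium, and nothing in the reduction lemma fixes the truncation point at the continuation value. Without that, you cannot conclude that \agentb\ appears on \agenta's submitted list, and the chase through the DA proposals does not start. The paper avoids this entirely by a case split plus a unilateral deviation: if the blocking agent \agenta\ is matched at the relevant realization, a block already contradicts either the truncation structure or the stability of the DA output under the reports; if \agenta\ is unmatched, have \agenta\ deviate to the list of \emph{all} partners truly preferred to her current outcome. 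The Decomposition-Lemma argument from \autoref{lemma:no-reshuffling} shows \agenta\ cannot remain unmatched under this list while \agentb's assignment is unchanged (else $(\agenta,\agentb)$ would block the DA output with respect to the reported preferences, \agentb\ being truthful), so \agenta\ strictly profits, contradicting SPNE. This requires no claim about what the equilibrium ROL actually is. A secondary, harmless inaccuracy: in your base case for \gameda\ the SPNE outcome is not pinned to the side-$\arrivala$-optimal stable matching when side $\arrivalb$ is truthful --- the Nash outcomes of the revelation game are the full set of stable matchings --- but stability is all the theorem needs there.
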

 The proof is in \autoref{appendix:sequential-implementation} and it proceeds by backward induction. I highlight here the main insights that come from the proof. 

  First, building on the results in \cite{roth1991incentives}, I show that it is without loss of generality to focus on equilibrium strategies where agents' ROLs are either truncations of the rankings induced by their Bernoulli utility functions or only list themselves, i.e., an \emph{empty} ROL. 
Agents may need to submit empty ROLs if their most preferred matching partner is not available in a given round, so as to avoid being matched to someone that is worse than waiting to be matched.\footnote{Alternatively, an agent can always list their most preferred matching partner, even if they are not available in a given round.} It follows that when agents are allowed to submit different ROLs in each stage (as in the applications described so far), agents can always participate in the mechanism.

Second, I show that for all periods $t$ and for all matchings that may have ensued through period $t$, the outcome of equilibrium play starting from period $t$ is a dynamically stable matching. The properties of deferred acceptance in \gameda\ and the possibility of joint deviations in \gamestable\ imply that the matching satisfies condition \ref{itm:ds1}. To show that no agent benefits from remaining unmatched in a given period, I show that the matching that would result when an agent deviates and submits a ROL that leaves them unmatched is a valid conjecture. Therefore, if the matching were not dynamically stable, the agent would have a deviation. To see why such a deviation induces a matching that is a valid conjecture, fix a period $t$ and an agent \agent\ who can match in period $t$. Suppose one has already shown that from stage $t+1$ onward, equilibrium play leads to a dynamically stable matching. This means that, conditional on remaining unmatched in period $t$, \agent\ expects that the matching starting from period $t+1$ satisfies condition (ii) in \autoref{eq:conjectures}. Because agents may not report their preferences truthfully, the period-$t$ matching that results when \agent\ deviates is not necessarily stable with respect to the true preferences. However, the period-$t$ matching induced by \agent's deviation always satisfies \autoref{definition:ds-static-stability}: amongst the agents that do match in period $t$, the presence of a blocking pair would contradict that agents submit truncations of their true preference ranking. 

  \autoref{theorem:ds-implementation} complements the results in \cite{westkamp2013analysis} and \cite{dur2019sequential}. \cite{westkamp2013analysis} shows that the mechanism used in German college admissions fails to produce (static) stable matchings and proposes a one-shot mechanism that respects the priority of those students with either high-grades and high-waiting times. In a two-period model, \cite{dur2019sequential} show that sequential assignment may be at odds with (static) stability and/or truthful behavior even when using stable and strategy-proof mechanisms in each round. They also show which spot mechanisms are such that the outcome of pure strategy Nash equilibria is a stable matching.

  \autoref{theorem:ds-implementation} contributes to their analysis by identifying the stability property satisfied by the outcomes of equilibrium play \emph{regardless} of the sequence of stable mechanisms used in each round. Furthermore, it illustrates the form that manipulations take in dynamic environments. Even if the spot mechanism is strategy-proof for one side, one should expect that agents truncate their preferences so that their assignments reflect what the agents expect to be able to guarantee should they stay for an extra round.\footnote{Strategic mistakes, as in \cite{shorrer2018obvious} and \cite{hassidim2020limits}, is another reason why agents may misreport their preferences in strategy-proof mechanisms in static settings.\label{strategic-mistakes}} 
% \cite{shorrer2018obvious} and \cite{hassidim2020limits} show that strategic mistakes is another reason why agents may misreport their preferences in strategy-proof mechanisms in static settings: they make strategic mistakes.\label{strategic-mistakes}}  
These two observations can inform applied researchers that study sequential assignment problems (e.g., \cite{narita2018match,neilson2020aftermarket}). First, \autoref{theorem:ds-implementation} warns against the use of the reported preferences as the true preferences, even if a strategy-proof mechanism is used.\footnote{\cite{narita2018match} reports manipulations that are not truncations. This is to be expected since the NYC match restricts the number of schools that parents may list, while the sufficiency of truncations is established under the assumption that agents can submit lists of any length.} Second, stability notions are oftentimes used for preference identification and \autoref{theorem:ds-implementation} identifies dynamic stability as the solution concept for these applications.
  
While \autoref{theorem:ds-implementation} identifies dynamic stability as the property satisfied by matchings that arise from equilibrium behavior in sequential assignment problems, it should not be interpreted as an implementation result for dynamically stable matchings.\footnote{Instead, the game in \cite{lagunoff1994simple} can be used to provide an implementation of dynamically stable matchings.} Indeed, a given sequence of spot mechanisms may not implement all dynamically stable matchings in a given economy. Moreover, \autoref{theorem:ds-implementation} does not assert a pure strategy SPNE exists. Instead, \autoref{theorem:ds-implementation} should be interpreted as stating that dynamic stability is a necessary condition: whenever the matching that results from sequential assignment is not dynamically stable, either a pair of agents would prefer to match outside the algorithm, or an agent will find it optimal to delay the time at which they are available to match. 
% %
% While \autoref{theorem:ds-implementation} identifies dynamic stability as the property satisfied by matchings that arise from equilibrium behavior in sequential assignment problems, it should not be interpreted as an implementation result for dynamically stable matchings. Indeed, a given sequence of spot mechanisms may not implement all dynamically stable matchings in a given economy. Moreover, \autoref{theorem:ds-implementation} does not assert a pure strategy SPNE exists. Instead, the result should be interpreted as stating that dynamic stability is a necessary condition in the dynamic economy: whenever the matching is not dynamically stable, either a pair of agents will prefer to match outside the algorithm, or an agent will find it optimal to delay the time at which they are available to match. Notwithstanding these distinctions, the game in \cite{lagunoff1994simple} may be used to implement dynamically stable matchings.
 \section{Further directions}\label{sec:conclusions}
While stability is a key property in the analysis of static matching markets, the analysis of dynamic matching markets has been confined to either equilibrium models or analyzed through the lens of static notions of stability. This paper fills this gap by formulating a stability notion for dynamic matching markets. As such, the paper opens several avenues for further research. First, as the discussion at the end of \autoref{sec:dynamic-stability} suggests, one could consider refinements of dynamic stability by strengthening condition (iii) in \autoref{eq:conjectures}. Indeed, one such refinement is proposed in that section. Second, developing an algorithm that implements dynamically stable matchings is definitely of interest. When $\terminal=2$, such an algorithm can be built using the proof of \autoref{theorem:stoch-existence}. However, it is an open question whether such an algorithm exists for $\terminal\geq3$. Finally, the analysis in \autoref{sec:sequential} identifies dynamic stability as the solution concept in sequential assignment problems. Since many applications in sequential assignment involve many-to-one matching markets, it would be natural to extend \autoref{definition:ds} to many-to-one markets. \cite{altinok2019dynamic} is a step in this direction. 
% In the interest of anonymization, please do not include acknowledgements in your submission.
%
%\begin{acks}
%
%	The authors would like to thank Dr. Maura Turolla of Telecom
%	Italia for providing specifications about the application scenario.
%
%	The work is supported by the \grantsponsor{GS501100001809}{National
%		Natural Science Foundation of
%		China}{http://dx.doi.org/10.13039/501100001809} under Grant
%	No.:~\grantnum{GS501100001809}{61273304\_a}
%	and~\grantnum[http://www.nnsf.cn/youngscientsts]{GS501100001809}{Young
%		Scientsts' Support Program}.
%
%
%\end{acks}

% Bibliography
\bibliographystyle{ecta}
\bibliography{matching}

\begin{thebibliography}{66}
\newcommand{\enquote}[1]{``#1''}
\expandafter\ifx\csname natexlab\endcsname\relax\def\natexlab#1{#1}\fi

\bibitem[\protect\citeauthoryear{Acemoglu, Egorov, and Sonin}{Acemoglu
  et~al.}{2012}]{acemoglu2012dynamics}
\textsc{Acemoglu, D., G.~Egorov, and K.~Sonin} (2012): \enquote{Dynamics and
  stability of constitutions, coalitions, and clubs,} \emph{American Economic
  Review}, 102, 1446--76.

\bibitem[\protect\citeauthoryear{Adachi}{Adachi}{2003}]{adachi2003search}
\textsc{Adachi, H.} (2003): \enquote{A search model of two-sided matching under
  nontransferable utility,} \emph{Journal of Economic Theory}, 113, 182--198.

\bibitem[\protect\citeauthoryear{Akbarpour, Li, and Gharan}{Akbarpour
  et~al.}{2020}]{akbarpour2020thickness}
\textsc{Akbarpour, M., S.~Li, and S.~O. Gharan} (2020): \enquote{Thickness and
  information in dynamic matching markets,} \emph{Journal of Political
  Economy}, 128, 783--815.

\bibitem[\protect\citeauthoryear{Alcalde}{Alcalde}{1996}]{alcalde1996implementation}
\textsc{Alcalde, J.} (1996): \enquote{Implementation of stable solutions to
  marriage problems,} \emph{Journal of Economic Theory}, 69, 240--254.

\bibitem[\protect\citeauthoryear{Altinok}{Altinok}{2019}]{altinok2019dynamic}
\textsc{Altinok, A.} (2019): \enquote{Dynamic Many-to-One Matching,}
  \emph{Available at SSRN 3526522}.

\bibitem[\protect\citeauthoryear{Ambrus}{Ambrus}{2006}]{ambrus2006coalitional}
\textsc{Ambrus, A.} (2006): \enquote{Coalitional rationalizability,} \emph{The
  Quarterly Journal of Economics}, 121, 903--929.

\bibitem[\protect\citeauthoryear{Anderson, Ashlagi, Gamarnik, and
  Kanoria}{Anderson et~al.}{2015}]{anderson2015dynamic}
\textsc{Anderson, R., I.~Ashlagi, D.~Gamarnik, and Y.~Kanoria} (2015):
  \enquote{A dynamic model of barter exchange,} in \emph{Proceedings of the
  Twenty-Sixth Annual ACM-SIAM Symposium on Discrete Algorithms}, SIAM,
  1925--1933.

\bibitem[\protect\citeauthoryear{Andersson, Dur, Ertemel, Kesten
  et~al.}{Andersson et~al.}{2018}]{andersson2018sequential}
\textsc{Andersson, T., U.~Dur, S.~Ertemel, O.~Kesten, et~al.} (2018):
  \enquote{Sequential School Choice with Public and Private Schools,} .

\bibitem[\protect\citeauthoryear{Arnosti and Shi}{Arnosti and
  Shi}{2020}]{arnosti2020design}
\textsc{Arnosti, N. and P.~Shi} (2020): \enquote{Design of Lotteries and
  Wait-Lists for Affordable Housing Allocation,} \emph{Management Science}.

\bibitem[\protect\citeauthoryear{Ashlagi, Burq, Jaillet, and Manshadi}{Ashlagi
  et~al.}{2018}]{ashlagi2018matching}
\textsc{Ashlagi, I., M.~Burq, P.~Jaillet, and V.~Manshadi} (2018): \enquote{On
  matching and thickness in heterogeneous dynamic markets,} \emph{Available at
  SSRN 3067596}.

\bibitem[\protect\citeauthoryear{Baccara, Lee, and Yariv}{Baccara
  et~al.}{2020}]{baccara2020optimal}
\textsc{Baccara, M., S.~Lee, and L.~Yariv} (2020): \enquote{Optimal dynamic
  matching,} \emph{Theoretical Economics}, 15, 1221--1278.

\bibitem[\protect\citeauthoryear{Bergemann and Strack}{Bergemann and
  Strack}{2019}]{bergemann2019progressive}
\textsc{Bergemann, D. and P.~Strack} (2019): \enquote{Progressive
  Participation,} .

\bibitem[\protect\citeauthoryear{Bloch and Cantala}{Bloch and
  Cantala}{2017}]{bloch2017dynamic}
\textsc{Bloch, F. and D.~Cantala} (2017): \enquote{Dynamic assignment of
  objects to queuing agents,} \emph{American Economic Journal: Microeconomics},
  9, 88--122.

\bibitem[\protect\citeauthoryear{Braun, Dwenger, and K{\"u}bler}{Braun
  et~al.}{2010}]{braun2010telling}
\textsc{Braun, S., N.~Dwenger, and D.~K{\"u}bler} (2010): \enquote{Telling the
  truth may not pay off: An empirical study of centralized university
  admissions in Germany,} \emph{The BE Journal of Economic Analysis \& Policy},
  10.

\bibitem[\protect\citeauthoryear{Burdett and Coles}{Burdett and
  Coles}{1997}]{burdett1997marriage}
\textsc{Burdett, K. and M.~G. Coles} (1997): \enquote{Marriage and class,}
  \emph{The Quarterly Journal of Economics}, 141--168.

\bibitem[\protect\citeauthoryear{Chowdhury}{Chowdhury}{2004}]{chowdhury2004marriage}
\textsc{Chowdhury, P.~R.} (2004): \enquote{Marriage markets with
  externalities,} Tech. rep., Indian Statistical Institute, New Delhi, India.

\bibitem[\protect\citeauthoryear{Chwe}{Chwe}{1994}]{chwe1994farsighted}
\textsc{Chwe, M.} (1994): \enquote{Farsighted coalitional stability,}
  \emph{Journal of Economic theory}, 63, 299--325.

\bibitem[\protect\citeauthoryear{Damiano and Lam}{Damiano and
  Lam}{2005}]{damiano2005stability}
\textsc{Damiano, E. and R.~Lam} (2005): \enquote{Stability in dynamic matching
  markets,} \emph{Games and Economic Behavior}, 52, 34--53.

\bibitem[\protect\citeauthoryear{Dogan and Yenmez}{Dogan and
  Yenmez}{2018}]{dogan2018does}
\textsc{Dogan, B. and M.~B. Yenmez} (2018): \enquote{When Does an Additional
  Stage Improve Welfare in Centralized Assignment?} .

\bibitem[\protect\citeauthoryear{Doval}{Doval}{2015}]{doval2015theory}
\textsc{Doval, L.} (2015): \enquote{A Theory of Stability in Dynamic Matching
  Markets,}
  \href{http://economics.yale.edu/sites/default/files/doval_jmp.pdf}{Click
  here.}

\bibitem[\protect\citeauthoryear{Dur and Kesten}{Dur and
  Kesten}{2019}]{dur2019sequential}
\textsc{Dur, U. and O.~Kesten} (2019): \enquote{Sequential versus simultaneous
  assignment systems and two applications,} \emph{Economic Theory}, 68,
  251--283.

\bibitem[\protect\citeauthoryear{Eeckhout}{Eeckhout}{1999}]{eeckhout1999bilateral}
\textsc{Eeckhout, J.} (1999): \enquote{Bilateral search and vertical
  heterogeneity,} \emph{International Economic Review}, 40, 869--887.

\bibitem[\protect\citeauthoryear{Feigenbaum, Kanoria, Lo, and
  Sethuraman}{Feigenbaum et~al.}{2020}]{feigenbaum2020dynamic}
\textsc{Feigenbaum, I., Y.~Kanoria, I.~Lo, and J.~Sethuraman} (2020):
  \enquote{Dynamic matching in school choice: Efficient seat reassignment after
  late cancellations,} \emph{Management Science}.

\bibitem[\protect\citeauthoryear{Gale and Shapley}{Gale and
  Shapley}{1962}]{gale1962college}
\textsc{Gale, D. and L.~S. Shapley} (1962): \enquote{College admissions and the
  stability of marriage,} \emph{American Mathematical Monthly}, 9--15.

\bibitem[\protect\citeauthoryear{Gale and Sotomayor}{Gale and
  Sotomayor}{1985}]{gale1985some}
\textsc{Gale, D. and M.~Sotomayor} (1985): \enquote{Some remarks on the stable
  matching problem,} \emph{Discrete Applied Mathematics}, 11, 223--232.

\bibitem[\protect\citeauthoryear{Garrett}{Garrett}{2016}]{garrett2016intertemporal}
\textsc{Garrett, D.~F.} (2016): \enquote{Intertemporal price discrimination:
  Dynamic arrivals and changing values,} \emph{American Economic Review}, 106,
  3275--99.

\bibitem[\protect\citeauthoryear{Gershkov, Moldovanu, and Strack}{Gershkov
  et~al.}{2015}]{gershkov2015efficient}
\textsc{Gershkov, A., B.~Moldovanu, and P.~Strack} (2015): \enquote{Efficient
  dynamic allocation with strategic arrivals,} \emph{Available at SSRN
  2548740}.

\bibitem[\protect\citeauthoryear{Haeringer and Iehl{\'e}}{Haeringer and
  Iehl{\'e}}{2019}]{haeringer2019gradual}
\textsc{Haeringer, G. and V.~Iehl{\'e}} (2019): \enquote{Gradual College
  Admission,} .

\bibitem[\protect\citeauthoryear{Harsanyi}{Harsanyi}{1974}]{harsanyi1974equilibrium}
\textsc{Harsanyi, J.~C.} (1974): \enquote{An equilibrium-point interpretation
  of stable sets and a proposed alternative definition,} \emph{Management
  science}, 20, 1472--1495.

\bibitem[\protect\citeauthoryear{Hassidim, Romm, and Shorrer}{Hassidim
  et~al.}{2020}]{hassidim2020limits}
\textsc{Hassidim, A., A.~Romm, and R.~I. Shorrer} (2020): \enquote{The limits
  of incentives in economic matching procedures,} \emph{Management Science}.

\bibitem[\protect\citeauthoryear{Kadam and Kotowski}{Kadam and
  Kotowski}{2018}]{kadam2018multiperiod}
\textsc{Kadam, S.~V. and M.~H. Kotowski} (2018): \enquote{Multiperiod
  Matching,} \emph{International Economic Review}, 59, 1927--1947.

\bibitem[\protect\citeauthoryear{Kotowski}{Kotowski}{2019}]{kotowski2019perfectly}
\textsc{Kotowski, M.~H.} (2019): \enquote{A Perfectly Robust Approach to
  Multiperiod Matching Problems,} .

\bibitem[\protect\citeauthoryear{Kurino}{Kurino}{2009}]{kurino2009credibility}
\textsc{Kurino, M.} (2009): \enquote{Credibility, efficiency and stability: a
  theory of dynamic matching markets,} \emph{Jena economic research papers,
  JENA}.

\bibitem[\protect\citeauthoryear{Lagunoff}{Lagunoff}{1994}]{lagunoff1994simple}
\textsc{Lagunoff, R.~D.} (1994): \enquote{A simple noncooperative core story,}
  \emph{Games and Economic Behavior}, 7, 54--61.

\bibitem[\protect\citeauthoryear{Lauermann and N{\"o}ldeke}{Lauermann and
  N{\"o}ldeke}{2014}]{lauermann2014stable}
\textsc{Lauermann, S. and G.~N{\"o}ldeke} (2014): \enquote{Stable marriages and
  search frictions,} \emph{Journal of Economic Theory}, 151, 163--195.

\bibitem[\protect\citeauthoryear{Leshno}{Leshno}{2017}]{leshno2017dynamic}
\textsc{Leshno, J.} (2017): \enquote{Dynamic matching in overloaded waiting
  lists,} .

\bibitem[\protect\citeauthoryear{Liu}{Liu}{2018}]{liu2018stability}
\textsc{Liu, C.} (2018): \enquote{Stability in Repeated Matching Markets,} .

\bibitem[\protect\citeauthoryear{Liu, Mailath, Postlewaite, and Samuelson}{Liu
  et~al.}{2014}]{liu2014stable}
\textsc{Liu, Q., G.~J. Mailath, A.~Postlewaite, and L.~Samuelson} (2014):
  \enquote{Stable matching with incomplete information,} \emph{Econometrica},
  82, 541--587.

\bibitem[\protect\citeauthoryear{Ma}{Ma}{1995}]{ma1995stable}
\textsc{Ma, J.} (1995): \enquote{Stable matchings and rematching-proof
  equilibria in a two-sided matching market,} \emph{Journal of Economic
  Theory}, 66, 352--369.

\bibitem[\protect\citeauthoryear{Mai and Vazirani}{Mai and
  Vazirani}{2019}]{mai2019stability}
\textsc{Mai, T. and V.~V. Vazirani} (2019): \enquote{Stability-Preserving,
  Incentive-Compatible, Time-Efficient Mechanisms for Increasing School
  Capacity,} \emph{arXiv preprint arXiv:1904.04431}.

\bibitem[\protect\citeauthoryear{Mauleon, Vannetelbosch, and Vergote}{Mauleon
  et~al.}{2011}]{mauleon2011neumann}
\textsc{Mauleon, A., V.~J. Vannetelbosch, and W.~Vergote} (2011): \enquote{von
  Neumann--Morgenstern farsightedly stable sets in two-sided matching,}
  \emph{Theoretical Economics}, 6, 499--521.

\bibitem[\protect\citeauthoryear{McVitie and Wilson}{McVitie and
  Wilson}{1970}]{mcvitie1970stable}
\textsc{McVitie, D.~G. and L.~B. Wilson} (1970): \enquote{Stable marriage
  assignment for unequal sets,} \emph{BIT Numerical Mathematics}, 10, 295--309.

\bibitem[\protect\citeauthoryear{Narita}{Narita}{2018}]{narita2018match}
\textsc{Narita, Y.} (2018): \enquote{Match or Mismatch? Learning and Inertia in
  School Choice,} .

\bibitem[\protect\citeauthoryear{Neilson, Kapor, and Karnani}{Neilson
  et~al.}{2020}]{neilson2020aftermarket}
\textsc{Neilson, C., A.~Kapor, and M.~Karnani} (2020): \enquote{Aftermarket
  Frictions and the Cost of Off-Platform Options in Centralized Assignment
  Mechanisms,} .

\bibitem[\protect\citeauthoryear{Parkes}{Parkes}{2007}]{parkes2007online}
\textsc{Parkes, D.~C.} (2007): \enquote{Online mechanisms,} in
  \emph{Algorithmic Game Theory}, ed. by N.~Nisan, T.~Roughgarden, E.~Tardos,
  and V.~V. Vazirani, Cambridge University Press.

\bibitem[\protect\citeauthoryear{Pathak}{Pathak}{2016}]{pathak2016really}
\textsc{Pathak, P.} (2016): \enquote{What really matters in designing school
  choice mechanisms,} .

\bibitem[\protect\citeauthoryear{Pycia and Yenmez}{Pycia and
  Yenmez}{2017}]{pycia2017matching}
\textsc{Pycia, M. and M.~B. Yenmez} (2017): \enquote{Matching with
  Externalities,} .

\bibitem[\protect\citeauthoryear{Ray and Vohra}{Ray and
  Vohra}{1997}]{ray1997equilibrium}
\textsc{Ray, D. and R.~Vohra} (1997): \enquote{Equilibrium binding agreements,}
  \emph{Journal of Economic theory}, 73, 30--78.

\bibitem[\protect\citeauthoryear{Ray and Vohra}{Ray and
  Vohra}{2015}]{ray2015farsighted}
---\hspace{-.1pt}---\hspace{-.1pt}--- (2015): \enquote{The farsighted stable
  set,} \emph{Econometrica}, 83, 977--1011.

\bibitem[\protect\citeauthoryear{Richter}{Richter}{1974}]{richter1974core}
\textsc{Richter, D.~K.} (1974): \enquote{The core of a public goods economy,}
  \emph{International Economic Review}, 131--142.

\bibitem[\protect\citeauthoryear{Rosenthal}{Rosenthal}{1971}]{rosenthal1971external}
\textsc{Rosenthal, R.~W.} (1971): \enquote{External economies and cores,}
  \emph{Journal of Economic Theory}, 3, 182--188.

\bibitem[\protect\citeauthoryear{Rostek and Yoder}{Rostek and
  Yoder}{2017}]{rostek2017matching}
\textsc{Rostek, M.~J. and N.~Yoder} (2017): \enquote{Matching with multilateral
  contracts,} .

\bibitem[\protect\citeauthoryear{Roth}{Roth}{1984}]{roth1984evolution}
\textsc{Roth, A.~E.} (1984): \enquote{The evolution of the labor market for
  medical interns and residents: a case study in game theory,} \emph{The
  Journal of Political Economy}, 92.

\bibitem[\protect\citeauthoryear{Roth}{Roth}{1991}]{roth1991natural}
---\hspace{-.1pt}---\hspace{-.1pt}--- (1991): \enquote{A natural experiment in
  the organization of entry-level labor markets: regional markets for new
  physicians and surgeons in the United Kingdom,} \emph{American economic
  review}, 415--440.

\bibitem[\protect\citeauthoryear{Roth and Sotomayor}{Roth and
  Sotomayor}{1992}]{roth1992two}
\textsc{Roth, A.~E. and M.~A.~O. Sotomayor} (1992): \emph{Two-sided matching: A
  study in game-theoretic modeling and analysis}, Cambridge University Press.

\bibitem[\protect\citeauthoryear{Roth and Vande~Vate}{Roth and
  Vande~Vate}{1991}]{roth1991incentives}
\textsc{Roth, A.~E. and J.~H. Vande~Vate} (1991): \enquote{Incentives in
  two-sided matching with random stable mechanisms,} \emph{Economic theory}, 1,
  31--44.

\bibitem[\protect\citeauthoryear{Sasaki and Toda}{Sasaki and
  Toda}{1996}]{sasaki1996two}
\textsc{Sasaki, H. and M.~Toda} (1996): \enquote{Two-sided matching problems
  with externalities,} \emph{Journal of Economic Theory}, 70, 93--108.

\bibitem[\protect\citeauthoryear{Schummer}{Schummer}{2015}]{schummer2015influencing}
\textsc{Schummer, J.} (2015): \enquote{Influencing waiting lists,} Tech. rep.,
  Kellogg School of Management.

\bibitem[\protect\citeauthoryear{Shapley and Shubik}{Shapley and
  Shubik}{1969}]{shapley1969core}
\textsc{Shapley, L.~S. and M.~Shubik} (1969): \enquote{On the core of an
  economic system with externalities,} \emph{The American Economic Review}, 59,
  678--684.

\bibitem[\protect\citeauthoryear{Shimer and Smith}{Shimer and
  Smith}{2000}]{shimer2000assortative}
\textsc{Shimer, R. and L.~Smith} (2000): \enquote{Assortative matching and
  search,} \emph{Econometrica}, 68, 343--369.

\bibitem[\protect\citeauthoryear{Shin and Suh}{Shin and
  Suh}{1996}]{shin1996mechanism}
\textsc{Shin, S. and S.-C. Suh} (1996): \enquote{A mechanism implementing the
  stable rule in marriage problems,} \emph{Economics Letters}, 51, 185--189.

\bibitem[\protect\citeauthoryear{Shorrer and S{\'o}v{\'a}g{\'o}}{Shorrer and
  S{\'o}v{\'a}g{\'o}}{2018}]{shorrer2018obvious}
\textsc{Shorrer, R.~I. and S.~S{\'o}v{\'a}g{\'o}} (2018): \enquote{Obvious
  mistakes in a strategically simple college admissions environment: Causes and
  consequences,} \emph{Available at SSRN 2993538}.

\bibitem[\protect\citeauthoryear{S{\"o}nmez}{S{\"o}nmez}{1997}]{sonmez1997games}
\textsc{S{\"o}nmez, T.} (1997): \enquote{Games of manipulation in marriage
  problems,} \emph{Games and Economic Behavior}, 20, 169--176.

\bibitem[\protect\citeauthoryear{Thakral}{Thakral}{2019}]{thakral2019matching}
\textsc{Thakral, N.} (2019): \enquote{Matching with stochastic arrival,} in
  \emph{AEA Papers and Proceedings}, vol. 109, 209--12.

\bibitem[\protect\citeauthoryear{{\"U}nver}{{\"U}nver}{2010}]{unver2010dynamic}
\textsc{{\"U}nver, M.~U.} (2010): \enquote{Dynamic kidney exchange,} \emph{The
  Review of Economic Studies}, 77, 372--414.

\bibitem[\protect\citeauthoryear{Westkamp}{Westkamp}{2013}]{westkamp2013analysis}
\textsc{Westkamp, A.} (2013): \enquote{An analysis of the German university
  admissions system,} \emph{Economic Theory}, 53, 561--589.

\end{thebibliography}

\appendix
\section{Proof of Theorem \ref{theorem:stoch-existence}}\label{appendix:proof-existence}
The proof proceeds by induction on $T\geq 2$. As argued in the main text $\ds_1$ coincides with the set of pairwise stable matchings for the static economy. Let $\mathbf{P(\terminal)}$ denote the following inductive statement:

$\mathbf{P(T):}$ The correspondence $\ds_\terminal$ is non-empty valued. 

I first prove $\mathbf{P(2)}=1$. Let $\Gtwo$ denote an economy of length $\terminal=2$. For each $\economyr^1=(A_1,B_1)$ on the support of $\Gtwo$, I construct a matching $\matchingstar(\economyr^1)$ as follows.
For each $k\in A_1\cup B_1$, let $\matchingb_{\economyr^1}^k$ denote the payoff minimizing matching \matching\ such that
\begin{enumerate}[leftmargin=*]
\item $\matching(\economyr^1)(k)=k$,
\item\label{itm:p22} For each $\economyr^2=(\economyr^1,A_2,B_2)$ in the support of $\Gtwo$, $\matching(\economyr^1,A_2,B_2)$ is stable for $\langle\cala(\matching,\economyr^2),\calb(\matching,\economyr^2)\rangle$,
\item\label{itm:p23} $\matching(\economyr^1)$ is stable amongst those who match at $\economyr^1$.
\end{enumerate}
Note that the set of matchings \matching\ that satisfy these three requirements is finite and it is non-empty. To see that it is non-empty, note first that for any set of unmatched agents, the set of stable matchings in \periodtwo\ is non-empty. Second, the set of matchings that are (statically) stable for the \periodone\ economy that consists of agents in $A_1\cup B_1\setminus\{k\}$ is non-empty. Pick any such \periodone\ matching and extend it to $A_1\cup B_1$ by leaving agent $k$ unmatched. Note that it is stable amongst those who match in \periodone.\ Then, for each $\economyr^2$, pick a \periodtwo\ matching that is stable for $\langle\cala(\matching_1,\economyr^2),\calb(\matching_1,\economyr^2)\rangle$. The matching \matching\ formed this way satisfies the three requirements. 
%for  since the set of stable matchings in period 2 is non-empty and a matching that leaves everyone unmatched in period 1 is stable amongst those who match $\economyr^1$.    
Also note that \autoref{itm:p22} implies that any such matching \matching\ satisfies the requirement that $\matching(\economyr^1,\cdot)\in\ds_1(\Gtwo(\matching^1,\economyr^1))$. 

For each $\agenta\in\arrivala_1$, consider the preference list on $\arrivalb_1$ defined as follows: $\agentb\in \arrivalb_1$ is acceptable to $\agenta$ only if $\genericu\geq U(\agenta,\matchingb_{\economyr^1}^\agenta,\economyr^1)$ and is not acceptable to $\agenta$ otherwise. Moreover, if $\agentb,\agentbp\in\arrivalb_1$ are acceptable to $\agenta$, then they are ordered according to $\genericuo\cdot)$ in $a$'s preference list.

Similarly, for $\agentb\in\arrivalb_1$, consider the preference list on $\arrivala_1$ defined as follows: $\agenta\in\arrivala_1$ is acceptable to $\agentb$ only if $\genericv\geq V(\agentb,\matchingb_{\economyr^1}^\agentb,\economyr^1)$ and is not acceptable to $\agentb$ otherwise. Moreover, if $\agenta,\agentap\in\arrivala_1$ are acceptable to $\agentb$, then they are ordered according to $v(\cdot,\agentb)$ in $\agentb$'s preference list.

Define $\matchingstar(\economyr^1)$ so that it coincides with the outcome of running the deferred acceptance algorithm with agents in $\arrivala_1$ proposing to agents in $\arrivalb_1$ using the \emph{truncated} preference lists.(If there are ties, fix a tie-breaking procedure and run deferred acceptance.) 

 For each period-2 economy $\economyr^2=(\economyr^1,A_2,B_2)$, let $\matchingstar(\economyr^2)$ coincide with $\matchingstar(\economyr^1)$ for those agents who matched at $\economyr^1$ and have it coincide with the outcome of running deferred acceptance on $\left(\cala(\matchingstaruptoone,\economyrtwo),\calb(\matchingstaruptoone,\economyr^2)\right)$, elsewhere.\footnote{Again, if there are indifferences, run deferred acceptance after having broken ties.}

Note that \matchingstar\ defined in this way is an element of $\Matchings_2$.

I note the following properties of $\matchingstar$. First, for all $\economyr^2$, $\matchingstar(\economyr^2)$ is stable for $\left(\cala(\matchingstaruptoone,\economyr^2),\calb(\matchingstaruptoone,\economyr^2)\right)$. Second, for each $\economyr^1$, \matchingstar\ satisfies the following: (i) there is no pair \pair\ such that $\matchingstar(\economyr^1)(a)\neq a,\matchingstar(\economyr^1)(b)\neq b$ that prefer matching with each other than to match according to $\matchingstar(\economyr^1)$, i.e., $\matchingstar(\economyr^1)$ is stable amongst those who match in period $1$; (ii) there is no $k\in\arrivala_1\cup\arrivalb_1$ such that $\matchingstar(\economyr^1)(k)\neq k$ and for all $\matchingb\in M_{\ds}(k,\matchingstar,\economyr^1)$, $k$ prefers $\matchingb$ to $\matchingstar$. To see that (ii) holds, note the following. By construction $\matchingb^k$ can always be taken to be an element of $M_{\ds}(k,\matchingstar,\economyr^1)$ by having it coincide with $\matchingstar$ at all realizations $\economyrt$, $t\in\{1,2\}$ that do not (weakly) follow $\economyr^1$. Moreover, if $k$ is matched at $\economyr^1$, then it must be that $k$'s matching partner is (weakly) preferred to $\matchingb^k$.
\newline\indent Thus, to show that $\matchingstar\in\ds_2(\Gtwo)$, it remains to be argued that it is pairwise stable. To do so, it is only necessary to check that there is no realization $\economyr^1=(\arrivala_1,\arrivalb_1)$ and pair $(\agenta,\agentb)\in\arrivala_1\times\arrivalb_1$ who prefer to match together at $\economyr^1$ rather than match according to $\matchingstar$. Toward a contradiction, suppose there is such a pair. Since $\matchingstar$ is stable amongst those who match in period $1$, it cannot be the case that both \agenta\ and \agentb\ are matched at $\economyr^1$. Then, either \agenta\ or \agentb\ (or both) declared the other unacceptable at $\economyr^1$ and is unmatched at $\economyr^1$. Without loss of generality, assume that it is \agentb,\ then
\begin{align}\label{eq:aunacceptabletob}
\genericv<V(b,\matching_{\economyr^1}^b,\economyr^1).
\end{align}
Moreover, it must be that $\genericv>V(\agentb,\matchingstar,\economyr^1)$ since (\agenta,\ \agentb)\ form a pairwise block of $\matchingstar$. It then follows that
\begin{align}\label{eq:wronglowerbound}
V(b,\matchingstar,\economyr^1)<V(b,\matchingb_{\economyr^1}^\agentb,\economyr^1).
\end{align}
Note, however, that \matchingstar\ satisfies the following: (i) $\matchingstar(\economyr^1)(\agentb)=\agentb$, (ii) $(\matchingstar(\economyr^1,\cdot))\in\ds_1(\Gone(\matchingstaruptoone,\economyr^1))$, and (iii) $\matchingstar(\economyr^1)$ is stable amongst those who match in period $1$. Then, $\matchingstar\in M_{\ds}(\agentb,\matchingstar,\economyr^1)$. This, together with \autoref{eq:wronglowerbound}, contradicts the definition of $\matchingb_{\economyr^1}^\agentb$. By assuming instead that $a$ is unmatched, a similar contradiction follows. I conclude that \matchingstar\ is pairwise stable. This concludes the proof that $\mathbf{P(2)}=1$.
\newline\indent I now prove the inductive step. Assume now that $\mathbf{P(\terminal^\prime)}=1$ for all $\terminal^\prime<\terminal$. I show that $\mathbf{P(\terminal)}=1$. 

Given a realization $\economyrterminal=(A_1,B_1,\dots,A_\terminal,B_\terminal)$ in the support of $\Gterminal$, consider the following procedure to construct $\matchingstar(\economyrt)$ for each truncation $\economyrt$ of $\economyrterminal$. For each $1\leq t\leq\terminal-1$, having defined $(\matchingstar(\economyr^1),\dots,\matchingstar(\economyr^{t-1}))$, let $\cala(\matchingstaruptot,\economyrt)\cup\calb(\matchingstaruptot,\economyrt)$ denote the agents who can match at \economyrt\  (if \periodone,\ then $\cala(\emptyset,\economyr^1)=A_1,\calb(\emptyset,\economyr^1)=B_1$.) For each $k\in\cala(\matchingstaruptot,\economyrt)\cup\calb(\matchingstaruptot,\economyrt)$, let $\matchingb_{\economyrt}^k$ denote the matching $\matching\in\Matchingsfinal(\matchingstaruptot,\economyrt)$ that minimizes $k$'s payoff\footnote{That is, if $k=a$, I am minimizing $U(a,\cdot, \economyrt)$, while if $k=b$, I am minimizing $V(b,\cdot,\economyrt)$.} at $\economyrt$ and satisfies that
\begin{enumerate}
\item $\matching(\economyrt)(k)=k$
\item $(\matching(\economyr^{t+\dateindex}))_{s\in\{1,\dots,\terminal-t\}}\in\ds_{T-t}(G_{\terminal-t}(\matching^t,\economyrt))$.
\item $\matching(\economyrt)$ is stable amongst those who match in period $t$.
\end{enumerate}
By the inductive hypothesis, $\ds_\dateindex$ is non-empty valued for $\dateindex<\terminal$. Since $t\geq 1$, then $\ds_{\terminal-t}(\cdot)$ is non-empty. Thus, for all $k\in\cala(\matchingstaruptot,\economyrt)\cup\calb(\matchingstaruptot,\economyrt)$, $\matchingb_{\economyrt}^k$ is well-defined since the above set of matchings is non-empty and finite.\footnote{Recall that any period-t matching that is (statically) stable for the economy formed by $\cala(\matchingstaruptot,\economyrt)\cup\calb(\matchingstaruptot,\economyrt)\setminus\{k\}$ can be extended to a period-t matching for the agents in $\cala(\matchingstaruptot,\economyrt)\cup\calb(\matchingstaruptot,\economyrt)$ so that agent $k$ is unmatched and the period-t matching is stable amongst those who match at \economyrt. }
\newline\indent For each $\agenta\in\cala(\matchingstaruptot,\economyrt)$, consider the preference list on $\calb(\matchingstaruptot,\economyrt)$ defined as follows: $\agentb\in \calb(\matchingstaruptot,\economyrt)$ is acceptable to $\agenta$ only if $\genericu\geq U(\agenta,\matchingb_{\economyrt}^\agenta,\economyrt)$ and is not acceptable to $\agenta$ otherwise. Moreover, if $\agentb,\agentbp\in\calb(\matchingstaruptot,\economyrt)$ are acceptable to $\agenta$, then they are ordered according to $\genericuo\cdot)$ in $a$'s preference list.

 Similarly, for $\agentb\in\calb(\matchingstaruptot,\economyrt)$, consider the preference list on $\cala(\matchingstaruptot,\economyrt)$ defined as follows: $\agenta\in\cala(\matchingstaruptot,\economyrt)$ is acceptable to $\agentb$ only if $\genericv\geq V(\agentb,\matchingb_{\economyrt}^\agentb,\economyrt)$ and is not acceptable to $\agentb$ otherwise. Moreover, if $\agenta,\agentap\in\cala(\matchingstaruptot,\economyrt)$ are acceptable to $\agentb$, then they are ordered according to $v(\cdot,\agentb)$ in $\agentb$'s preference list.
\newline\indent Define $\matchingstar(\economyrt)$ to be the outcome of deferred acceptance on $\langle\cala(\matchingstaruptot,\economyrt),\calb(\matchingstaruptot,\economyrt)\rangle$ using the above preference lists.

 For $t=\terminal$, let $\matchingstar(\economyrterminal)$ coincide with $\matchingstar(\economyr^{\terminal-1})$ for all those who have matched at $\economyr^{\terminal-1}$ and 
with the outcome of running deferred acceptance with agents on $\cala(\matchingstaruptoterminal,\economyrterminal)$ proposing to agents on $\calb(\matchingstaruptoterminal,\economyrterminal)$, elsewhere.
\newline\indent Note that this procedure on all $\economyrt$ and all $1\leq t\leq\terminal$ implies that \matchingstar\ is an element of $\Matchings_\terminal$. I now make some observations about $\matchingstar$.

First, for all periods $t\in\{1,\dots,\terminal\}$, and all realizations \economyrt,\ $\matchingstar(\economyrt)$ is stable amongst those who match at \economyrt.\ Second, for all periods $t\in\{1,\dots,\terminal-1\}$, and all realizations \economyrt,\ there is no $k\in\cala(\matchingstaruptot,\economyrt)\cup\calb(\matchingstaruptot,\economyrt)$ such that $\matchingstar(\economyrt)(k)\neq k$ and for all $\matchingb\in M_{\ds}(k,\matchingstar,\economyrt)$, $k$ prefers $\matchingb$ to $\matchingstar$. To see that this holds note the following. By construction $\matchingb_{\economyrt}^k$ can always be taken to be an element of $M_{\ds}(k,\matchingstar,\economyrt)$ by having it coincide with $\matchingstar$ at all realizations $\economyr^\dateindex$ that do not follow \economyrt.\ Moreover, if $k$ is matched at $\economyrt$, it must be that their matching partner is (weakly) preferred to $\matchingb_{\economyrt}^k$.
\newline\indent Fix $t=\terminal$ and note that for all $\economyrterminal$, $\matchingstar(\economyrterminal)$ is pairwise stable and individually rational. Let $t\leq\terminal-1$ and let $\economyrt$ denote a realization. Suppose that one has shown that $(\matchingstar(\economyr^{t+\dateindex}))_{\dateindex\in\{1,\dots,\terminal-t\}}$ is pairwise stable starting from period $t+1$ onward. Note that, in particular, this implies that $(\matchingstar(\economyr^{t+\dateindex}))_{\dateindex\in\{1,\dots,\terminal-t\}}\in\ds_{T-t}(G_{\terminal-t}(\matching^{\star^t},\economyrt))$. I now show that there are no pairwise blocks involving agents that have arrived by \economyrt.\ Toward a contradiction, suppose there exists a pair $(\agenta,\agentb)\in\cala(\matchingstaruptot,\economyrt)\times\calb(\matchingstaruptot,\economyrt)$ such that
\begin{align*}
\genericu>U(a,\matchingstar,\economyrt), \genericv>V(\agentb,\matchingstar,\economyrt).
\end{align*}
Since $\matchingstar(\economyrt)$ is stable amongst those who match in period $t$, then it must be that either $\agenta$ or $\agentb$ are unmatched at $\matchingstar(\economyrt)$. Without loss of generality, suppose that it is \agentb.\ Thus, \agentb\ declared \agenta\ as unacceptable in the truncated preference lists  so that
\begin{align}
\genericv<V(\agentb,\matchingb_{\economyrt}^\agentb,\economyrt).
\end{align}
Combining the above inequalities, it follows that
\begin{align}\label{eq:wronglowerboundt}
V(\agentb,\matchingstar,\economyrt)<V(\agentb,\matchingb_{\economyrt}^\agentb,\economyrt).
\end{align}
Note, however, that $\matchingstar\in M_{\ds}(\agentb,\matchingstar,\economyrt)$.\footnote{This, of course, uses that we have shown that $\matchingstar(\economyrt,\cdot)$ is pairwise stable and  has no blocks by waiting.} This, together with \autoref{eq:wronglowerboundt}, contradicts the definition of $\matchingb_{\economyrt}^\agentb$. Thus, $\matchingstar$ is pairwise stable from period $t$ onward. Proceeding this way, it follows that $\matchingstar$ is pairwise stable. 
This proves $\mathbf{P(\terminal)}=1$.
\subsection{Failure of the Lone Wolf theorem}\label{appendix:lattice}
The result in \autoref{prop:lattice} is based on the following example:
\begin{ex}\label{ex:lone-wolf}
Let $\terminal=2$. Arrivals are as follows: $A_1=\{a_{11},a_{12}\}$, $A_2=\{a_{21},a_{22}\}$, $B_1=\{b_{11}\}$, $B_2=\{b_{21},b_{22}\}$. Preferences are given by

\begin{table}[h!]
\begin{tabular}{lcccc}
$a_{11}$:&$(b_{21},1)$&$(b_{22},0)$&$(b_{11},0)$&$(b_{22},1)$\\
$a_{12}$:&$(b_{11},0)$&&&\\
$a_{21}$:&$(b_{21},0)$&&&\\
$a_{22}$:&$(b_{22},0)$&$(b_{21},0)$&&
\end{tabular}\hspace{1cm}
\begin{tabular}{lcccc}
$b_{11}$:&$(a_{11},0)$&$(a_{12},0)$&&\\
$b_{21}$:&$(a_{22},0)$&$(a_{11},0)$&$(a_{21},0)$&\\
$b_{22}$:&$(a_{11},0)$&$(a_{22},0)$&&\\
\textcolor{white}{$b_{22}$:}&\textcolor{white}{$(a_{11},0)$}&\textcolor{white}{$(a_{22},0)$}&&
\end{tabular}
\end{table}
The following two matchings are dynamically stable:
\begin{align*}
\matching^L=\left(\begin{array}{lcl}a_{11}&\_&b_{11}\\\hline a_{21}&\_&b_{21}\\a_{22}&\_&b_{22}\\a_{12}&\_&\emptyset\end{array}\right)\hspace{1cm}\matching^R=\left(\begin{array}{lcl}a_{12}&\_&b_{11}\\\hline a_{11}&\_&b_{21}\\a_{22}&\_&b_{22}\\a_{21}&\_&\emptyset\end{array}\right),
\end{align*}
Agent $a_{12}$ is unmatched under $\matching^L$, while agent $a_{21}$ is unmatched under $\matching^R$, so that both the Lone Wolf Theorem and the lattice property fail in this economy. Note that both matchings satisfy condition \ref{itm:ds1}. In particular, $\agentb_{11}$ prefers $\agenta_{11}$ to $\agenta_{12}$ so that $\agenta_{12}$ has no possibility of being matched under $\matching^L$, while $\agentb_{21}$ prefers $\agenta_{11}$ to $\agenta_{21}$, so that $\agenta_{21}$ has no possibility of matching under $\matching^R$. Below, $\matchingb^L$ (resp., $\matchingb^R$) denotes the conjecture that dissuades $\agenta_{11}$ (resp., $\agentb_{11}$) from blocking $\matching^L$ (resp., $\matching^R$) in \periodone:
\begin{align*}
\matchingb^L=\left(\begin{array}{lcl}a_{12}&\_&b_{11}\\\hline a_{11}&\_&b_{22}\\a_{22}&\_&b_{21}\\a_{21}&\_&\emptyset\end{array}\right)\quad\quad\matchingb^R=\left(\begin{array}{lcl}&\emptyset&\\\hline a_{11}&\_&b_{21}\\a_{22}&\_&b_{22}\\a_{12}&\_&b_{11}\\a_{21}&\_&\emptyset\end{array}\right),
\end{align*}
so that neither $\agenta_{11}$ may block $\matching^L$ by remaining unmatched, nor can $\agentb_{11}$ block $\matching^R$ by remaining unmatched.
\end{ex}
\section{Proofs of Section \ref{sec:timing}}\label{appendix:sequential}
\begin{proof}[Proof of \autoref{prop:stoch-queueing}]
Let \Gterminal\ and \matching\ satisfy the assumptions of \autoref{prop:stoch-queueing}. Let $t\leq\terminal-1$ denote the largest $\dateindex\leq\terminal-1$ such that there exists a realization \economyrs\ such that either Conditions \ref{itm:ds2} or \ref{itm:ds3} of \autoref{definition:ds} fail. Letting $\economyrt=(\economyr^{t-1},A_t,B_t)$ denote the realization for which either Conditions \ref{itm:ds2} or \ref{itm:ds3} of \autoref{definition:ds} fail, it follows that there exists $k\in\cala(\matchinguptot,\economyrt)\cup\calb(\matchinguptot,\economyrt)$ such that $\matching(\economyrt)(k)\neq k$ that can improve by remaining unmatched at \economyrt.\ Moreover, since \matching\ is not dynamically stable for arriving agents it follows that $\agent\in A_t\cup B_t$. Without loss of generality, assume that $k=\agenta\in A_t$. In a slight abuse of notation, let $\economyrt\setminus\{\agenta\}$ denote $(\economyr^{t-1},A_t\setminus\{a\},B_t)$. Note that $\cala(\matchinguptot, \economyrt)\setminus\{\agenta\}=\cala(\matchinguptot,\economyrt\setminus\{\agenta\})$ and $\calb(\matchinguptot,\economyrt)=\calb(\matchinguptot,\economyrt\setminus\{\agenta\})$

 I now define a matching, \matchingb,\ and argue that $\matchingb\in M_\ds(\agenta,\matching^{t-1},\economyrt)$. First, let $\matchingb\in\Matchingsfinal(\matching^{t-1},\economyrt)$. Second, let
\[\matchingb(\economyrt)(k^\prime)=\left\{\begin{array}{ll}\matchingb(\economyrt\setminus\{\agenta\})(k^\prime)&\text{if }k^\prime\neq a\\
a&\text{otherwise}
\end{array}\right.\]
Finally, for each $1\leq\dateindex\leq\terminal-t$ and realizations $\economyr^{t+\dateindex}=(\economyrt,\economyr_{t+1},\dots,\economyr_{t+\dateindex})$, let
\[\matchingb(\economyrs)(k)=\matchingb(\economyrt\setminus\{a\},\economyr_{t+1}\cup\{\agenta\},\dots,\economyr_{t+\dateindex})(k)\]
for each $k\in\cala(\matchingb^{t+\dateindex-1},\economyr^{t+\dateindex})\cup\calb(\matchingb^{t+\dateindex-1},\economyr^{t+\dateindex})$. 
Note that $\matchingb(\economyrt)$ is stable amongst those who match at \economyrt\ by assumption. Moreover, $\agenta$ is unmatched at \economyrt.\ It remains to be argued that $(\matchingb(\economyr^{t+\dateindex}))_{\dateindex\in\{1,\dots,\terminal-t\}}\in\ds_{\terminal-t}(G_{\terminal-t}(\matchingb,\economyrt))$. Note that by definition $(\matching(\economyrt\setminus\{a\},\economyr_{t+1}\cup\{\agenta\},\dots,\economyr_{t+\dateindex}))_{s\in\{1,\dots,\terminal-t\}}$ satisfies the definition of dynamic stability under distribution
$G_{\terminal}(\cdot|\economyr^t\setminus\{a\},\economyr_{t+1}\cup\{\agenta\},\dots,\economyr_{t+\dateindex})$. Exchangeability implies that this is the same as $G_{\terminal}(\cdot|\economyr^t,\economyr_{t+1},\dots,\economyr_{t+\dateindex}))$. Thus, $\matchingb\in M_\ds(\agenta,\matching,\economyrt)$.

Since \matching\ is not dynamically stable at \economyrt,\ it follows that for all matchings in $M_\ds(\agenta,\matching,\economyrt)$, $U(\agenta,\cdot,\economyrt)>U(\agenta,\matching,\economyrt)$. In particular,
\begin{align*}
U(\agenta,\matchingb,\economyrt)>U(\agenta,\matching,\economyrt).
\end{align*}
The result then follows.
\end{proof} 

\begin{proof}[Proof of \autoref{prop:ds-sp}]
Let \Gtwo\ and \mada\ be as in the statement of \autoref{prop:ds-sp}. Let $\agenta\in\arrivalaupto_1(\economyr^2)$ be such that $\mada(\economyr^1)(a)\neq a$. Construct a matching \matchingb\ as follows. Run deferred acceptance as if the economy was given by$(A_1\setminus\{\agenta\},B_1,A_2\cup\{a\},B_2)$, that is:
\begin{enumerate}[label=(\alph*)]
\item $\agenta$ makes proposals as if \agenta\ arrived in $\periodtwo$ That is, \agenta\ proposes to $\agentb\in\arrivalbupto_2(\economyr^2)$ before $\agentbp\in\arrivalbupto_2(\economyr^2)$ if, and only if, $\genericu>\genericuo,\agentbp)$
\item $b\in\arrivalbupto_1(\economyr^2)$ accepts $a$'s offer over $\agentap \in A_1$ only if $\delta_\agentb\genericv>v(\agentap,\agentb)$.
\end{enumerate}
Note that $\matchingb\in M_\ds(\agenta,\mada,\economyr^1)$. Toward a contradiction, suppose that $U(\agenta,\matchingb,\economyr^1)>U(\agenta,\mada,\economyr^1)$. Let $A^+$ denote the set of agents on side $A$ who prefer \matchingb\ to \mada.\ According to a lemma by J.S. Hwang (see \cite{gale1985some} for a proof), there exists $\agentap\notin A^+$ and $b\in\matchingb(\economyr^2)(A^+)$ that block \matchingb\ such that $\delta_{\agentap}^{\mathbbm{1}[\agentb\in\arrivalb_2,\agentap\in\arrivala_1]}u(\agentap,\agentb)>U_{\cdot}(\agentap,\matchingb)$ and $\delta_\agentb^{\mathbbm{1}[\agentb\in\arrivalb_1,\agentap\in\arrivala_2]}v(\agentap,\agentb)>V_\cdot(\agentb,\matchingb)$. This contradicts the definition of \matchingb. Thus, it cannot be that \agenta\ strictly prefers \matchingb\ to \mada.\
\end{proof}
\subsection{Proof of \autoref{theorem:ds-implementation}}\label{appendix:sequential-implementation}
I introduce notation to define the agents' strategies and the solution concept. A public history at the beginning of period $t$, \publict, is a realization \economyrt, and a matching through period $t-1$ for that realization $\matchingduptot=(\matchingd(\economyrone),\dots,\matchingd(\economyruptot))$. 

Let $\dateindex\geq 1$, fix a realization $\economyr^\dateindex=(\arrivala_1,\arrivalb_1,\dots,\arrivala_\dateindex,\arrivalb_\dateindex)$, and let $\agent\in\arrivala_\dateindex\cup\arrivalb_\dateindex$. For any period $t\geq\dateindex$, agent \agent\ would have observed a public history $(\economyrt, \matchingduptot)$, together with their decision to participate and their reported preferences. While agent \agent\ can condition their period $t$ strategy both on \publict\ and their past actions, it will be clear from the proof that it is without loss of generality to focus on strategies that condition on the public history alone. Thus, to keep notation simple, I define agent \agent's behavioral strategy in period $t$ as a mapping $\strat_{\agent,t}$, that takes a public history \publict\ such that \agent\ is unmatched in period $t$ and outputs \agent's decision to participate and a ROL, $\rol_{\agent,t}$. Let $\strat_\agent=(\strat_{\agent,t})_{t\geq\dateindex}$ denote \agent's strategy profile.

A pure strategy profile $\strat=(\strat_\agent)_{\agent\in\arrivalauptoterminal\cup\arrivalbuptoterminal}$ together with a realization, \economyrterminal, determine a terminal history $h_\strat^{\terminal+1}=(\economyrterminal,\matchingd^\strat)$, where letting $\arrivalab_t^\strat\cup\arrivalbb_t^\strat$ denote the set of agents who participate in the spot mechanism in period $t$ under \strat\ when the realization is \economyrterminal, we have that
\begin{align*}
\matchingd(\economyrt)^\strat|_{\arrivalab_t^\strat\cup\arrivalbb_t^\strat}=\spot_t(\arrivalab_t^\strat,\arrivalbb_t^\strat,(\strat_\agent(\publict_\strat))_{\agent\in\arrivalab_t^\strat\cup\arrivalbb_t^\strat})
\end{align*}
while $\matchingd^\strat(\economyrt)(\agent)=\agent$ for all $\agent\in\cala(\matchingd^{\strat,t-1},\economyrt)\cup\calb(\matchingd^{\strat,t-1},\economyrt)\setminus(\arrivalab_t^\strat\cup\arrivalbb_t^\strat)$. Similarly, a pure strategy profile \strat\ together with a public history $\publict=(\economyrt,\matchingduptot)$ determine a matching $\matching^\strat(\economyrt,\matchingduptot)\in\Matchingsfinal(\matchingd,\economyrt)$. An agent's payoff from strategy profile \strat\ at public history $\publict=(\economyrt, \matchingduptot)$ is determined by the payoff from the matching it induces. That is, letting $\agenta\in\cala(\matchingduptot,\economyrt)$
%\cup\bigcup_{\dateindex\geq t+1}\arrivala_\dateindex$
\begin{align*}
U(\agenta,\strat|\publict)=U(\agenta,\matchingd^{\strat}(\matchingduptot),\economyrt),
\end{align*}
and similarly for $\agentb\in\calb(\matchingduptot,\economyrt)$
\begin{definition}\label{definition:pairwise-spne}
A pure strategy profile $\strat=(\strat_\agent)_{\agent\in\arrivalauptoterminal\cup\arrivalbuptoterminal}$ is a pairwise SPNE of \gamestable\ if it is a SPNE of \gamestable\ and the following holds. There is no period $t\geq 1$, public history $\publict=(\economyrt,\matchingduptot)$, and pair $(\agenta,\agentb)$ such that:
\begin{enumerate}
    \item $\pair\in\cala(\matchingduptot,\economyrt)\times\calb(\matchingduptot,\economyrt)$,
    \item There exists $\stratb_\agenta,\stratb_\agentb$ such that $U(\agenta,(\strat_{-(a,b)},\stratb_\agenta,\stratb_\agentb)|\publict)>U(\agenta,\strat|\publict)$ and $\newline V(\agentb,(\strat_{-(a,b)},\stratb_\agenta,\stratb_\agentb)|\publict)>V(\agentb,\strat|\publict)$
\end{enumerate}
\end{definition}

\begin{lemma}\label{lemma:join}
In both games, it is without loss of generality to focus on strategy profiles where the agents participate in the mechanism whenever they are unmatched.
\end{lemma}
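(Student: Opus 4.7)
The plan is to exhibit a payoff-preserving transformation $\sigma \mapsto \tilde{\sigma}$ that replaces every non-participation action by participation with an \emph{empty} ROL (one that lists only the agent themselves as acceptable). The motivation is operational equivalence: because the spot mechanism $\spot_t$ returns a stable matching and because a stable matching cannot match an agent to a partner they declared unacceptable, an agent who submits the empty ROL is guaranteed to remain unmatched, and moreover does not alter the matching assigned to the other participants. Under this transformation every agent always participates, so if payoffs at every public history are preserved, then SPNE of $\gameda$ and pairwise SPNE of $\gamestable$ are in bijection between the two strategy classes, which is precisely the sense of ``without loss of generality'' needed.

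I would carry out the argument in three steps. First, given $\sigma = (\strat_\agent)_{\agent}$, define $\tilde{\sigma}$ by setting $\tilde{\strat}_{\agent,t}(\publict)$ to ``participate and submit the empty ROL'' at every public history $\publict$ at which $\strat_{\agent,t}(\publict)$ prescribes non-participation, and leaving every other action unchanged. Second, prove by induction on $t$ that $\sigma$ and $\tilde{\sigma}$ induce identical public histories: the inductive step shows that in period $t$ the set of participants under $\tilde{\sigma}$ is the set under $\sigma$ augmented with the ``switchers,'' all of whom report empty ROLs, and that $\spot_t$ therefore outputs a matching that coincides with the one it outputs under $\sigma$ on all non-switchers (while leaving switchers unmatched). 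Hence $\matchingd(\economyrt)^{\tilde{\sigma}} = \matchingd(\economyrt)^\sigma$, the period-$(t+1)$ public history is unchanged, and the induction proceeds. Third, since the terminal matchings induced by $\sigma$ and $\tilde{\sigma}$ coincide at every realization, payoffs at every history agree, so $\tilde{\sigma}$ satisfies the relevant equilibrium notion iff $\sigma$ does.

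The main obstacle is the inductive claim that $\spot_t$ effectively ignores empty-ROL participants. This rests on two observations about any stable matching mechanism: (i) a matching that is stable for the reported preferences must leave every empty-ROL agent unmatched, since otherwise that agent would be matched to an unacceptable partner, violating individual rationality with respect to the reported preferences; and (ii) appending or removing such unmatched agents preserves stability, so the set of stable matchings of the reported preferences, restricted to the non-empty-ROL participants, coincides with the set of stable matchings of the economy in which the empty-ROL agents are deleted. Any well-defined selection rule $\spot_t$ therefore returns the same matching on the non-switchers whether the switchers are present (with empty ROLs) or absent. This property holds uniformly for the mechanism in $\gameda$ (side-$\arrivala$ DA, which in particular matches an $\arrivala$-agent only to a $\arrivalb$-agent on their list) and for any stable spot mechanism in $\gamestable$, so the reduction applies to both games simultaneously.
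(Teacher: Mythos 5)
Your argument is correct and is essentially the paper's own (much terser) proof: the paper simply notes that (i) joining is unobservable and (ii) a participant can submit an empty ROL, which is exactly the participation-with-empty-ROL substitution you formalize and verify by induction on public histories. The one place you overreach slightly is in asserting that an arbitrary stable selection rule $\spot_t$ \emph{must} return the same matching on the non-switchers once empty-ROL agents are appended---the stable \emph{sets} coincide after deleting agents who are unmatched in every stable matching, but a selection from those sets need not a priori be invariant to the presence of such agents---though the paper's one-line proof implicitly relies on the same invariance, so this is a shared (and harmless, for DA in particular) simplification rather than a defect of your approach.
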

 This follows from noting that (i) joining is not observable, and (ii) agents \agent\ who participate can submit a ROL that only includes themselves, i.e., an empty ROL.
\begin{lemma}\label{lemma:no-reshuffling}
In both games, without loss of generality, agents either submit an empty ROL or a truncation of the ranking induced by their Bernoulli utility function.
\end{lemma}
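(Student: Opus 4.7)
The plan is to proceed by backward induction on $t$, adapting the classical truncation arguments of \cite{roth1991incentives} to the dynamic setting. Throughout, I fix the strategies of all other agents and show that any period-$t$ ROL submitted by agent $\agent$ that is neither empty nor a truncation of $\agent$'s true ranking can be replaced by one that is, without lowering $\agent$'s continuation payoff.

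For the base case, $t=\terminal$, the period-$\terminal$ subgame is static. In \gameda, side-$\arrivala$ agents are on the proposing side of DA and so truth-telling---and hence any truncation of the true list up to the realized match---is weakly dominant, while side-$\arrivalb$ agents are non-strategic by assumption. In \gamestable, the stability of $\spot_\terminal$ combined with the pairwise-SPNE refinement reduces the problem to the classical static one of \cite{roth1991incentives}: reordering the true ranking below the realized partner is irrelevant, while reordering above would either not help or would create a pairwise block ruled out by the refinement. Empty ROLs are needed only when $\agent$ strictly prefers to be unmatched over every partner they might otherwise be assigned.

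For the inductive step, suppose the statement holds from period $t+1$ onward. By the inductive hypothesis, continuation play involves only truncation-or-empty strategies, so for any agent $\agent$ unmatched at the end of period $t$ the continuation payoff $w_\agent(h^{t+1})$ is well-defined and depends on $h^{t+1}$ only through the set of still-unmatched agents. This reduces period $t$ to a static stable-matching problem in which the effective ``outside option'' for each agent is $w_\agent$, and the base-case argument applies: replacing the original ROL by the truncation of $\agent$'s true ranking at the realized period-$t$ partner $\agent^\star$ yields a match weakly preferred to $\agent^\star$, while replacing it by the empty ROL exactly replicates the outcome in the case $\agent^\star=\agent$.

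The main obstacle is verifying that the substitution does not perturb the period-$t$ matching of the \emph{other} agents, since that set is what determines the continuation public history and hence $w_\agent$. In \gameda, this follows from the standard invariance of proposing-side DA to truncations by a single proposer below their realized match. In \gamestable, one additionally invokes the pairwise-SPNE refinement to rule out joint deviations that exploit the order in which $\agent$ ranks partners below $\agent^\star$, via an argument analogous to Hwang's lemma as used in the proof of \autoref{prop:ds-sp}. Combining the base case, the inductive step, and these invariance claims yields the lemma.
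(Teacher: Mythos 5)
Your handling of the matched case is essentially the paper's (cite \cite{roth1991incentives} to replace a matched agent's ROL by a truncation), but there is a genuine gap in the unmatched case, which is where the weight of the lemma sits. To replace an unmatched agent \agent's ROL by the empty ROL ``without loss,'' it is not enough that \agent's continuation payoff is not lowered: you must show the substitution leaves the period-$t$ assignment of \emph{every other} agent unchanged, so that the continuation public history --- and hence the entire continuation path and the equilibrium property of the profile --- is literally identical (this is also why the paper's proof needs no backward induction; the argument is local to each public history). You correctly flag this as the main obstacle, but the tools you reach for do not deliver it in \gamestable. The spot mechanism there is an \emph{arbitrary} selection from the set of stable matchings for the reported preferences, so no argument about the mechanics of deferred acceptance applies; Hwang's lemma, as used in the proof of \autoref{prop:ds-sp}, produces a blocking pair when comparing two markets and says nothing about whether the set of matched agents is preserved; and the pairwise-SPNE refinement constrains equilibrium play, not the outcome of a counterfactual unilateral report.

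The paper closes this gap with the Decomposition Lemma (\citealp{roth1992two}) applied to the \emph{reported} preferences: if \agent\ is unmatched under both ROL profiles (which differ only in \agent's own list) while some other agent $\agenta$ is matched under one and single under the other, then the set $A^-$ of side-$A$ agents who strictly prefer the first outcome and the set $B^+$ of side-$B$ agents who strictly prefer the second would have to be matched to one another under \emph{both} stable matchings; since $\agenta\in A^-$ is single under the second matching, this is impossible. This Lone-Wolf-type invariance of the unmatched set is what licenses the empty-ROL substitution, and it is the step your proposal would need to supply for \gamestable. (Your \gameda\ argument for the unmatched case --- a rejected proposal in proposer-side DA never displaces anyone --- is serviceable, but only because that game fixes the mechanism to be DA.)
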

\begin{proof}
Fix a history $\publict=(\economyrt,\matchingduptot)$ and an agent $\agent\in\cala(\matchingduptot,\economyrt)\cup\calb(\matchingduptot,\economyrt)$. There are two cases to consider. First, assume that given the equilibrium strategy at \publict, \agent\ is matched at the end of period $t$. Then the result in \cite{roth1991incentives} implies that \agent\ can do weakly better by submitting a truncation. Second, suppose instead that given the equilibrium strategy at \publict, \agent\ remains unmatched at the end of period $t$.  Then it is a property of stable matching algorithms that the set of agents other than \agent\ who are unmatched is independent of the ROL submitted by \agent, as long as \agent\ is unmatched. 
%
%
% If at history $\publict=(\economyrt,\matchingduptot)$, an agent $\agent\in\cala(\matchingduptot,\economyrt)\cup\calb(\matchingduptot,\economyrt)$$  and submits a ROL is matched in period $t$, then the result in \cite{roth1991incentives} implies that \agent\ can do weakly better by submitting a truncation. Instead, if at history $h^t$, the agent is unmatched as a result of the ROL they submit, then it is a property of stable matching algorithms that the set of agents other than \agent\ who are unmatched is independent of the ROL submitted by \agent, as long as \agent\ is unmatched. 
To see this, let $\rol$ denote the submitted ROLs under \strat\ at \publict. Furthermore, let \rolb\ coincide with \rol\ for $\agentp\in\cala(\matchingduptot,\economyrt)\cup\calb(\matchingduptot,\economyrt)\setminus\{\agent\}$. Let $\matching(\economyrt)=\spot_t(\cala(\matchingduptot,\economyrt),\calb(\matchingduptot,\economyrt),\rol)$ and $\matchingb(\economyrt)=\spot_t(\cala(\matchingduptot,\economyrt),\calb(\matchingduptot,\economyrt),\rolb)$. 
%
%let $\matching(\economyrt)=\spot_t(\cala(\matchingduptot,\economyrt),\calb(\matchingduptot,\economyrt
%
%$\arrivalac_t\cup\arrivalbc_t$ denote the set of available agents at \publict\ and let $m_t=\spot_t(\arrivalac_t,\arrivalbc_t,\rol)$, $\matchingb_t=\spot_t(\arrivalac_t,\arrivalbc_t,\rol^\prime)$ denote the matchings output by $\spot_t$, where $\rol$ and $\rol^\prime$ coincide for all agents $\agent^\prime\in\arrivalac_t\cup\arrivalbc_t\setminus\{\agent\}$. Furthermore, 
Assume that $\matching(\economyrt)(\agent)=\matchingb(\economyrt)(\agent)=\agent$. Toward a contradiction, suppose there exists $\agenta\in\cala(\matchingduptot,\economyrt)$ such that $\matching(\economyrt)(\agenta)\neq\agenta=\matchingb(\economyrt)(\agenta)$. Let $A^-=\{\agentap\in\cala(\matchingduptot,\economyrt):u(\agentap,\matching(\economyrt)(\agentap))>u(\agentap,\matchingb(\economyrt)(\agentap))\}$ and $B^+=\{\agentb\in\calb(\matchingduptot,\economyrt):v(\matchingb(\economyrt)(\agentb),\agentb)>v(\matching(\economyrt)(\agentb),\agentb)\}$. The Decomposition Lemma (\citealp{roth1992two}) implies that $\matching(\economyrt)(\arrivala^-)=\matchingb(\economyrt)(\arrivala^-)=\arrivalb^+$, a contradiction. Thus, without loss of generality, \agent\ can submit an empty list in that case. Since the ROL that \agent\ submits at $h^t$ is unobserved, one can then change \agent's strategy at $h^t$ without affecting the equilibrium.
\end{proof}
\begin{lemma}
In both games, for any period $t$ and public history $\publict=(\economyrt,\matchingduptot)$, the continuation matching $(\matching^\strat(\publict)(\economyr^{t+\dateindex}))_{\dateindex=0}^{T-t}$ is pairwise stable and individually rational for $G_{\terminal-(t-1)}(\matchingduptot,\economyrt)$
\end{lemma}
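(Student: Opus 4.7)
The plan is to prove the lemma by backward induction on $t$ from $t = \terminal$ down to $t = 1$, applied simultaneously to both $\gameda$ and $\gamestable$. The inductive hypothesis asserts the claim for every history $h^{t'}$ with $t' > t$: the continuation matching from any such $h^{t'}$ is pairwise stable and individually rational. For any $\dateindex \geq 1$ and realization $\economyr^{t+\dateindex}$ following $\economyrt$, the period-$(t+\dateindex)$ stability and IR conditions of the continuation from $\publict$ coincide with the initial-period conditions of the subgame starting at the $h^{t+1}$ reachable from $\publict$ under $\strat$, and are therefore delivered by the inductive hypothesis. The work thus reduces to checking IR and pairwise stability of the period-$t$ matching $\matching^\strat(\publict)(\economyrt)$ itself. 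The base case $t = \terminal$ is the same inductive step with the hypothesis vacuous; continuation payoffs from deviations that leave an agent unmatched are then simply $0$, which is still enough for the arguments below.

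For individual rationality at period $t$, I would fix $\agenta \in \cala(\matchinguptot, \economyrt)$ with $\matching^\strat(\publict)(\economyrt)(\agenta) = \agentb \neq \agenta$ and suppose $u(\agenta, \agentb) < 0$. By \autoref{lemma:join} and \autoref{lemma:no-reshuffling}, $\agenta$ is free to deviate in period $t$ to submit an empty ROL. Under either spot mechanism this guarantees $\agenta$ is unmatched at the end of period $t$; the inductive hypothesis applied to the resulting history $\tilde h^{t+1}$ then delivers a continuation matching that is individually rational for $\agenta$, so $\agenta$'s total deviation payoff is at least $0 > u(\agenta, \agentb)$, contradicting SPNE. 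The argument for a $B$-side agent is symmetric.

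For pairwise stability at period $t$, suppose $(\agenta, \agentb) \in \cala(\matchinguptot, \economyrt) \times \calb(\matchinguptot, \economyrt)$ is a blocking pair of $\matching^\strat(\publict)$. In $\gamestable$, the pair jointly deviates by each submitting a ROL listing only the other as acceptable. Because $\spot_t$ outputs a stable matching with respect to reported preferences and $(\agenta, \agentb)$ have declared each other the unique acceptable partner, they must be matched together in the output: otherwise both would be unmatched, and the blocking condition together with the IR step (which guarantees $u(\agenta,\agentb)>0$ and $v(\agenta,\agentb)>0$) would make $(\agenta, \agentb)$ itself a block of the declared-preferences matching, contradicting stability of $\spot_t$. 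This joint deviation strictly improves both agents' payoffs, contradicting the pairwise SPNE refinement. In $\gameda$, where $B$ is truthful and $\spot_t$ is $A$-proposing DA, $\agenta$ must deviate unilaterally; I would have $\agenta$ submit the truncation of $\agenta$'s truthful ROL that includes $\agentb$ and all partners $\agenta$ strictly prefers to $\agentb$. Strategy-proofness of $A$-proposing DA for the proposing side, combined with an application of J.S. Hwang's blocking lemma in the style of the proof of \autoref{prop:ds-sp}, shows that this deviation yields $\agenta$ a period-$t$ partner at least as preferred as $\agentb$. Since $\agenta$ exits upon matching in period $t$, the deviation payoff is at least $u(\agenta, \agentb) > U(\agenta, \matching^\strat(\publict), \economyrt)$, contradicting SPNE.

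The main obstacle is the $\gameda$ step. A unilateral deviation by $\agenta$ propagates through DA's chain of rejections, changing other $A$-agents' period-$t$ matches and hence the $h^{t+1}$ subgame that $\strat_{-\agenta}$ plays; one cannot simply argue from the direct effect on $\agenta$'s partner. Hwang's lemma is exactly the combinatorial tool needed to pin down the deviator's attainable partner from the structural differences between the equilibrium matching and any stable matching with respect to the deviated reports, without having to simulate the cascade. This is precisely the role it played in the proof of \autoref{prop:ds-sp}, and the argument transfers because $B$'s truthfulness together with the truncation property of the other $A$-agents' equilibrium ROLs (\autoref{lemma:no-reshuffling}) guarantees that the deviated game is a standard $A$-optimal DA problem to which the lemma applies.
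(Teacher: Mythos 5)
Your treatment of individual rationality (deviate to an empty ROL) and of pairwise stability in \gamestable\ (a joint deviation in which the pair list only each other) coincides with the paper's argument. The gap is in the \gameda\ step. You have the blocking agent \agenta\ deviate to the truncation of their true ranking at \agentb\ and claim that strategy-proofness of side-$\arrivala$ DA plus Hwang's Blocking Lemma guarantee \agenta\ a period-$t$ partner at least as preferred as \agentb. Neither tool can deliver this. Strategy-proofness bounds a deviator's payoff from \emph{above} relative to truthful reporting, whereas you need a \emph{lower} bound on the payoff from one non-truthful report relative to another; and the Blocking Lemma compares the $\arrivala$-optimal stable matching of a \emph{fixed} profile with an alternative matching of that \emph{same} profile, while here the equilibrium matching and the deviation matching are stable with respect to \emph{different} reported profiles --- applying the lemma only produces a block of the equilibrium matching under the deviated reports, which is no contradiction.

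The claim itself is false. Take one period with $A_1=\{\agenta,a_0,a_1\}$, $B_1=\{\agentb,w,w_0\}$ and true preferences $\agenta:\agentb\succ w_0$; $a_0:\agentb\succ w$; $a_1:w\succ\agentb$; $\agentb:a_1\succ\agenta\succ a_0$; $w:a_0\succ a_1$; $w_0$ finds \agenta\ acceptable; all listed partners are acceptable. Let $a_0,a_1$ report truthfully and \agenta\ report the empty ROL: DA yields $\{a_0\_\agentb,\ a_1\_w\}$ with \agenta\ unmatched, and $(\agenta,\agentb)$ is a blocking pair in the sense of \ref{itm:ds1} since \agentb\ prefers \agenta\ to $a_0$. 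Your deviation has \agenta\ submit $(\agentb)$. Running DA, \agentb\ first holds \agenta\ and rejects $a_0$; the rejected $a_0$ displaces $a_1$ at $w$; $a_1$ then proposes to \agentb, who prefers $a_1$ to \agenta\ and rejects \agenta. The rejection chain set off by \agenta's own proposal delivers \agentb\ a partner she prefers to \agenta, and \agenta\ ends up unmatched with payoff $0$ --- no contradiction with SPNE is obtained. The paper's proof uses a different deviation: \agenta\ lists \emph{all} partners preferred to the equilibrium continuation payoff $U(\agenta,\matching^\strat(\publict),\economyrt)$, not merely those weakly above \agentb\ (here, the list $(\agentb,w_0)$, under which \agenta\ does obtain $w_0$ and the deviation is profitable), and then argues via the Decomposition Lemma step of \autoref{lemma:no-reshuffling} that \agenta\ cannot remain unmatched under that longer list. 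Repairing your argument requires both switching to that deviation and supplying the ``\agenta\ must be matched'' step; the truncation at \agentb\ cannot be made to work.
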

\begin{proof}
Individual rationality follows from the ability of the agents to remain single through period \terminal\ by always submitting empty ROLs. 

In \gamestable, pairwise stability follows from the possibility of joint deviations: otherwise, there would be a period $\dateindex\geq t$, a realization $\economyrs$ that follows \economyrt, and a pair $\pair\in\cala(\matching^{\strat,\dateindex-1},\economyr^{\dateindex})\cup\calb(\cdot)$ that can deviate by submitting lists that only list each other. Any stable mechanism matches \pair\ together. 

In \gameda, pairwise stability follows from the properties of deferred acceptance. Suppose there is a period $\dateindex\geq t$, a realization $\economyr^{\dateindex}$, and a pair $\pair\in\cala(\matching^{\strat,\dateindex-1},\economyr^{\dateindex})\cup\calb(\cdot)$ that prefer to match with each other over their outcome under $\matching^\strat(\publict)$. If $\matching^\strat(\publict)(\economyr^\dateindex)(\agenta)\neq\agenta$, then this contradicts either that \agenta\ submitted a truncation of their true ranking or the stability of the deferred acceptance algorithm with respect to the reported preferences. Suppose then that $\matching^\strat(\publict)(\economyr^\dateindex)(\agenta)=\agenta$. The stability of the deferred acceptance with respect to the reported preferences implies that \agenta\ did not include \agentb\ in their ROL. Consider the following strategy for \agenta: \agenta\ submits $\tilde{\rol}_\agenta$ listing all $\agentbp\in\sideb$ that are preferred to \agenta's outcome under $\matching^\strat(\publict)$. Under this ROL, which includes \agentb, it cannot be that \agenta\ is unmatched in period \dateindex. Otherwise, \agentb\ would still be matched to $\matching^\strat(\publict)(\economyr^\dateindex)(\agentb)$, a contradiction (recall the proof of \autoref{lemma:no-reshuffling}).  Then, $\matching^\strat(\publict)(\economyr^\dateindex)(\agenta)\neq\agenta$, so that \agenta\ has a profitable deviation, contradicting the definition of SPNE.
\end{proof}
\begin{cor}\label{cor:terminal-stability}
For any public history $h^{\terminal}=(\matchingd^{\terminal-1},\economyrterminal)$, the matching $\matching^\strat(h^{\terminal})(\economyrterminal)$ satisfies \autoref{definition:static-stability} for $\cala(h^{\terminal})\cup\calb(h^{\terminal})$.
\end{cor}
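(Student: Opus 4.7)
The plan is to derive the corollary as a direct specialization of the preceding lemma to the terminal period, together with the observation that at $t=\terminal$ the dynamic notion of pairwise stability coincides with the static one. Concretely, I would apply the preceding lemma with $t=\terminal$ and the given public history $h^\terminal=(\matchingd^{\terminal-1},\economyrterminal)$ to conclude that the induced continuation matching---which consists of the single period-$\terminal$ matching $\matching^\strat(h^\terminal)(\economyrterminal)$---is pairwise stable and individually rational for $G_1(\matchingd^{\terminal-1},\economyrterminal)$.

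From this, conditions \ref{itm:ss2} and \ref{itm:ss3} of \autoref{definition:static-stability} follow at once. Individual rationality at $\economyrterminal$ says that every $a\in\cala(h^\terminal)$ satisfies $u(a,\matching^\strat(h^\terminal)(\economyrterminal)(a))\ge 0$, and every $b\in\calb(h^\terminal)$ satisfies $v(\matching^\strat(h^\terminal)(\economyrterminal)(b),b)\ge 0$. Since each such agent is being matched (or left single) at time $\terminal$, their dynamic payoff equals $\delta_\agent^{0}$ times the instantaneous utility, so these are exactly $U_1(a,\matching^\strat(h^\terminal)(\economyrterminal))\ge 0$ and $V_1(b,\matching^\strat(h^\terminal)(\economyrterminal))\ge 0$ in the one-period economy $(\cala(h^\terminal),\calb(h^\terminal))$.

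For condition \ref{itm:ss1}, I would observe that every pair $\pair\in\cala(h^\terminal)\times\calb(h^\terminal)$ is, by construction, simultaneously available to match at period $\terminal$, so the pairwise-stability clause \ref{itm:ds1} quantifies over \emph{all} such pairs. Again using that payoffs at $\economyrterminal$ reduce to one-period utilities (the matching date is $\terminal$, so the discount exponent vanishes), the nonexistence of a dynamic block $\pair$ with $u(a,b)>U(a,\matching^\strat,\economyrterminal)$ and $v(a,b)>V(b,\matching^\strat,\economyrterminal)$ translates verbatim into the nonexistence of a static block with $u(a,b)>U_1(a,\matching^\strat(h^\terminal)(\economyrterminal))$ and $v(a,b)>V_1(b,\matching^\strat(h^\terminal)(\economyrterminal))$. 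Combining the three conditions delivers \autoref{definition:static-stability} for $\cala(h^\terminal)\cup\calb(h^\terminal)$. There is no real obstacle here: the only thing to verify is the routine translation between dynamic payoffs and one-period utilities at $t=\terminal$, which is immediate because $\delta_\agent^{t_\matching(\agent,\cdot)-\terminal}=1$ whenever $\agent$ either matches at $\terminal$ or remains single forever (in which case $u(a,a)=v(b,b)=0$).
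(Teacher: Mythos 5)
Your proof is correct and follows essentially the same route as the paper, which states the corollary without a separate proof precisely because it is the specialization of the preceding lemma to $t=\terminal$: in the terminal period the continuation matching is a single one-period matching, so pairwise stability plus individual rationality collapse to \autoref{definition:static-stability}. Your explicit verification that the discount exponents vanish at $t=\terminal$ is a harmless elaboration of what the paper leaves implicit.
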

%
%
%
%A corollary of this is that $(\matching^\strat(\publict)(\economyr^{t+\dateindex}))_{\dateindex=0}^{\terminal-t}$ satisfies \autoref{definition:static-stability} for $\cala(\matching)\cup\calb(\matchingd^{\terminal-1})$.
\begin{lemma}
In \gameda, for any period $t$ and public history $\publict=(\economyrt,\matchingduptot)$, the matching $\matching^\strat(\publict)$ cannot be improved upon by agents on side \arrivala\ waiting to be matched. Similarly, in \gamestable, for any period $t$ and public history $\publict=(\economyrt,\matchingduptot)$, the matching $(\matching^\strat(\publict)(\economyr^{t+\dateindex}))_{\dateindex=0}^{\terminal-t}\in\ds_{\terminal-(t-1)}(G_{\terminal-t}(\publict))$.
\end{lemma}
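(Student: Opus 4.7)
The proof is by backward induction on $t$. The base case $t=\terminal$ follows directly from Corollary~\ref{cor:terminal-stability} together with the observation (after \autoref{definition:ds}) that dynamic stability coincides with static stability in a one-period economy.

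For the inductive step, fix $t<\terminal$ and a public history $\publict$, and assume the claim at every public history of length $t+1$. The previous lemma delivers pairwise stability and individual rationality of the continuation matching $\matching^\strat(\publict)$, so condition \ref{itm:ds1} of \autoref{definition:ds} holds. By \autoref{rem:consistency}, conditions \ref{itm:ds2} and \ref{itm:ds3} need only be checked for agents $\agent$ who are matched at \economyrt\ under $\matching^\strat(\publict)$; in \gameda\ one further restricts attention to $\agent$ on side $\arrivala$. Given such an $\agent$, consider the deviation $\stratb_\agent$ that submits an empty ROL at $\publict$ and follows $\strat_\agent$ everywhere else, and let $\matchingb=\matching^{(\strat_{-\agent},\stratb_\agent)}(\publict)$. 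Because $\spot_t$ is individually rational, $\matchingb(\economyrt)(\agent)=\agent$; and because $\strat$ is a (pairwise) SPNE, $\agent$ does not profit from this deviation. It thus suffices to show $\matchingb\in M_{\ds}(\agent,\matching^\strat(\publict),\economyrt)$.

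Condition (i) in the definition of $M_{\ds}$ is immediate, and condition (ii) follows from the inductive hypothesis applied to the public history of length $t+1$ induced by the deviation, for each realization $\economyr^{t+1}$ that follows $\economyrt$. The main obstacle is condition (iii): $\spot_t$ guarantees that the period-$t$ matching is stable only with respect to the \emph{reported} ROLs, whereas \autoref{definition:ds-static-stability} demands stability under the \emph{true} preferences. Suppose, toward a contradiction, that agents $\agenta',\agentb'$ both match at period $t$ under $\matchingb$ and form a block under true preferences. By \autoref{lemma:no-reshuffling} each submitted a truncation of their true ranking. Since $\agenta'$ is matched, the reported list of $\agenta'$ contains $\matchingb(\economyrt)(\agenta')$; as $\agentb'$ is true-preferred to this partner, any truncation listing the partner must also list $\agentb'$, and symmetrically for $\agentb'$. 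Thus $(\agenta',\agentb')$ is also a block under the reported preferences, contradicting the stability of $\spot_t$.

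Applying this construction to every matched $\agent$ establishes condition \ref{itm:ds2} and, in \gamestable, also \ref{itm:ds3}. In \gamestable\ this yields full dynamic stability of $\matching^\strat(\publict)$ starting from period $t$; in \gameda\ it yields the side-$\arrivala$ statement of the lemma. This completes the induction.
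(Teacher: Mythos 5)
Your proposal is correct and follows essentially the same route as the paper: the key deviation is the empty ROL, the induced matching is shown to be a valid conjecture, and condition (iii) is handled via the truncation property of \autoref{lemma:no-reshuffling}. The only difference is organizational—you run a clean backward induction where the paper picks the largest period at which conditions \ref{itm:ds2}--\ref{itm:ds3} fail and derives a contradiction with SPNE—but these are logically equivalent ways of ensuring that continuation play from $t+1$ onward, including after the deviation, is already known to be dynamically stable.
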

\begin{proof}
The proof is similar for both games so I focus on \gamestable. Let $t\leq\terminal-1$ denote the largest $\dateindex\leq\terminal-1$ such that there exists $h^\dateindex$ such that $\matching^\strat(h^\dateindex)$ fails conditions \ref{itm:ds2}-\ref{itm:ds3} in \autoref{definition:ds}.\footnote{\autoref{cor:terminal-stability} implies that $t\leq\terminal-1$ is well-defined.} Let $\publict=(\economyrt,\matchingduptot)$ denote one such history and let  $\agent\in\cala(\publict)\cup\calb(\publict)$ be such that \agent\ prefers all matchings in $M_\ds(\agent,\matching^\strat(\publict),\economyrt)$ over $\matching^\strat(\publict)$. 

Consider the matching $\matchingb(\publict)\in\Matchingsfinal(\matching^\strat,\economyrt)$ that arises when \agent\ deviates and submits an empty ROL and then continues to play the equilibrium strategy. 

By definition of $t,\publict$, $(\matchingb(\economyr^{t+\dateindex}))_{\dateindex=1}^{\terminal-t}\in\ds_{\terminal-t}(G_{\terminal-t}(\matchingbt,\economyrt))$. It remains to show that $\matchingb(\economyrt)$ satisfies \autoref{definition:ds-static-stability}. Toward a contradiction, suppose there exists a pair \pair\ such that $\matchingb(\economyrt)(\agenta)\neq\agenta$ and $\matchingb(\economyrt)(\agentb)\neq\agentb$, but $\genericu>u(\agenta,\matchingb(\economyrt)(\agenta))$ and $\genericv>v(\matchingb(\economyrt)(\agentb),\agentb)$. Then, it must be that the ROLs $(\succ_\agenta,\succ_\agentb)$ satisfy that either $\matchingb(\economyrt)(\agenta)\succ_\agenta\agentb$ or $\matchingb(\economyrt)(\agentb)\succ_b\agenta$. This contradicts \autoref{lemma:no-reshuffling}: under truncations, the agents do not switch the order of agents on the other side relative to their true preferences. It follows that $\matchingb\in M_\ds(\agent,\publict)$. Thus, \agent\ has a deviation at public history \publict, contradicting the definition of SPNE. 
\end{proof}
\section{General model}\label{sec:general}
I now introduce notation that allows me to keep track of the date each agent is born so that the possibility of there being individuals who share the same characteristics within a period or across periods does not create confusion.

A realization of length 1 is a tuple $E=\economylmps$, where $I,J\subseteq\mathbb{N}$ are (possibly empty) finite sets of indices and $\typea:I\mapsto\sidea$ and $\typeb:J\mapsto\sideb$ denote the characteristics of agents in $I$ and $J$, respectively. Let $\economy$ denote the set of all such tuples. Formally, 
\begin{align*}
\economy=\{\economylmps:I,J\subseteq\mathbb{N},|I|,|J|<K,\typea\in\sidea^I,\typeb\in\sideb^J\}
\end{align*} 
The bound $K\in\mathbb{N}$ in the definition of \economy\ ensures that all sets of matchings that I refer to in what follows are finite.

For any $t\geq 2$, let $\overline{\economy}^t$ denote the subset of $\economyt$ that satisfies the following property. A tuple $(\economyslmps)_{\dateindex=1}^t\in\overline{\economy}^t$ only if $I_\dateindex\cap I_{\dateindex^\prime}=\emptyset$ and $J_\dateindex\cap J_{\dateindex^\prime}=\emptyset$ whenever $\dateindex\neq\dateindex^\prime$.\footnote{Note that we can always take the indices to be disjoint without loss of generality by consecutively enumerating the agents that arrive on each side.} An economy of length $\terminal$ is a distribution $\Gterminal$ on $\supportterminal$. In what follows, if $\dateindex\leq t$, and $\economyrt=(\economyrs,\economyr^{t-s})\in\supportt$, I say that $\economyrs$ is a truncation of \economyrt,\ and that \economyrt\ follows \economyrs.\

Fix a tuple $\economyrt\in\supportt$. For $\dateindex\leq t$, let $(I_s(\economyrt),\typeas(\economyrt),J_s(\economyrt),\typebs(\economyrt))$ denote the $s^{th}$ element of \economyrt.\  Let $\arrivaliuptos(\economyrt)$ denote the implied arrivals on side $A$ through period $\dateindex\leq t$; similarly, let $\arrivaljuptos(\economyrt)$ denote the implied arrivals on side $B$ through period $\dateindex\leq t$. Let $\typeauptot(\economyrt)$ be the mapping from $\arrivaliuptot(\economyrt)$ to $\sidea$ that coincides with $\typea_{\dateindex}(\economyrt)$ on $I_\dateindex(\economyrt)$ for all $1\leq\dateindex\leq t$. Define $\typebuptot(\economyrt)$ in a similar way. To economize on notation, I do not make the dependence on \economyrt\ explicit if it is unlikely to create confusion.
% With this convention in mind, let $\arrivaliuptot=\cup_{\dateindex=1}^tI_\dateindex,\arrivaljuptot=\cup_{\dateindex=1}^tJ_\dateindex$ denote the set of agents who have arrived through period $t$ on sides $A$ and $B$, respectively. Similarly, let $\typeauptot$ be the mapping from $\arrivaliuptot$ to $\sidea$ such that for all $1\leq\dateindex\leq t$, $\typeauptot$ coincides with $\typea_s$ on $I_s$. Define $\typebuptot$ in a similar way.

A matching $\matching$ associates to each element $\economyrt\in\supportt$ a period $t$ matching
%\footnote{To be sure, using the new notation, a period $t$ matching for realization \economyrt,\ is a mapping: 
%\begin{align*}
%\matchingt:\arrivaliuptot(\economyrt)\cup\arrivaljuptot(\economyrt)\mapsto\arrivaliuptot(\economyrt)\cup\arrivaljuptot(\economyrt),
%\end{align*}
%such that
%\begin{enumerate}
%\item For all $i\in\arrivaliuptot(\economyrt)$, $\matchingt(i)\in\{i\}\cup\arrivaljuptot(\economyrt)$,
%\item For all $j\in\arrivaljuptot(\economyrt)$, $\matchingt(j)\in\arrivaliuptot(\economyrt)\cup\{j\}$,
%\item For all $k\in\arrivaliuptot(\economyrt)\cup\arrivaljuptot(\economyrt)$, $\matchingt(\matchingt(k))=k$.
%\end{enumerate}
%} 
and satisfies the following property. For any $t\in\{1,\dots,\terminal\}$ and any $\economyrt\in\supportt$, for any $i\in\arrivaliuptot(\economyrt)$, if $\matching(\economyrt)(i)\neq i$, then for any $t\leq\dateindex$ and $\economyrs$ that follows \economyrt,\ $\matching(\economyrs)(i)=\matching(\economyrt)(i)$. 
%A tuple $\matching=(\matchingone,\dots,\matchingterminal)$ is a matching if
%\begin{enumerate}
%\item For all $t\in\{1,\dots,\terminal\}$, $\matchingt$ is a period $t$ matching,
%\item For all $i\in\arrivaliuptoterminal$, if $\matchingt(i)\neq i$ for some $t\in\{1,\dots,T\}$, then $\matching_s(i)=\matchingt(i)$ for all $s\geq t$.
%\end{enumerate}
Denote by $\Matchings_\terminal$ the set of all matchings for an economy of length \terminal.\ 

Fix a matching \matching\ and a realization of the arrivals through period $t$, $\economyrt=(\economyr^{t-1},\economytlmps)$. This determines the agents who are available to match at $(t,\economyrt)$
\begin{align*}
\cali(\matchinguptot,\economyrt)&=\{i\in\arrivaliupto_{t-1}(\economyrt):\matching_{t-1}(\economyr^{t-1})(i)=i\}\cup I_t\\
\calj(\matchinguptot,\economyrt)&=\{j\in\arrivaljupto_{t-1}(\economyrt):\matching_{t-1}(\economyr^{t-1})(j)=j\}\cup J_t.
\end{align*}
A matching \matching\ and a realization $\economyr^{t-1}$ also define a continuation economy of length $\terminal-t+1$, where the distribution of arrivals $G_{\terminal-t+1}(\matchinguptot,\economyr^{t-1})$ assigns probability $\Gterminal(\economyrt,(\economyslmps)_{\dateindex=t+1}^\terminal|\economyr^{t-1})$ to the length $T-t+1$ economy
\[((\cali(\matchinguptot,\economyrt),\typeauptot(\economyrt),\calj(\matchinguptot,\economyrt),\typebuptot(\economyrt)),(\economyslmps)_{\dateindex=t+1}^\terminal),\]
and $0$ otherwise.\footnote{Technically, $\typeauptot$ maps \arrivaliuptot\ to \sidea.\ Thus, one should interpret \typeauptot\ in the definition of $\economyr^{\geq t+1}$ as the restriction of \typeauptot\ to $\cali(\matchinguptot)$.}

%Fix an economy $\economyrterminal$ and a matching, \matching.\ This endogenously defines the set of agents available to match in period $t$:
%\begin{align*}
%\cali(\matchinguptot)&=\{i\in\arrivaliupto_{t-1}:\matching_{t-1}(i)=i\}\cup I_t\\
%\calj(\matchinguptot)&=\{j\in\arrivaljupto_{t-1}:\matching_{t-1}(j)=j\}\cup J_t.
%\end{align*}
%In turn, this defines a continuation economy $\economyr^{\geq t+1}(\matching)=(\cali(\matchinguptot),\typeauptot,\calj(\matchinguptot),\typebuptot)\cup(E_\tau)_{\tau=t+1}^\terminal$
To close the model, I define agents' preferences over matchings. Fix a matching, $\matching$, and a realization \economyrt.\ Let $i\in\cali(\matchinguptot,\economyrt)$. For each $\economyrterminal=(\economyrt,\cdot)$, let $t_\matching(i,\economyrterminal)$ denote the smallest $t\leq\dateindex$ such that $\matching(\economyrs)(i)\neq a$ and $\economyrterminal$ follows $\economyrs$; otherwise, let $t_\matching(i,\economyrterminal)=\terminal$. Then, let
\[U(i,\matching,\economyrt)=\mathbb{E}_{G(\cdot|\economyrt)}[\delta_{\typeauptot(i)}^{t_\matching(i,\economyrt,\cdot)-t}u(\typeauptot(i),\typebuptoterminal(\matching(\economyrt,\cdot)(i)))],\]
denote $i$'s payoff from matching \matching\ at date $t$ when arrivals are \economyrt.\ In the above expression, to economize on notation (i) I omit the dependence of both \typeauptot\ and \typebuptoterminal\ on $\economyrt$ and \economyrterminal,\ respectively, and (ii) I abuse notation so that if $\matching(\economyrterminal)(i)=i$, then $\typebuptoterminal(i)=\typeauptot(i)$. Similarly, for $j\in\calj(\matchinguptot,\economyrt)$, 
let
\[V(j,\matching,\economyrt)=\mathbb{E}_{G(\cdot|\economyrt)}[\delta_{\typebuptot(j)}^{t_\matching(j,\economyrt,\cdot)-t}v(\typeauptoterminal(\matching(\economyrt,\cdot)(j)),\typebuptot(j))].\]
\subsection{Existence}
For completeness, I replicate in this section the proof of Theorem 1 in the main text using the notation in the previous section. 

The proof proceeds by induction on $T\geq 2$. As argued in the main text $\ds_1$ coincides with the set of pairwise stable matchings for the static economy. Let $\mathbf{P(\terminal)}$ denote the following inductive statement:

$\mathbf{P(T):}$ The correspondence $\ds_\terminal$ is non-empty valued. 

I first prove $\mathbf{P(2)}=1$. Let $\Gtwo$ denote an economy of length $\terminal=2$. For each $\economyr^1=(I_1,\typea_1,J_1,\typeb_1)$ on the support of $\Gtwo$, I construct a matching $\matchingstar(\economyr^1)$ as follows.
For each $k\in I_1\cup J_1$, let $\matchingb_{\economyr^1}^k$ denote the payoff minimizing matching \matching\ such that
\begin{enumerate}
\item $\matching(\economyr^1)(k)=k$,
\item\label{itm:p22} For each $\economyr^2=(\economyr^1,I_2,\typea_2,J_2,\typeb_2)$ in the support of $\Gtwo$, $\matching(\economyr^2)$ is stable for $\langle\cali(\matching^1,\economyr^2),\calj(\matching^1,\economyr^2)\rangle$,
\item\label{itm:p23} $\matching(\economyr^1)$ is stable amongst those who match at $\economyr^1$.
\end{enumerate}
Note that the set of matchings \matching\ that satisfy these three requirements is non-empty since the set of stable matchings in period 2 is non-empty and a matching that leaves everyone unmatched in period 1 is stable amongst those who match $\economyr^1$. Also note that \autoref{itm:p22} implies that any such matching \matching\ satisfies that $\matching(\economyr^1,\cdot)\in\ds_1(\Gtwo(\matching^1,\economyr^1))$.

For each $\agenti\in\arrivali_1$, consider the preference list on $\arrivalj_1$ defined as follows: $\agentj\in \arrivalj_1$ is acceptable to $\agenti$ only if $\genericuione\geq U(\agenti,\matchingb_{\economyr^1}^\agenti,\economyr^1)$ and is not acceptable to $\agenti$ otherwise. Moreover, if $\agentj,\agentjp\in\arrivalj_1$ are acceptable to $\agenti$, then order them according to $u(\typea_1(i),\cdot)$ in $\agenti$'s preference list.

Similarly, for $\agentj\in\arrivalj_1$, consider the preference list on $\arrivali_1$ defined as follows: $\agenti\in\arrivali_1$ is acceptable to $\agentj$ only if $\genericvjone\geq V(\agentj,\matchingb_{\economyr^1}^\agentj,\economyr^1)$ and is not acceptable to $\agentj$ otherwise. Moreover, if $\agenti,\agentip\in\arrivali_1$ are acceptable to $\agentj$, then order them according to $v(\cdot,\typeb_1(\agentj))$ in $\agentj$'s preference list.

Define $\matchingstar(\economyr^1)$ so that it coincides with the outcome of running the deferred acceptance algorithm with agents in $\arrivali_1$ proposing to agents in $\arrivalj_1$ using the \emph{truncated} preference lists.(If there are ties, fix a tie-breaking procedure and run deferred acceptance.) 

For each period-2 economy $\economyr^2=(\economyr^1,A_2,B_2)$, let $\matchingstar(\economyr^2)$ coincide with $\matchingstar(\economyr^1)$ for those agents who matched at $\economyr^1$ and have it coincide with the outcome of running deferred acceptance on $\cali(\matching^{\star^1},\economyr^2),\calj(\matching^{\star^1},\economyr^2)$, elsewhere.\footnote{Again, if there are indifferences, run deferred acceptance after having broken ties.}

Note that \matchingstar\ defined in this way is an element of $\Matchings_2$.

I note the following properties of $\matchingstar$. First, for all $\economyr^2$, $\matchingstar(\economyr^2)$ is stable for $\cali(\matchingstar_1,\economyr^1),\calj(\matchingstar_1,\economyr^2)$. Second, for each $\economyr^1$, \matchingstar\ satisfies the following: (i) there is no pair $(\agenti,\agentj)$ such that $\matchingstar(\economyr^1)(\agenti)\neq \agenti,\matchingstar(\economyr^1)(\agentj)\neq\agentj$ that prefer matching with each other than to match according to $\matchingstar(\economyr^1)$, i.e., $\matchingstar(\economyr^1)$ is stable amongst those who match in period $1$, and (ii) there is no $k\in\arrivali_1\cup\arrivalj_1$ such that $\matchingstar(\economyr^1)(k)\neq k$ and for all $\matchingb\in M_{\ds}(k,\matchingstar,\economyr^1)$, $k$ prefers $\matchingb$ to $\matchingstar$. To see that (ii) holds note the following. By construction $\matchingb^k$ can always be taken to be an element of $M_{\ds}(k,\matchingstar,\economyr^1)$ by having it coincide with $\matchingstar$ at all realizations $\economyrt$, $t\in\{1,2\}$ that do not (weakly) follow $\economyr^1$. Moreover, if $k$ is matched at $\economyr^1$, then it must be that $k$'s matching partner is (weakly) preferred to $\matchingb^k$.

Thus, to show that $\matchingstar\in\ds_2(\Gtwo)$, it remains to argue that it is pairwise stable. To do so, it only remains to check that there is no realization $\economyr^1=(\arrivali_1,\typea_1,\arrivalj_1,\typeb_1)$ and pair $(\agenti,\agentj)\in\arrivali_1\times\arrivalj_1$ who prefer to match together at $\economyr^1$ rather than match according to $\matchingstar$. Toward a contradiction, suppose there is such a pair. Since $\matchingstar$ is stable amongst those who match in period $1$, it cannot be the case that both \agenti\ and \agentj\ are matched at $\economyr^1$. Then, either \agenti\ or \agentj\ (or both) declared the other unacceptable at $\economyr^1$ and is unmatched at $\economyr^1$. Without loss of generality, assume that it is \agentj\, then
\begin{align}\label{eq:aunacceptabletob}
\genericvjone<V(\agentj,\matching_{\economyr^1}^b,\economyr^1).
\end{align}
On the other hand, it must be that $\genericvjone>V(\agentj,\matchingstar,\economyr^1)$, since (\agenti,\ \agentj)\ form a pairwise block of $\matchingstar$. It then follows that
\begin{align}\label{eq:wronglowerbound}
V(\agentj,\matchingstar,\economyr^1)<V(\agentj,\matchingb_{\economyr^1}^\agentj,\economyr^1).
\end{align}
Note, however, that \matchingstar\ satisfies the following: (i) $\matchingstar(\economyr^1)(\agentj)=\agentj$, (ii) $\matchingstar(\economyr^1,\cdot)\in\ds_1(\Gtwo(\matchingstar_1,\economyr^1))$, and (iii) $\matchingstar(\economyr^1)$ is stable amongst those who match in period $1$. Then, $\matchingstar\in M_{\ds}(\agentj,\matchingstar,\economyr^1)$. This, together with \autoref{eq:wronglowerbound} contradicts the definition of $\matchingb_{\economyr^1}^\agentj$. By assuming instead that $\agenti$ is unmatched, a similar contradiction follows. I conclude that \matchingstar\ is pairwise stable.

This concludes the proof that $\mathbf{P(2)}=1$.

I now prove the inductive step. Assume now that $\mathbf{P(\terminal^\prime)}=1$ for all $\terminal^\prime<\terminal$. I show that $\mathbf{P(\terminal)}=1$. 

Given a realization $\economyrterminal=(\arrivali_1,\typea_1,\arrivalj_1,\typeb_1,\dots,\arrivali_\terminal,\typea_\terminal,\arrivalj_\terminal,\typeb_\terminal)$ in the support of $\Gterminal$, consider the following procedure to construct $\matchingstar(\economyrt)$ for each truncation $\economyrt$ of $\economyrterminal$. For each $1\leq t\leq\terminal-1$, having defined $(\matchingstar(\economyr^1),\dots,\matchingstar(\economyr^{t-1}))$, let $\langle\cali(\matchingstaruptot,\economyrt),\calj(\matchingstaruptot,\economyrt)\rangle$ denote the agents available to match in period $t$ (if \periodone\, then $\cali(\emptyset,\economyr^1)=A_1,\calj(\emptyset,\economyr^1)=B_1$.) For each $k\in\cali(\matchingstaruptot,\economyrt)\cup\calj(\matchingstaruptot,\economyrt)$, let $\matchingb_{\economyrt}^k$ denote the matching $\matching$ that coincides with $\matchingstar$ on $(\economyr^1,\dots,\economyr^{t-1})$ and minimizes $k$'s payoff at $\economyrt$ amongst all the matchings that also satisfy\footnote{That is, if $k=\agenti$, I am minimizing $U(\agenti,\cdot, \economyrt)$, while if $k=\agentj$, I am minimizing $V(\agentj,\cdot,\economyrt)$.} 
\begin{enumerate}
\item $\matching(\economyrt)(k)=k$
\item $(\matching(\economyr^{t+\dateindex}))_{s\in\{1,\dots,\terminal-t\}}\in\ds_{T-t}(G_{\terminal-t}(\matching^t,\economyrt))$.
\item $\matching(\economyrt)$ is stable amongst those who match in period $t$.
\end{enumerate}
By the inductive hypothesis, $\ds_\dateindex$ is non-empty valued for $\dateindex<\terminal$. Since $t\geq 1$, then $\ds_{\terminal-t}(\cdot)$ is non-empty. Thus, for all $k\in\cali(\matchingstaruptot,\economyrt)\cup\calj(\matchingstaruptot,\economyrt)$, $\matchingb_{\economyrt}^k$ is well-defined since the above set of matchings is non-empty.

For each $\agenti\in\cali(\matchingstaruptot,\economyrt)$, consider the preference list on $\calj(\matchingstaruptot,\economyrt)$ defined as follows: $\agentj\in \calj(\matchingstaruptot,\economyrt)$ is acceptable to $\agenti$ only if $\genericuit\geq U(\agenti,\matchingb_{\economyrt}^\agenti,\economyrt)$ and is not acceptable to $\agenti$ otherwise. Moreover, if $\agentj,\agentjp\in\calj(\matchingstaruptot,\economyrt)$ are acceptable to $\agenti$, then order them according to $u(\typeauptot(i),\cdot)$ in $\agenti$'s preference list.

Similarly, for $\agentj\in\calj(\matchingstaruptot,\economyrt)$, consider the preference list on $\cali(\matchingstaruptot,\economyrt)$ defined as follows: $\agenti\in\cali(\matchingstaruptot,\economyrt)$ is acceptable to $\agentj$ only if $\genericvjt\geq V(\agentj,\matchingb_{\economyrt}^\agentj,\economyrt)$ and is not acceptable to $\agentj$ otherwise. Moreover, if $\agenti,\agentip\in\cali(\matchingstaruptot,\economyrt)$ are acceptable to $\agentj$, then order them according to $v(\cdot,\typebuptot(\agentj))$ in $\agentj$'s preference list.

Define $\matchingstar(\economyrt)$ to be the outcome of deferred acceptance on $\langle\cali(\matchingstaruptot,\economyrt),\calj(\matchingstaruptot,\economyrt)\rangle$ using the above preference lists.

For $t=\terminal$, let $\matchingstar(\economyrterminal)$ coincide with $\matchingstar(\economyr^{\terminal-1})$ for all those who have matched at $\economyr^{\terminal-1}$ and 
with the outcome of running deferred acceptance with agents on $\cali(\matchingstaruptoterminal,\economyrterminal)$ proposing to agents on $\calj(\matchingstaruptoterminal,\economyrterminal)$, elsewhere.

Note that this procedure on all $\economyrt$ and all $1\leq t\leq\terminal$ implies that \matchingstar\ is an element of $\Matchings_\terminal$. I now make some observations about $\matchingstar$.

First, for all periods $t\in\{1,\dots,\terminal\}$, and all realizations \economyrt,\ $\matchingstar(\economyrt)$ is stable amongst those who match at \economyrt.\ Second, for all periods $t\in\{1,\dots,\terminal-1\}$, and all realizations \economyrt,\ there is no $k\in\cali(\matchingstaruptot,\economyrt)\cup\calj(\matchingstaruptot,\economyrt)$ such that $\matchingstar(\economyrt)(k)\neq k$ and for all $\matchingb\in M_{\ds}(k,\matchingstar,\economyrt)$, $k$ prefers $\matchingb$ to $\matchingstar$. To see that this holds note the following. By construction $\matchingb_{\economyrt}^k$ can always be taken to be an element of $M_{\ds}(k,\matchingstar,\economyrt)$ by having it coincide with $\matchingstar$ at all realizations $\economyr^\dateindex$ that do not follow \economyrt.\ Moreover, if $k$ is matched at $\economyrt$, it must be that their matching partner is (weakly) preferred to $\matchingb_{\economyrt}^k$.

Fix $t=\terminal$ and note that for all $\economyrterminal$, $\matchingstar(\economyrterminal)$ is pairwise stable and individually rational. Let $t\leq\terminal-1$ and let $\economyrt$ denote a realization. Suppose that one has shown that $(\matchingstar(\economyr^{t+\dateindex}))_{\dateindex\in\{1,\dots,\terminal-t\}}$ is pairwise stable starting from period $t+1$ onward. Note that, in particular, this implies that $(\matchingstar(\economyr^{t+\dateindex}))_{\dateindex\in\{1,\dots,\terminal-t\}}\in\ds_{T-t}(G_{\terminal-t}(\matching^{\star^t},\economyrt))$. I now show that there are no pairwise blocks involving agents that have arrived by \economyrt.\ Toward a contradiction, suppose there exists a pair $(\agenti,\agentj)\in\cali(\matchingstaruptot,\economyrt)\times\calj(\matchingstaruptot,\economyrt)$ such that
\begin{align*}
\genericuit>U(\agenti,\matchingstar,\economyrt), \genericvjt>V(\agentj,\matchingstar,\economyrt).
\end{align*}
Since $\matchingstar(\economyrt)$ is stable amongst those who match in period $t$, then it must be that either $\agenti$ or $\agentj$ are unmatched at $\matchingstar(\economyrt)$. Without loss of generality supposed that it is \agentj.\ Thus, \agentj\ had \agenti\ as unacceptable in the truncated preference lists  so that:
\begin{align}
\genericvjt<V(\agentj,\matchingb_{\economyrt}^\agentj,\economyrt).
\end{align}
Putting the above inequalities together, it follows that
\begin{align}\label{eq:wronglowerboundt}
V(\agentj,\matchingstar,\economyrt)<V(\agentj,\matchingb_{\economyrt}^\agentj,\economyrt).
\end{align}
Note, however, that $\matchingstar\in M_{\ds}(\agentj,\matchingstar,\economyrt)$.\footnote{This, of course, uses that we have shown that $\matchingstar(\economyrt,\cdot)$ is pairwise stable and  has no blocks by waiting.} This, together with \autoref{eq:wronglowerboundt}, contradicts the definition of $\matchingb_{\economyrt}^\agentj$. Thus, $\matchingstar$ is pairwise stable from period $t$ onward. Proceeding this way, one concludes that $\matchingstar$ is pairwise stable. 
This proves $\mathbf{P(\terminal)}=1$.
\end{document}